\def\mathunderline#1#2{\color{#1}\underline{{\color{black}#2}}\color{black}}
\newenvironment{subproof}[1][\proofname]{%
  \begin{proof}[#1]%
}{%
  \end{proof}%
}
\newcommand\myeq{\mathrel{\overset{\makebox[0pt]{\mbox{\normalfont\tiny\sffamily def}}}{=}}}
\newtheorem{theorem}{Theorem}[section]
\newtheorem{lemma}{Lemma}
\newcommand*{\addFileDependency}[1]{
  \typeout{(#1)}
  \@addtofilelist{#1}
  \IfFileExists{#1}{}{\typeout{No file #1.}}
}
\begin{document}

\preprint{APS/123-QED}

\title{Polynomially efficient quantum enabled variational Monte Carlo for training neural-network quantum states for physico-chemical applications}

\author{Manas Sajjan$^\dagger$}
\affiliation{Department of Chemistry, Purdue University, West Lafayette, IN 47907}

\affiliation{Department of Electrical and Computer Engineering, North Carolina State University, Raleigh, NC 27606}

\author{Vinit Singh$^\dagger$}
\affiliation{Department of Chemistry, Purdue University, West Lafayette, IN 47907}

\affiliation{Department of Electrical and Computer Engineering, North Carolina State University, Raleigh, NC 27606}

\author{Sabre Kais}
\email{skais@ncsu.edu}

\affiliation{Department of Electrical and Computer Engineering, North Carolina State University, Raleigh, NC 27606}

\begin{abstract}

{\color{black} With several diverse architectures and astounding expressibility, neural-network quantum states (NQS) have emerged as a compelling alternative to traditional variational ansätze for simulating physical systems across diverse applications. These models, especially energy-based frameworks like Hopfield networks and Restricted Boltzmann Machines, draw from principles of statistical physics to map accessible quantum states onto an energy landscape, effectively serving as associative memory descriptors. In this work, we demonstrate that such energy-based models can be trained efficiently using Monte Carlo techniques enhanced by quantum devices. Our proposed algorithm scales linearly with circuit width and circuit-depth, constant in measurement count, devoid of any mid-circuit measurements, and even polynomial in storage, ensuring optimal computational efficiency. Our method also holistically works for both phase and amplitude fields of quantum states, which enhances the scope of the trial space drastically compared to previously designed techniques. Unlike classical approaches, sampling via quantum devices significantly reduces the convergence time of the underlying Markov Chain and produces more faithful sample estimates, highlighting the advantage of quantum-assisted training. We showcase the applicability of this method by accurately learning the ground states of both local spin models in quantum magnetism and non-local electronic structure Hamiltonians, even in distorted molecular geometries wherein strong multi-reference correlations dominate. Benchmarking these results against traditional methods reveals a high degree of agreement in all cases, underscoring the robustness of our approach. This work accentuates the acute potential for symbiotic collaborations between powerful machine learning protocols and near-term quantum devices to tackle diverse challenges in quantum state learning, making them highly relevant for applications in theoretical chemistry and condensed matter physics.}
\end{abstract}

\maketitle

\def\thefootnote{$\dagger$}\footnotetext{\text{These authors contributed equally to this work}}\def\thefootnote{\arabic{footnote}}


\maketitle

\section{Introduction}

{\color{black}Efficiently simulating stationary quantum states of matter has been one of the principal tasks of computational physics and chemistry. Beyond understanding the energy eigenspace, such states also form a convenient basis for treating any subsequent dynamical evolution. The primary challenge associated with the task is the well-known curse of dimensionality, which at its core is an unavoidable but undesirable inheritance from the geometric structure of the Kronecker product space of many-body quantum mechanics.
The latter renders numerical recipes reliant on exact solvability impracticable beyond a certain number of interacting degrees of freedom. Thus aligned with the objective of efficiently creating approximate target quantum states that inhabit an otherwise exponentially large Hilbert space, efforts have been made over the past few years to develop expressible representations of such states using neural networks \cite{carleo2017solving,medvidovic2024neural} which are provably known to be excellent function approximators \cite{cybenko1989approximation}. Commonly grouped under the ever-expanding umbrella of neural quantum states (NQS), a diverse variety of network architectures like Restricted Boltzmann Machines (RBM)\cite{torlai2018neural, carleo2017solving, deng2017quantum, nagy2019variational}, Auto-regressive neural networks (ARN) \cite{barrett2022autoregressive}, Deep-Boltzmann machines (DBMs) \cite{nomura2021purifying}, Recurrent neural networks (RNN) \cite{hibat2020recurrent,wu2023tensor}, simple feed-forward neural networks (DNN) \cite{jia2019quantum}, Fermionic neural networks \cite{robledo2022fermionic,liu2024unifying}, and even Transformers \cite{zhang2023transformer} have now been reported in literature to functionally represent a quantum state in tasks not only in learning energy eigenpairs \cite{melko2019restricted,medvidovic2024neural} but also in dynamical evolution of open and closed systems \cite{nagy2019variational, donatella2023dynamics,luo2022autoregressive}. The usual workflow in such architectures is to represent the target state on the basis of bit strings/spin configurations. For each such spin configuration fed as input, the network is trained with tunable parameters to produce the corresponding phase and amplitude of the target quantum state as output. It also has been proven analytically that any quantum state represented by an efficiently contractible tensor network (TN), such as matrix product states (MPS) and projected entangled pair states (PEPS), can also be represented by an NQS efficiently with polynomial resources, but the converse is strictly not true \cite{sharir2022neural} thus making commonly used TNs a subset of NQS in terms of expressive power. Emboldened by such assertions, recently deep neural network representations have been used to treat even large 2D models like the $10\times10$ non-frustrated Heisenberg model and even frustrated $J_1-J_2$ model of size $20 \times 20$ on square lattice and of size $18 \times 18$ on triangular lattices \cite{chen2024empowering} efficiently with remarkable accuracy. Even shallow NQS like RBM, which has been analytically proven to be a universal approximator for arbitrary probability distributions \cite{le2008representational,rrapaj2021exact}, have seen tremendous success in modeling volume-law entangled quantum states \cite{deng2017quantum,sajjan2023imaginary}, systems with higher-dimensional local Hilbert spaces \cite{vieijra2020restricted,pei2021neural,lahiri2021repesentation}, studying phase transitions in strongly correlated topological phases \cite{lu2019efficient, doi:10.1063/5.0128283}, fermionic systems like 2D materials and molecules \cite{sajjan2021quantum,sajjan2022quantum,sureshbabu2021implementation, choo2020fermionic}, bosonic systems \cite{mcbrian2019ground,pei2024specialising}, lattice gauge models \cite{luo2021gauge} and even steady states of dynamical evolutions generated by a Liouvillian \cite{nagy2019variational,cao2024neural,hryniuk2024tensor}.

Classically, expectation values of observables in these neural network-based representations are usually estimated by sampling important configurations / bit strings using Markov Chain Monte Carlo (MCMC) techniques \cite{dellaportas2003introduction,liu2001monte}. This operates by drawing samples from a prior/proposal distribution to construct a Markov chain that has the target distribution of the NQS as its steady state. Without domain knowledge about the target distribution, priors commonly used include local update wherein mutation of the currently chosen sample on a randomly chosen single site is performed, or global updates like choosing a configuration uniformly randomly \cite{medvidovic2024neural,vivas2022neural,hermann2023ab} each time. However, priors based on local updates are  susceptible to sample wastage and slow convergence,and uniform sampling is too random to be efficient, as it misses information about the geography of the landscape entirely. Cluster update schemes wherein a subset of spins are mutated also exists but are usually problem specific \cite{wolff1989collective}.
One may ask if a tailored quantum-enabled prior distribution can ameliorate some of these issues and make training advantageous. This is akin to a pertinent question that has been asked for several years now whether machine learning tasks on classical data can be accelerated using quantum hardware\cite{cerezo2022challenges,sajjan2022quantum}. The vision for designing a quantum-enabled training of NQS can be viewed as the diametrically opposite end of the spectrum where an algorithm rooted in machine learning is used to learn features of quantum data augmented by the power of quantum devices. Despite the aforementioned unprecedented representational capacity of NQS, this possibility is criminally understudied.

Some previous efforts of the authors attempted to develop a parameterized quantum circuit to sample the desired probability distribution for training RBM, a specific NQS \cite{Xia_2018,sajjan2021quantum}. The workflow was directly motivated by the suite of quantum-assisted algorithms developed over the past decade that have leveraged the variational principle to work with a low-depth parameterized circuit(ansatz), which can scout a smaller and more manageable subspace of the full Hilbert space\cite{cerezo2021variational}. However, the algorithm so designed, apart from being resource intensive in terms of gate depth and qubit requirements, was also limited to only using quantum-assisted training to model the amplitude field of the target state\cite{xia2018quantum}. The accompanying phase information was obtained entirely by classical means\cite{sajjan2021quantum}. The specific design of the protocol also required storing the full distribution, which leads to an implicit exponential run-time and storage cost. A part of the problem was that the circuit so constructed was for a completely different workflow compared to the usual variational quantum algorithms. For the specific NQS used, the purpose of the circuit was to facilitate the preparation of a targeted distribution that required performing non-unitary operations executed through mid-circuit measurements and ancillary reuse \cite{xia2018quantum,decross2023qubit}. This greatly extended the execution time of the circuit due to the wastage of shots, which, in turn, degraded precision due to measurement-induced errors. However, in this manuscript, we shall report the construction of a more generalizable protocol that can work for many different NQS and, as we shall show, cuts down on the resource requirements drastically, obviates the need for mid-circuit measurements, works holistically for both amplitude and phase information of the target wavefunction, is strictly polynomial in both runtime and storage, and also performs better than usual classical sampling strategies.

The major contributions in this manuscript are the following:
i) We introduce the concept of a surrogate neural network for training an NQS. Although the specific NQS used to demonstrate our protocol for learning an $n$-qubit state is RBM, the probability distribution on which the concomitant surrogate network is trained is analytically proven to approximate any arbitrary probability density defined over a finite support, thus imparting generalizability. We also design a polynomially efficient protocol to define this distribution through data-driven fitting and characterize any error due to truncation. ii) Furthermore, we analytically prove that under specific circumstances, training the target NQS using the distribution of the surrogate can be advantageous in terms of reducing the variance of the sample estimate of the observable. iii) We also show how a quantum circuit based on Hamiltonian simulation can be used to design a distribution sampling protocol specifically tailored to the surrogate and then numerically illustrate that samples from the circuit show faster convergence and less initial burn-in and auto-correlation, yielding a better quality converged steady distribution compared to well-known classical priors and even Haar random states. This subsequently enhances the accuracy of any sample estimates. One must note that even though the training of the NQS is variational, it is being trained classically with inputs generated from the quantum circuit. This sets our workflow different from other variational quantum algorithms as it is the parameterized quantum circuit in those cases which serves as a quantum-neural network \cite{PRXQuantum.2.040337} directly. In our case, we use a neural network hosted classically with an existence independent from that of the quantum circuit. We query the quantum circuit only as a sampler to extract $n$-bits of information from it each time in conformity with Holevo's bound \cite{holevo1973bounds, holevo1998quantum} and use that to train the classical network. In fact, a better grouping of our algorithm would be with the set of protocols designed for variational Monte Carlo simulation enabled by quantum circuits, which is an extremely underdeveloped territory and is only beginning to receive some attention\cite{xu2023quantum, kanno2024quantum,zhang2022quantum,mazzola2024quantum,huggins2022unbiasing}. Our protocol can also be envisioned as a step in the direction of recent analogous post-variational algorithms like subspace quantum diagonalization \cite{robledo2024chemistry,barison2024quantum}, or machine learning algorithms trained with classical shadows \cite{jerbi2024shadows,heidari2023quantum,li2021vsql} where samples from the quantum circuit are used and hence is a specific instance of the recently described CSIM-QE workflow \cite{cerezo2023does}. (iv) The circuit so prepared is rooted in Trotterization \cite{childs2021theory} as the specific choice of Hamiltonian simulation and hence can be systematically improved depending on preset user threshold for error. Also, it requires $O(n)$ qubits and a gate depth/layers of $O(\tau n)$ where $\tau$ is the time step interval for the Trotterization scheme used, which is usually independent of system size. Besides, there is no requirement for any mid-circuit measurement, and number of queries made to the circuit/number of measurements is $O(N_s)$, which is not only amenable to parallel execution through GPU accelerated learning protocols but also independent of system size and is preset by the user depending on precision of sample estimates. Also, the algorithm holistically accounts for both the phase and amplitude field of the target quantum state, obviating the need for any classical inputs. This vastly enhances the scope of systems that can be efficiently treated. The storage of the algorithm is also $\text{poly}(n)$ without the requirement of ever retrieving the full distribution, which has been formally proven to be always inefficient unless polynomial hierarchy collapses\cite{long2010restricted}. 
All of these features have to be contrasted with the previous algorithm for quantum training of an NQS (See Section IV for details). To the best of our knowledge, this is the best-known quantum-enabled algorithm for NQS training in terms of resources required. (v) The algorithm is exemplified using systems ranging from not only spin models but also electronic states of molecules under both equilibrium and distorted geometries. Despite its low implementation cost, in all cases, it has been found to produce highly accurate results below the chemical accuracy threshold, especially when combined with zero-variance extrapolation, which is inexpensive in our method and doesn't incur any additional overhead.

The remainder of the manuscript delves into the specific details of these improvements. The following section (Section II) provides a detailed discussion on the architecture of the NQS used and the surrogate. This is followed in Section III by the entire algorithmic workflow along with the quantum-enabled MCMC sampling. Several performance indicators/metrics are also analyzed here to compare them with other sampling strategies. The resource requirements of the algorithm are explained in Section IV and compared with previous approaches. Several applications of quantum state learning that are used for benchmarking the algorithm are shown in Section V, followed by concluding remarks and future directions in Section VI. }

\section{Architecture of the parent and surrogate network}

\begin{figure}[h!]
    \centering
    \includegraphics[width=1.02\linewidth]{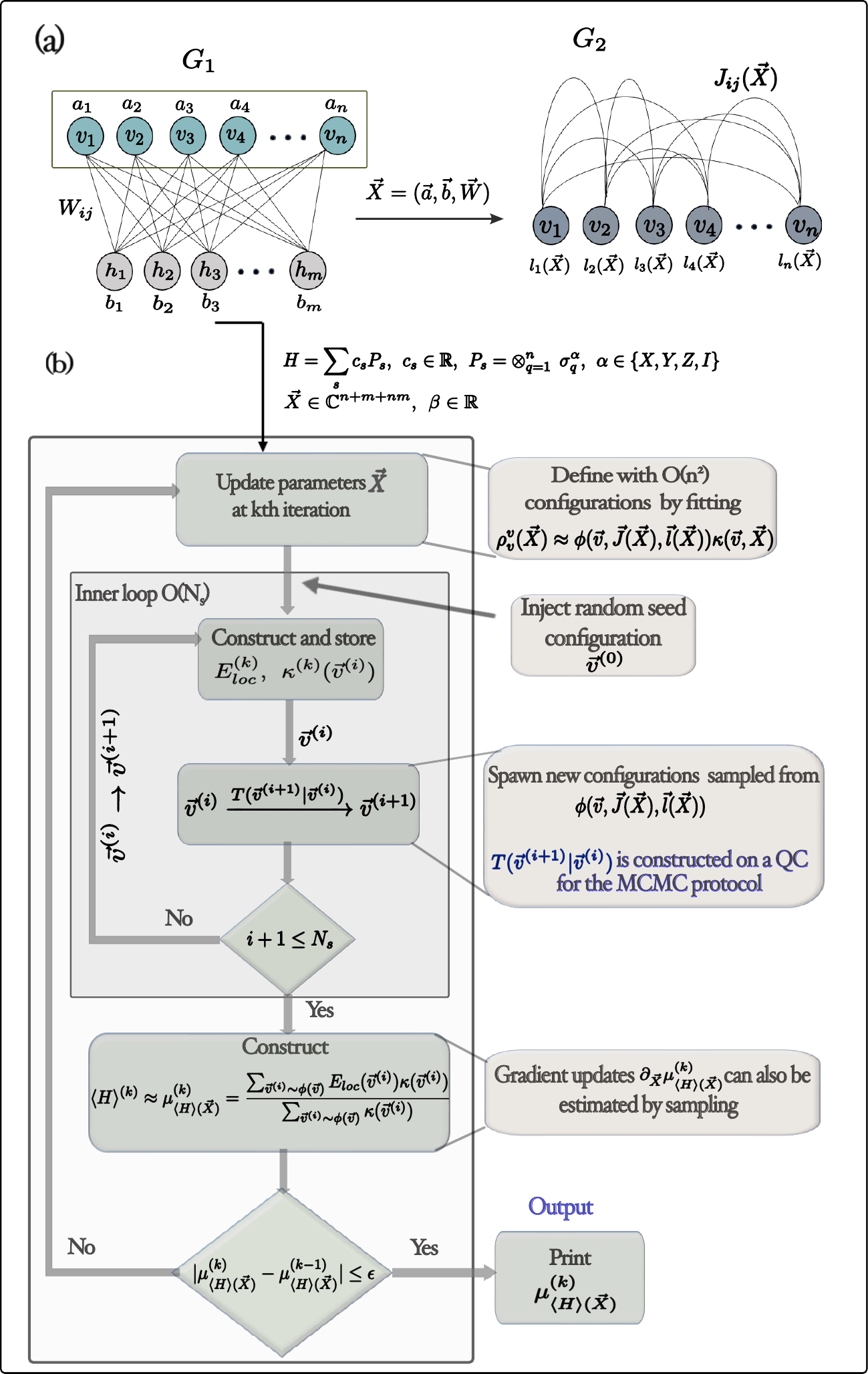}
    \caption{(a) The initial neural-network quantum state, shown on the left, is a Restricted Boltzmann Machine (RBM), a bipartite graph $G_1$ with visible neurons $v_i$ and hidden neurons $h_j$. Bias vectors $\vec{a}$ and $\vec{b}$ are associated with each layer, and their connections are parameterized by $\vec{W}$. Together, the parameter vector $X=(\vec{a},\vec{b},\vec{W}) \in \mathbb{C}^{m+n+mn}$ defines $G_1$ during training. On the right is the surrogate network $G_2$, constructed by our algorithm using $X$. $G_2$ consists of a single layer of visible neurons $v_i$ with all-to-all connections parameterized by $\vec{J}(\vec{X}) \in \mathbb{R}^{n^2}$ and biases $\vec{l}(\vec{X}) \in \mathbb{R}^n$. Our algorithm explicitly computes $(\vec{l}(\vec{X}), \vec{J}(\vec{X}))$ from $\vec{X}$.(b) The flowchart of the algorithm is shown for the hamiltonian $H$ of the driver (see text). We shall input the driver Hamiltonian as $H=\sum_s c_s P_s$, with $P_s = \otimes_{q=1}^n \sigma_\alpha^{(q)}, : \alpha \in \{x,y,z,0\}$, an initial parameter set $X$, and $\beta \in \mathbb{R}$, the algorithm constructs $G_2$ and define $E_{loc}(\vec{X})$, $\kappa(\vec{v})$ (See Eq.\ref{eq:loc_energy} and Eq.\ref{eq:surrogate_state}), draws samples using quantum-assisted methods, and outputs a self-converged sample estimate $\mu_{\langle H \rangle(\vec{X})} \approx \langle H \rangle$ (see Eq.\ref{eq:sample_mean_energy}). The same procedure provides gradient estimates (see Eq.\ref{eq:loc_grad}), which can be used to train $G_1$.}
    \label{fig:Scheme-flowchart}
\end{figure}

In Fig \ref{fig:Scheme-flowchart}(a) we define the two NQS of interest. On the left of the said figure is a bipartite graph $G_1=(V_1, E_1)$ consisting of two inter-connected spin registers. The vertex set $V_1=\{v\}_{i=1}^{i=n} \bigcup  \{h\}_{j=1}^{i=m}\,\,\rm{with}\:\: $(n,m)$ \:\in \:\mathbb{Z}_{+}$. The subset $\{v\}_{i=1}^{i=n}$ are called visible neurons with $v_i \in \{1,-1\} \:\:\forall \:\: i$ analogous to classical spins. A similar description holds for the subset $\{h\}_{j=1}^{j=p}$, which are collectively called hidden neurons. Each such neuron is endowed with a bias term (akin to an on-site magnetic field), which is denoted as $a_i$ for the set of visible neurons and $b_j$ for the set of hidden neurons. The set of edges $|E|=n*m$ defining connections between two subsets of neurons is parameterized by $W_{ij}$. Collectively, $X=(\vec{a}, \vec{b}, \vec{W}) \in \mathbb{F}^{n+m+nm}$ define the set of trainable parameters of the network where $\mathbb{F}$ is an arbitrary field of scalar variables. The Hamiltonian of the graph $G_1$ is that of a classical Ising model defined as 
\begin{align}
    \mathcal{H}(\vec{X}, \vec{v}, \vec{h}) &= \sum_{i=1}^{n} a_i v_i + \hspace{-0.07in}\sum_{j=1}^{m} b_j h_j + \hspace{-0.13in}\sum_{i=1,j=1}^{n,m} W_{ij} v_i h_j
    \label{eq: Ising_energy}
\end{align} 

When $\mathbb{F}= \mathbb{R}$, then one can define a classically correlated thermal state for the network \cite{smolensky1986information}, which has been used as a generative model in the literature under the name of Restricted Boltzmann Machine \cite{ackley1985learning, hinton2006reducing,larochelle2008classification, salakhutdinov2007restricted}. It is capable of learning an arbitrary discrete probability distribution\cite{le2008representational}. Similarly when $\mathbb{F}= \mathbb{C}$, the above network is capable of representing a quantum state \cite{carleo2017solving,torlai2020machine,melko2019restricted,PhysRevLett.122.250501,borin2020approximating,cao2024neuralnetworkapproachnonmarkovian,yoshioka2019constructing,hartmann2019neural} by defining a reduced density matrix over visible neurons (highlighted within a black outlined box in Fig.\ref{eq: Ising_energy}(a)) as the follows 
\begin{align}
\rho^v_{v^\prime}(\vec{X}) \propto e^{-\beta \sum_i a_i v_i + \sum_i a_i^* v_i^\prime} \:\prod_{j=1}^m \Gamma_j (\vec{v},\vec{b}, \vec{W}) \:\Gamma_j (\vec{v}^\prime,\vec{b^*}, \vec{W^*}) \label{eq:Ising_state}
\end{align} 

where $\Gamma_j (\vec{v},\vec{b}, \vec{W})=
\rm{Cosh}(\beta b_j + \beta \sum_{i=1}^n W_{ij} v_i)$.
One can use Eq.\ref{eq:Ising_state} as a proxy state/ansatz to learn a target quantum state given a driver Hamiltonian $H \in \mathcal{L}(\mathbb{C}^{2^n}$) (any system of chemical and physical importance) with $\vec{X}$ being variationally trained in the process \cite{PhysRevX.7.021021,yang2020artificial,hagai2023artificial, pan2024efficiency,Xia_2018,PhysRevB.101.195141,sajjan2021quantum,doi:10.1063/5.0128283}. As already mentioned, previous work by the authors\cite{xia2018quantum, sajjan2021quantum, sureshbabu2021implementation} have attempted to design a quantum circuit for training Eq.\ref{eq:Ising_state}. However, it was restricted to $\mathbb{F}= \mathbb{R}$ and hence could only model amplitude field with the relative phases of the configurational basis states being handled classically. The design of the circuit required a heavy dependence on number of hidden neurons $m$ with mid-circuit measurements powered on repeat-until-success modus operandi. This naturally led to shot wastages which enhances QPU time and slows execution. Besides, storage was also exponential in the scheme. Later efforts have tried to solve some of the problems by unitarizing graph $G_1$\cite{hsieh2021unitary} which can hamper the expressive qualities of the representation. In addition, the circuit so designed is specific to NQS of the form of $G_1$. Herein we design a new algorithm that retains the advantages of quantum-enabled training, keeping the run-time and storage strictly polynomial (circuit requirements are linear i.e. $O(n)$) end-to-end and also using $\mathbb{F}= \mathbb{C}$ and even readily extendible to other NQS designs. To motivate such a construction, we shall first establish the following theorem

\begin{theorem}\label{thm:factor_theo_main}
Given $\mathbb{V}=\{\vec{v}| \vec{v} \in \{-1,1\}^n\}$ where $n\in \mathbb{Z}_{+}$ and an arbitrary discrete probability distribution $P:\mathbb{V}\to [0,1]$, then the following statement is true
\begin{enumerate}
\item It is always possible to find a representation of $P(\vec{v})$ as 
\begin{eqnarray}
    P(\vec{v}) &=& \mathcal{N}\kappa(\vec{v}) \frac{e^{\sum_{\vec{a} \in \{0, 1\}^n, H(\vec{a})\le k} C_{\vec{a}} F_{\vec{a}}(\vec{v})}}{\sum_{\vec{v}}e^{\sum_{\vec{a} \in \{0, 1\}^n, H(\vec{a})\le k} C_{\vec{a}} F_{\vec{a}}(\vec{v})}} \nonumber \\
    &=& \mathcal{N}\kappa(\vec{v})\phi(\vec{v})
\end{eqnarray}
where $F_{\vec{a}}(\vec{v}) = \prod_i v_i^{a_i} \:\:\:, \vec{a}=(a_1, a_2....a_n) \in \{0,1\}^n ,\:\: \:\: \vec{v}=(v_1,v_2,...v_n)\:\:\in \mathbb{V}$, $\:\:\:C_{\vec{a}} \in \mathbb{R}$, and $H(\vec{a})=\sum_i^n a_i$ is the Hamming weight of $\vec{a}$. We have defined $\phi(\vec{v})=\frac{e^{\sum_{\vec{a} \in \{0, 1\}^n, H(\vec{a})\le k} C_{\vec{a}} F_{\vec{a}}(\vec{v})}}{\sum_{\vec{v}}e^{\sum_{\vec{a} \in \{0, 1\}^n, H(\vec{a})\le k} C_{\vec{a}} F_{\vec{a}}(\vec{v})}}$ as another discrete distribution $\phi:\mathbb{V}\to [0,1]$ which will be the backbone of a secondary surrogate network.
Also $\mathcal{N}\ne f(\vec{v}) \ge 0$ and $k\in \mathbb{Z}_{+}$, is a user-defined preset non-negative integer. $k$ controls the degree of the polynomial chosen for expressing $\phi(\vec{v})$.

\item $\kappa(\vec{v})$ is a configuration dependant non-negative prefactor. In the most generic case, $\exists$ $M \in \mathbb{R}\ge 0$ such that $|\log(\kappa(\vec{v}))|$ is upper bounded as  
\begin{eqnarray}
    |\log(\kappa(\vec{v}))| \le M\left(2^{n} -\sum_{j=0}^{k} \ ^nC_j \right)
\end{eqnarray}
\end{enumerate}
\end{theorem}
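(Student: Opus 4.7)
The plan is to recognize that the family $\{F_{\vec{a}}(\vec{v})\}_{\vec{a}\in\{0,1\}^n}$ is precisely the Walsh--Fourier (parity) basis on the hypercube $\mathbb{V}$. Because $v_i^2=1$, the exponent $a_i$ only matters modulo $2$, and the $2^n$ characters $F_{\vec{a}}$ are mutually orthogonal with respect to the counting measure on $\mathbb{V}$, with $|F_{\vec{a}}(\vec{v})|=1$ for every $\vec{v}$. In particular, any real-valued function $g:\mathbb{V}\to\mathbb{R}$ admits a unique expansion $g(\vec{v})=\sum_{\vec{a}\in\{0,1\}^n} D_{\vec{a}}F_{\vec{a}}(\vec{v})$ with $D_{\vec{a}}\in\mathbb{R}$. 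This is the structural observation driving the entire argument.

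First, I would apply this expansion to $g(\vec{v})\equiv\log P(\vec{v})$, assuming for the moment that $P(\vec{v})>0$ on all of $\mathbb{V}$; zeros of $P$ can be folded in by a limiting argument in which one perturbs $P$ by a vanishing uniform background. Next, I would split the expansion according to Hamming weight into a low-degree piece $L(\vec{v})=\sum_{H(\vec{a})\le k} D_{\vec{a}} F_{\vec{a}}(\vec{v})$ and a residual piece $R(\vec{v})=\sum_{H(\vec{a})>k} D_{\vec{a}} F_{\vec{a}}(\vec{v})$. Identifying $C_{\vec{a}}=D_{\vec{a}}$ for $H(\vec{a})\le k$, the exponential of $L$ normalized over $\mathbb{V}$ is exactly the claimed distribution $\phi(\vec{v})$, while $\kappa(\vec{v})\equiv e^{R(\vec{v})}\ge 0$ absorbs the residual, and $\mathcal{N}$ is the partition function $Z=\sum_{\vec{v}}e^{L(\vec{v})}$, which is independent of $\vec{v}$. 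This gives Part~1.

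For Part~2, the inequality follows almost immediately from the structure just set up. Since $\log\kappa(\vec{v})=R(\vec{v})$, the triangle inequality together with $|F_{\vec{a}}(\vec{v})|=1$ gives $|\log\kappa(\vec{v})|\le\sum_{H(\vec{a})>k}|D_{\vec{a}}|$. Setting $M=\max_{H(\vec{a})>k}|D_{\vec{a}}|\ge 0$ and counting that the number of $\vec{a}\in\{0,1\}^n$ with Hamming weight strictly greater than $k$ equals $2^n-\sum_{j=0}^{k} \ ^nC_j$, one recovers the displayed bound exactly as stated.

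The hard part will not be the algebra but the regularity of $P$: the logarithmic expansion is strictly only defined on $\{\vec{v}:P(\vec{v})>0\}$, so configurations where $P$ vanishes need to be handled by either restricting the support of $P$ or by a continuity/perturbation argument which lets $M$ grow without invalidating the decomposition. This also justifies the qualifier ``in the most generic case'' accompanying the bound, since for strictly positive $P$ one obtains a finite $M$ and the bound is automatic.
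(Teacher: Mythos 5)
Your proof is correct and follows essentially the same route as the paper's: expand $\log P(\vec{v})$ in the orthogonal monomial (Walsh--Fourier) basis on $\{-1,1\}^n$ (handling zeros of $P$ by a vanishing perturbation), split the exponent by Hamming weight into $L(\vec{v})+R(\vec{v})$, set $\phi\propto e^{L}$ and $\kappa=e^{R}$, and bound $|\log\kappa|$ via the triangle inequality using $|F_{\vec{a}}(\vec{v})|=1$ and the count $2^n-\sum_{j=0}^{k}\,^nC_j$ of high-weight terms. The only slip is that $\mathcal{N}$ should be the ratio $\bigl(\sum_{\vec{v}}e^{L(\vec{v})}\bigr)/\bigl(\sum_{\vec{v}}e^{L(\vec{v})+R(\vec{v})}\bigr)$ rather than the low-degree partition function alone, a bookkeeping detail that does not affect the argument.
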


\begin{proof}
See Section B in Appendix
\end{proof}

It must be emphasized that the above assertion is general enough to model any discrete distribution $P(\vec{v})$ which may not even be of the form generated from graph $G_1$. The distribution $\phi(\vec{v})$ defined above can be envisioned to be associated with a surrogate network defined with $n$ spins (similar to the visible layer of graph $G_1$) which we introduce in this manuscript. Apart from the proof of existence above, we also provide a constructive data-driven protocol to find the parameters $\{C_{\vec{a}}\}_{\vec{a}\in{0,1}, H(\vec{a})\le k}$ defining the distribution $\phi(\vec{v})$. This allows the subsequent design of our algorithm to be conveniently extended to any NQS and construct its surrogate. The purpose of choosing this surrogate is that no matter how so ever complicated $P(\vec{v})$ is, the functional form of which may even not be precisely known, the functional form of $\phi(\vec{v})$ is always simple, precisely determinable (upto normalization) and hence amenable to being efficiently sampled. The prefactor $\kappa(\vec{v})$ provides additional expressibility/flexibility in modeling the target distribution $P(\vec{v})$ given the surrogate distribution $\phi(\vec{v})$ is represented with a finite order $k$. Exact computation of $\kappa(\vec{v})$ would require us to know the functional form of $P(\vec{v})$ and/or solve for the coefficients of a $(n-k)$ degree polynomial (see Section B in Appendix) which can be difficult for high $n$ and small $k$. So, two approaches can be used for its construction. If the functional form of $P(\vec{v})$ is known upto a normalization (the case here), one can try to make $\phi(\vec{v})$ mimic $P(\vec{v})$ closely and incorporate all deviations within $\kappa(\vec{v})$. Additionally, if that information is unknown (just samples from $P(\vec{v})$ is only accessible) one can use domain knowledge and parameterize $\kappa(\vec{v})$ in different functional forms which may even reap unforeseen advantages in constructing the estimated observable. One such parameterization specific to the context at hand is presented in Section C in Appendix and shows the advantage of reduction in variance of the sampled estimate in certain regimes. 

In this report we use the former of the two strategy for $\kappa(\vec{v})$ and choose $P(\vec{v})=\rho^v_v(\vec{X})$ (see Eq.\ref{eq:Ising_state}) as an illustration for manipulating a distribution generated from an NQS. For the surrogate network distribution $\phi(\vec{v})$ we truncate the polynomial in Theorem\ref{thm:factor_theo_main}(1) to second order i.e. use $k=2$. This can be visually represented as an additional Ising network $G_2=(V_2, E_2)$ (see top-right in Fig.\ref{fig:Scheme-flowchart}(a)). This new network is a fully-connected Ising graph consisting of $n$ neurons. The on-site bias terms are denoted as $l(\vec{X}) \in \mathbb{R}^n$ and the interconnections in edge set $E_2$ are denoted as $\vec{J}(\vec{X}) \in \mathbb{R}^{n^2}$. The state of this network is then defined by a distribution as $\phi_{\vec{v}}(\vec{l}(\vec{X}), \vec{J}(\vec{X})) \propto e^{-\beta \sum_i l_i(\vec{X}) v_i + \sum_{ij} J_{ij}(\vec{X}) v_i v_j}$. The parameters of this surrogate network $(\vec{l}(\vec{X}), \vec{J}(\vec{X}))$ are functionally related to the parent network $G_1$ as the distribution  $(\phi_{\vec{v}}, \rho^v_v)$ satisfies the following
    \begin{align}
     \rho^v_v(\vec{X}) \propto \kappa(\vec{v}, \vec{X})\phi_{\vec{v}}(\vec{l}(\vec{X}), \vec{J}(\vec{X})) \label{eq:surrogate_state}
    \end{align}
where $\kappa(\vec{v}, \vec{X})$ as before is a configuration dependent pre-factor chosen to establish the equivalence. This renders $\phi_{\vec{v}}(\vec{l}(\vec{X}), \vec{J}(\vec{X}))$ as a close approximant to $\rho^v_v(\vec{X})$.
    
\section{Algorithm}
Our protocol (see Fig.\ref{fig:Scheme-flowchart}(b)) is as follows
\begin{enumerate}
    \item In the first step, the parameters of the surrogate network $G_2$ which are $(\vec{l}(\vec{X}), \vec{J}(\vec{X}))$ are procured by fitting $(\phi_{\vec{v}}, \rho^v_v)$ pairs for just $O(n^2)$ configurations. This is completely a data-driven construction and doesnt require normalization for either $\rho^v_v(\vec{X})$ or 
    $\phi_{\vec{v}}(\vec{l}(\vec{X}), \vec{J}(\vec{X}))$. The algorithm works by choosing $q$ largest configurations of $\rho^v_v(\vec{X})$ in $O(qn)$ runtime which can be further optimized using specific data-structures. This step completely defines the parameter set $(\vec{l}(\vec{X}), \vec{J}(\vec{X}))$ for the distribution $\phi_{\vec{v}}$ in $G_2$ and makes it suitable to draw samples from. Further details about the fitting procedure can be found in Section D in Appendix.

    \item Given a Pauli-decomposed form of a driver hamiltonian $H \in \mathcal{L}(\mathbb{C}^{2^n}) = \sum_s c_s P_s$ where $P_s = \otimes_{q=1}^n \sigma_\alpha^{(q)}, \: \alpha \in \{x,y,z,0\}$, one can define local energy as follows
    \begin{align}
        \langle H \rangle (\vec{X}) = \frac{Tr(\rho(\vec{X}) H)}{Tr(\rho(\vec{X}))} = \sum_{\vec{v}} \frac{\rho^v_v(\vec{X}) E_{loc}(\vec{v}, \vec{X})}{\sum_{\vec{v}} \rho^v_v(\vec{X})}
        \label{eq:loc_energy}
    \end{align}
    where $E_{loc}(\vec{v}, \vec{X}) = \frac{\sum_{\vec{v}^\prime} H^v_{v^\prime} \rho^v_{v^\prime}(\vec{X})}{\rho^v_v(\vec{X})}$ is the local energy.
     Using Eq.\ref{eq:surrogate_state}, one can replace $\rho^v_v$ in Eq.\ref{eq:loc_energy} to 
     obtain 
     \begin{align}
     \langle H \rangle (\vec{X}) \approx \frac{\sum_{\vec{v}} \kappa(\vec{v}, \vec{X}) \phi_{\vec{v}}(\vec{l}(\vec{X}), \vec{J}(\vec{X})) E_{loc}(\vec{v},\vec{X})}{\sum_{\vec{v}} \kappa(\vec{v}, \vec{X}) \phi_{\vec{v}}(\vec{l}(\vec{X}), \vec{J}(\vec{X}))}  \label{eq:avg_energy_surrogate}
     \end{align}
It is to be noted that Eq.\ref{eq:avg_energy_surrogate} is averaged over the distribution of the surrogate network with $\kappa(\vec{X}, \vec{v})$ being the kernel. Thus, the task would be to procure an unbiased sample estimate to Eq.\ref{eq:avg_energy_surrogate} hereafter denoted as 
\begin{align}
\mu_{\langle H \rangle(\vec{X})} =  \frac{\sum_{\vec{v} \sim \phi} \kappa(\vec{v},\vec{X})E_{loc}(\vec{v},\vec{X})}{\sum_{\vec{v} \sim \phi} \kappa(\vec{v}, \vec{X})}
\label{eq:sample_mean_energy}
\end{align}
The number of samples required to create this faithful estimate to a preset error threshold $\epsilon$ is $N_s = O\Big(\frac{\rm{Var}(\mu_{\langle H \rangle(\vec{X})})}{\epsilon^2}\Big)$ \cite{saw1984chebyshev,Stellato_2017} where $\rm{Var}(\mu_{\langle H \rangle(\vec{X})})$ is the population variance of the estimated expectation. 


\item As noted in the previous point to estimate $\mu_{\langle H \rangle(\vec{X})}$ in Eq.\ref{eq:sample_mean_energy}, one would need samples from $\phi_{\vec{v}}(\vec{l}(\vec{X}), \vec{J}(\vec{X}))$ which can be obtained using Metropolis-Hastings algorithm\cite{chib1995understanding} employing a proposal distribution by the user. Let the proposal distribution chosen by the user to sample $\vec{v}^{(i+1)}$ given $\vec{v}^{(i)}$ be denoted as $P_{prop}(\vec{v}^{(i+1)} | \vec{v}^{(i)})$. To ensure detailed balance, we define a transition probability matrix $T(\vec{v}^{(i+1)}|\vec{v}^{(i)})$ as the following:
\begin{align}
&T(\vec{v}^{(i+1)} | \vec{v}^{(i)})\nonumber \\ &= min\Bigg(1, \frac{\phi(\vec{v}^{(i+1)}) P_{prop}(\vec{v}^{(i+1)} | \vec{v}^{(i)})}{\phi(\vec{v}^{(i)}) P_{prop}(\vec{v}^{(i)}| \vec{v}^{(i+1)})}\Bigg) \nonumber\\ &\times P_{prop}(\vec{v}^{(i+1)}|\vec{v}^{(i)}) 
\label{eq:trans_matrix}
\end{align}

The protocol repeatedly samples from $T(\vec{v}^{(i+1)} | \vec{v}^{(i)})$ and draw a list of configurations $\vec{v}^{(i)} \rightarrow \vec{v}^{(i+1)} \rightarrow \vec{v}^{(i+2)}...$ using which $\mu_{\langle H \rangle(\vec{X})}$ in Eq.\ref{eq:sample_mean_energy} can be estimated.
The acceptance/rejection criteria built-in the definition of Eq.\ref{eq:trans_matrix} ensures that after an initial burn-in, configurations are drawn from 
the target $\phi_{\vec{v}}(\vec{l}(\vec{X}), \vec{J}(\vec{X}))$.
It must be emphasized that the dependence of $T(\vec{v}^{(i+1)} |\vec{v}^{(i)})$ on the ratio of $\frac{\phi(\vec{v}^{(i+1)})}{\phi(\vec{v}^{(i)})}$ obviates the need for the computation of the normalization constant which is resource-intensive. This is unlike direct sampling protocols (accept-reject sampling or inversion transform of the cumulative distribution \cite{derflinger2010random}) where normalization of $\phi_{\vec{v}}(\vec{l}(\vec{X}), \vec{J}(\vec{X}))$ would be deemed necessary thereby motivating our choice.
Several different proposal distributions $P_{prop}(\vec{v}^{(i+1)} |\vec{v}^{(i)})$ can be chosen by the user provided it is ergodic and leads to aperiodic and irreversible states. This guarantee a unique stationary distribution and convergence of the generated Markov chain to such a distribution \cite{liu2001monte, dellaportas2003introduction}. Each such proposal generates a new $T(\vec{v}^{(i+1)} | \vec{v}^{(i)})$ in accordance with Eq.\ref{eq:trans_matrix}. In this report, we draw comparisons between the performance of several such proposal distributions (See Fig.\ref{fig:2nd_panel}(a)), each of which is defined as follows:

\begin{figure*}
    \centering
    \includegraphics[width=1\linewidth]{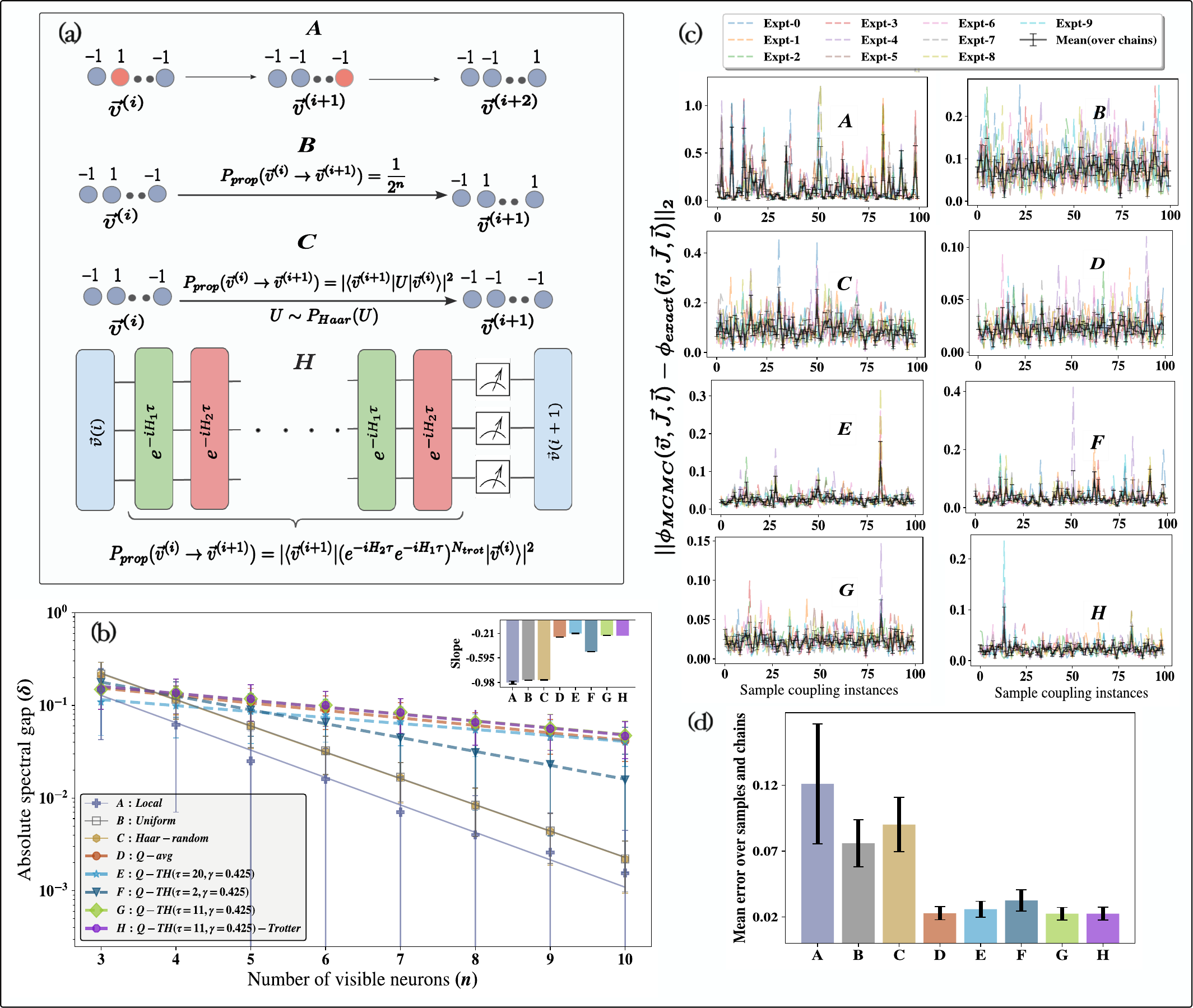}
    \caption{(a) Various proposal matrices for an $n$-qubit state are displayed. The local proposal (A) flips a single spin at a given site with $P_{prop}(\vec{v}^{(i+1)} |\vec{v}^{(i)})=\langle \vec{v}^{(i+1)}| \frac{\sum_k X^{(k)}}{n}|\vec{v}^{(i)}\rangle$. The uniform proposal (B) samples configurations equally with $P_{prop}(\vec{v}^{(i+1)} |\vec{v}^{(i)})=\frac{1}{2^n}$. The Haar random proposal (C) uses $P_{prop}(\vec{v}^{(i+1)} |\vec{v}^{(i)}) = |\langle \vec{v}^{(i+1)}| U|\vec{v}^{(i)}\rangle|^2$, where $U$ is a Haar random unitary. The quantum-enhanced proposal (H) samples via $P_{prop}(\vec{v}^{(i+1)} | \vec{v}^{(i)}) = |\langle \vec{v}^{(i+1)}| U(\tau, \gamma)|\vec{v}^{(i)}\rangle|^2$ with $U(\tau, \gamma) = e^{(-i \gamma h_1 + (1-\gamma)h_2)\tau}$, $h_1 = \sum_i l_i(\vec{X})\sigma_z^{(i)} + \sum_{i,j} J_{ij}(\vec{X})\sigma_z^{(i)}\sigma_z^{(j)}$, and $h_2 = \sum_i \sigma_x^{(i)}$ and $(\gamma,\tau) \in \mathbb{R}$. Variants $\mathrm{E-G}$ modify $(\gamma, \tau)$ to produce time-homogeneous proposals with single instantaneous realizations of $(\gamma, \tau)$ instead of an average as in $\mathrm{D}$. The output $\vec{v}^{(i)}$, a bit-string of length $O(n)$, is used for the next step. (b) The spectral gap $\delta = \lambda_0 - \lambda_1$ of $T(\vec{v}^{(i+1)} |\vec{v}^{(i)})$ is analyzed for $\mathrm{A-H}$. Quantum proposals $\mathrm{D-H}$ show slower gap decay than classical $\mathrm{A-C}$, improving mixing time which results in faster convergence to the steady distribution.(c) For $n=8$, samples are drawn for each $\mathrm{A-H}$, and deviations from the exact distribution are measured as $||\phi_{MCMC}^{(\vec{l}, \vec{J})}(\vec{v})-\phi_{exact}^{(\vec{l}, \vec{J})}(\vec{v})||_2$ over 100 random instances of $(\vec{l}, \vec{J})$. From each such instance, 10,000 samples are used. Results from 10 Markov chains are shown as colored traces, with mean and standard deviation as black traces. (d) We take the mean of the black trace in (c) above over all coupling instances for each proposal and display the resuls here with the errorbars denoting the respective standard deviations. It shows quantum proposals $\mathrm{D-H}$ reduce average error by a factor of 5 compared to $\mathrm{A-C}$, with lower variaance, ensuring robustness across many instances of $(\vec{l}, \vec{J})$.}

    \label{fig:2nd_panel}
\end{figure*}

\begin{itemize}
    \item \textbf{A = Local Proposal} $\implies$ $P_{prop}(\vec{v}^{(i+1)} | \vec{v}^{(i)}) = \langle \vec{v}^{(i+1)}| \frac{\sum_k X^{(k)}}{n}|\vec{v}^{(i)}\rangle$.
    This corresponds to flipping/mutating a single classical spin within $\vec{v}^{(i)}$ to spawn a new configuration $\vec{v}^{(i+1)}$ and is quite popularly used in literature\cite{PhysRevB.107.205102, 10.21468/SciPostPhysCore.6.4.088, PhysRevB.107.195115}
    
    \item \textbf{B = Uniform Proposal} $\implies$ $P_{prop}(\vec{v}^{(i+1)} | \vec{v}^{(i)}) = \frac{1}{2^n}$.
    
    \item \textbf{C = Haar Random Proposal} $\implies$ $P_{prop}(\vec{v}^{(i+1)} | \vec{v}^{(i)}) = |\langle \vec{v}^{(i+1)}| U|\vec{v}^{(i)}\rangle|^2$ where $U \sim P_{Haar}(U)$ i.e sampling a random unitary from the Haar measure. \cite{zyczkowski2000truncations, alagic2020efficient}
    
    \item \textbf{D = Quantum average Proposal} $\implies$ $P_{prop}(\vec{v}^{(i+1)} |\vec{v}^{(i)}) = \frac{1}{(\tau_2-\tau_1)(\gamma_2-\gamma_1)}$ $\int_{\tau=\tau_1}^{\tau_2} \int_{\gamma=\gamma_1}^{\gamma_2} |\langle \vec{v}^{(i+1)}| U(\tau, \gamma)|\vec{v}^{(i)}\rangle|^2 \: d\tau d\gamma$ where $U(\tau, \gamma) = e^{(-i \gamma h_1 + (1-\gamma)h_2)\tau}$. $h_1$ is inspired directly from the energy of the thermal distribution $\phi_{\vec{v}}(\vec{l}(\vec{X}), \vec{J}(\vec{X}))$ as $h_1=\sum_i l_i(\vec{X})\sigma_z^{(i)} + \sum_{i,j} J_{ij}(\vec{X})\sigma_z^{(i)}\sigma_z^{(j)}$ and $h_2$ is a mixer defined as $h_2 =\sum_i \sigma_x^{(i)}$. Both $(\gamma, \tau)$ are randomly sampled to approximate the integral and we set $(\gamma_1, \gamma_2)=(0.25,0.6)$ and $(\tau_1, \tau_2)=(4,20)$ \cite{layden2023quantum}. The unitary $U(\tau, \gamma)$ is exactly computed.
    
    \item \textbf{E = Quantum time-homogeneous Proposal}\textbf{$\:\:(\tau=20, \gamma=0.425)$} 
    $\implies$ $P_{prop}(\vec{v}^{(i+1)} | \vec{v}^{(i)})$ = $|\langle \vec{v}^{(i+1)}| U(\tau=20, \gamma=0.425)|\vec{v}^{(i)}\rangle|^2$ where $U(\tau, \gamma)$ similar to in $D$. Here $\tau$ is chosen to be the highest endpoint of its interval, and $\gamma$ is at the mid-point of its range in $D$. The unitary $U(\tau, \gamma)$ is exactly computed.
    
    \item \textbf{F = Quantum time-homogeneous Proposal}\textbf{$\:\:(\tau=2, \gamma=0.425)$} 
    $\implies$ $P_{prop}(\vec{v}^{(i+1)} | \vec{v}^{(i)})$ = $|\langle \vec{v}^{(i+1)}| U(\tau=2, \gamma=0.425)|\vec{v}^{(i)}\rangle|^2$ where $U(\tau, \gamma)$ similar to in $D$. Here $\tau$ is chosen to be the lowest endpoint of its interval, and $\gamma$ is at the mid-point of its range in $D$. The unitary $U(\tau, \gamma)$ is exactly computed.
    
    \item \textbf{G = Quantum time-homogeneous Proposal}\textbf{$\:\:(\tau=11, \gamma=0.425)$} 
    $\implies$ $P_{prop}(\vec{v}^{(i+1)} |\vec{v}^{(i)})$ = $|\langle \vec{v}^{(i+1)}| U(\tau=2, \gamma=0.425)|\vec{v}^{(i)}\rangle|^2$ where $U(\tau, \gamma)$ similar to in $D$.  Here $\tau$ is chosen to be the midpoint of its interval, and $\gamma$ is at the mid-point of its range in $D$. The unitary $U(\tau, \gamma)$ is exactly computed.
    
    \item \textbf{H = Quantum time-homogeneous Trotterized Proposal}\textbf{$\:\:(\tau=11, \gamma=0.425)$} 
    $\implies$ $P_{prop}(\vec{v}^{(i+1)} |\vec{v}^{(i)})$ = $|\langle \vec{v}^{(i+1)}| U(\tau=2, \gamma=0.425)|\vec{v}^{(i)}\rangle|^2$ with similar settings as in $G$. However, the unitary $U$ is not exactly computed, unlike in $D-G$, but is Trotterized to facilitate implementation on a quantum circuit, i.e., $U(\tau, \gamma) \approx (e^{-i (1-\gamma) h_2\delta t} e^{-i \gamma h_1\delta t})^{N_{trot}}$ with $N_{trot} = \frac{\tau}{\delta t}$. In the circuit implementation, each of the $n$ neurons in $G_2$ is replaced by a qubit.  As is illustrated in Fig.\ref{fig:2nd_panel}(a), the circuit starts with the incumbent instance of configuration $\vec{v}^{(i)}$ encoded within a state $|\vec{v}^{(i)}\rangle$ for the qubits. Within the said configuration, every element $-1$ corresponds to $|0\rangle$ for the respective neuron/qubit, and the element $1$ corresponds to $|1\rangle$. This state preparation is quite inexpensive and can be easily done using a single layer of $\sigma_{x}$ gates at appropriate neurons. The Trotterized gates for each layer are thereafter applied as $e^{-i (1-\gamma) h_2\delta t} e^{-i \gamma h_1\delta t})$ (See Section F in Appendix for a full-circuit decomposition) and repeated for $N_{trot}$ layers. The action of the pulse $e^{-i \gamma h_1\delta t}$ is responsible for changing configurations and swapping the amplitudes of various configurations once a superposition state is created. The pulse $e^{-i (1-\gamma) h_2\delta t}$ changes the energy of each configuration, which amounts to altering its relative phase within a superposition state. Both pulses are necessary to achieve the required interference pattern. The final output state is then sampled \textit{just once} to facilitate collapse in $|\vec{v}^{(i+1)}\rangle$ and retrieve a new configuration $\vec{v}^{(i+1)}$ which is supplied as the proposed candidate for the next iteration.
\end{itemize}

In Fig.\ref{fig:2nd_panel}(b), we compare the absolute spectral gap $\delta$ \cite{atchade2021approximate, layden2023quantum} for the transition probability matrix (see Eq.\ref{eq:trans_matrix}) defined using proposal distributions $A-H$ as defined above. $\delta = \lambda_0-\lambda_1$ where $\lambda_i$ is the $i$th eigenvalue of $T(\vec{v}^{(i+1)}| \vec{v}^{(i)})$ in a sorted list with $\lambda_i > \lambda_{i-1}\:\: \forall i$. The largest eigenvalue for $T(\vec{v}^{(i+1)} | \vec{v}^{(i)})$ is 1 corresponding to the stationary distribution which in this case is guaranteed to be $\phi_{\vec{v}}(\vec{l}(\vec{X}), \vec{J}(\vec{X}))$ by detailed balance. So $\delta$ essentially quantifies the inter-separation between the eigenvalue corresponding to the stationary distribution and the next highest one in the transition matrix. The usefulness of the quantity $\delta$ is quantified in the concept of mixing time \cite{bhatnagar2011computational} $(t_{mix})$, which roughly estimates the minimum number of samples drawn before the Markov chain converges to the target distribution with an error threshold (say $e$) in the total variation distance. 
The following bounds \cite{hsu2015mixing,layden2023quantum} inter-relate $\delta$ and $t_{mix}$
\begin{eqnarray}
\left( \delta^{-1} - 1 \right) \ln \left( \frac{1}{2e} \right) \leq t_{mix} \leq \delta^{-1} \ln \left( \frac{1}{e \min_{\vec{v}}\phi(\vec{v})} \right)
\label{eq:mix_time_delta}
\end{eqnarray}
It is clear that low $\frac{1}{\delta}$ (or high $\delta$) promotes lower mixing time. From Fig.\ref{fig:2nd_panel}(b), we see that $\delta$ is not only high for the quantum-enabled class of proposal distributions $D-H$ but also decreases 
slower by a factor of three (see inset of Fig.\ref{fig:2nd_panel}(b) for slopes) compared to classical proposals $A-C$. All points correspond to mean values (error bars are respective variances) over $250$ random instances of parameters $(\vec{l}(
\vec{X}),\vec{J}(\vec{X}))$ which defines the target distribution ${\phi(\vec{v}, \vec{X})}$. This decisive cubic advantage is the primary motivation for using a quantum-enhanced proposal distribution. Finally, to compare the quality of convergence and see the actual error in estimating the target distribution $\phi_{\vec{v}}(\vec{l}(\vec{X}), \vec{J}(\vec{X}))$, in Fig.\ref{fig:2nd_panel}(c) we compute the $l_2$-norm difference between the exact distribution (denoted in Fig.\ref{fig:2nd_panel}(c) as $\phi^{exact}_{\vec{v}}(\vec{l}, \vec{J})$)
and the estimated distribution constructed from the histogram of samples for each proposal $A-H$ (denoted in Fig.\ref{fig:2nd_panel}(c) as $\phi^{MCMC}_{\vec{v}}(\vec{l}, \vec{J})$). This is done for system $n=8$ neurons/qubits in network $G_2$ (see Fig.\ref{fig:Scheme-flowchart}(a)) for $100$ random instances of parameter vector $(\vec{l}(\vec{X}), \vec{J}(\vec{X}))$. For each parameter instance, we construct 10 Markov chains (labelled as Expt-$i \:\:\forall \:i \:\:\in \{0,9\}$), and each such chain has 10,000 sample configurations drawn. For each proposal, we plot the error in the convergence of the distribution averaged over all 10 Markov chains with error bars representing the respective variances. It is clear that proposals like $A, B, C$ (especially $A$) have higher mean and variance compared to quantum-enabled proposals $D-H$. This overall performance can be best understood from Fig.\ref{fig:2nd_panel}(d) where the bars show the same error in Fig.\ref{fig:2nd_panel}(c) averaged not only overall 10 chains but also overall $100$ parameter instances. We see that compared to classical proposals, all the quantum-enabled proposals $D-H$ not only have smaller errors but also have fewer fluctuations, indicating robustness to the claim.
It must be mentioned that all the proposals $A-H$ are symmetric i.e. $P_{prop}(\vec{v}^{(i+1)} |\vec{v}^{(i)}) = P_{prop}(\vec{v}^{(i)} |\vec{v}^{(i+1)})$ which precludes the need to include the factor 
$\frac{ P_{prop}(\vec{v}^{(i+1)} | \vec{v}^{(i)})}{P_{prop}(\vec{v}^{(i)}| \vec{v}^{(i+1)})})$ in Eq.\ref{eq:trans_matrix} explicitly. Among $D-H$, we see all the quantum-enabled proposals work similarly except $F$, highlighting decreasing the evolution time $\tau$ has a detrimental effect. However extreme enhancement in $\tau$ comes with the cost of deep quantum circuits as $N_{trot}=\frac{\tau}{\delta t}$
is enhanced, thereby making the execution error-prone for NISQ implementations. Besides, alteration in $\delta t$ to compensate for this increase would lead to higher Trotter errors. Keeping a balance between performance and cost for circuit implementation, we use the proposal $H$ (computed using $\delta t =0.2$) henceforth for all our computations. 


\item Once a configuration list $\{\vec{v}^{(i)}\}_{i=1}^{N_s}$ is available from the MCMC protocol in the above step, one can then compute a faithful estimate of $\mu_{\langle H \rangle(\vec{X})}$ for the incumbent instance of parameters $\vec{X}$ (and hence $(\vec{l}(\vec{X}), \vec{J}(\vec{X}))$). Since we obtain this estimate at the $k$-th iteration, let us denote this as 
$\mu_{\langle H \rangle(\vec{X})}^{(k)}$. This estimate is then compared with $\mu_{\langle H \rangle(\vec{X})}^{(k-1)}$, and if the difference is below a preset threshold, then the algorithm can return the current estimate, directly. If not, then the parameters of the network $G_1$ are upgraded for running the next iteration in the protocol. As used in this article, this upgrade is gradient-assisted. In fact gradients associated with each component $X_i$ denoted as $\partial_{X_i}\langle H \rangle (\vec{X})$ is analytical and can be expressed as 
\begin{align}
\partial_{X_i}\langle H \rangle (\vec{X}) = \langle(\mathcal{D}_{X_i} \odot H) \rangle -\langle \mathcal{D}_{X_i} \rangle \langle H \rangle \label{eq:loc_grad}
\end{align}
where all $\langle ....\rangle$ are computed over ${\phi(\vec{v}, \vec{X})}$ with a kernel $\kappa(\vec{v}, \vec{X})$. The $\odot$ in the first term represents element-wise Hadamard product, and $\langle H \rangle$ is the same as in Eq.\ref{eq:avg_energy_surrogate}. Each quantity can be replaced with their respective sample estimates computed using the configuration list $\{\vec{v}^{(i)}\}_{i=1}^{N_s}$. For further information and the definitions of the local operator $\mathcal{D}_{X_i}$, see Section E in Appendix . 
\end{enumerate} 


\section{Resource Requirements}

Given a construction of $G_1$ with $n$ neurons in the visible layer and $m$ in the hidden layer, the protocol starts with a random instance of the parameter vector $\vec{X}$. Since $\vec{X}\in \mathbb{C}^{n+m+nm}$, the number of independent variables to track is $O(2(nm +n+m))= O(nm)$ where the factor of 2 accounts for phase and amplitude of each component of $\vec{X}$. This instance of $\vec{X}$, defines elements of $\rho^v_v (\vec{X})$ in $G_1$ using which optimized parameter set $(\vec{l}(\vec{X}), \vec{J}(\vec{X}))$ of the surrogate $G_2$ is constructed. This step is achieved through fitting pairs $(\rho^v_v(\vec{X}), \phi_{\vec{v}}(\vec{l}(\vec{X}), \vec{J}(\vec{X})))$ using just $O(n^2)$ configurations and hence is polynomially efficient. The next step involves sampling configurations from $\phi_{\vec{v}}(\vec{l}(\vec{X}), \vec{J}(\vec{X}))$ using the quantum-enabled proposal $\rm{H}$. This step requires a quantum circuit of $n$-qubits, which is not only cheaper than the circuit width for the previous proposals \cite{sajjan2021quantum, Xia_2018}but also independent of $m$. The initial state preparation to encode a given seed configuration requires $O(n)$ single-qubit $\sigma_x$ gates on appropriate qubits, as said before, and all such gates can be executed in a parallel fashion within a single layer, thereby rendering the depth of this specific step to be $O(1)$. In the subsequent Trotterization scheme, one attempts to approximate the overall unitary $U(\tau, \gamma) = e^{-i (\gamma\sum_i l_i(\vec{X})\sigma_z^{(i)} + 
\gamma \sum_{i,j} J_{ij}(\vec{X})\sigma_z^{(i)}\sigma_z^{(j)} + 
(1-\gamma)\sum_i \sigma_x^{(i)})\tau}$ using $p$-th order product formulas with $N_{trot}$ steps. This incurs an error \cite{childs2021theory, yi2022spectral, PhysRevResearch.6.033309} of $O\Big(\frac{(\sum_{i=1}^{n^2+2n} ||\hat{\chi}_i||_1 \tau)^{p+1}}{N_{trot}^p}\Big)$ where $\hat{\chi}_i \in (\gamma l_i(\vec{X})\sigma_z^{(i)},\:\: \gamma J_{ij}(\vec{X})\sigma_z^{(i)}\sigma_z^{(j)},\:\: (1-\gamma)\sigma_x^{(i)})$ . To have this error bounded within a preset threshold of say $\epsilon_{trot}$ it suffices to have $N_{trot} \approx O\Big(\frac{(\sum_{i=1}^{n^2+2n} ||\hat{\chi}_i||_1 \tau)^{1+ \frac{1}{p+1}}}{\epsilon_{trot}^{\frac{1}{p}}}\Big)$. In our case, we use $p=1$, but the protocol is systematically improvable for higher product formulas at the cost of enhancing $N_{trot}$ as seen above \cite{childs2021theory}. For an arbitrary $p$, within each Trotter layer we have $O(n)$ single qubit gates and $O(n^2)$ two-qubit gates specifically of type $R_{zz}$ to simulate the effect of $e^{-i ( 
\gamma \sum_{i,j} J_{ij}(\vec{X})\sigma_z^{(i)}\sigma_z^{(j)})\tau}$.If CNOT is chosen as the basic entangling gate, then we show (see Section F in Appendix) that implementation of each such $R_{zz}$ requires two CNOTs. This makes the total number of CNOTs per Trotter layer $O(2n^2)$. Similar decomposition is also available in see Section F in Appendix if the native-two-qubit basis gate is an ECR. However, to estimate the runtime for parallel execution of quantum gates, one also has to estimate circuit depth and not just the total number of gates. Since in each Trotter layer, it is possible to implement all single-qubit gates in a parallel fashion, and they contribute a depth of $O(1)$ whereas out of the $O(n^2)$ two-qubit gates, one can implement $O(n/2)$ (if $n\%2=0$, otherwise $n/2 -1$) gates in a parallel fashion on non-contiguous qubits at a given time. This makes the circuit-depth to be linear, i.e., $O(n)$ in each Trotter layer. The circuit at the end is measured on a computational basis to retrieve a configuration that can be used to estimate local energy (see 
Eq.\ref{eq:loc_energy}) and hence $\mu_{\langle H \rangle (\vec{X})}$
(see Eq.\ref{eq:sample_mean_energy}). Apart from this, one will also estimate $O(2(nm+n+m))$ gradient terms corresponding to each parameter for subsequent updates. Thus, the total number of observables to measure is also $O(2(nm+n+m)+1)$. Each of these observable can be estimated with a precision of $\epsilon_{obs}$ by querying the above quantum circuit $O(N_s)$ times where $N_s \approx O(\frac{Var(\hat{O})}{\epsilon_{obs}^2})$  with $\hat{O}$ either being the local energy or the respective gradient observables. This query complexity of $O(N_s)$ is fixed and independent of system size $n$. This must be contrasted with the repeat-until success protocol and the wastage of shots due to mid-circuit conditional measurements in the previous scheme\cite{Xia_2018} where the successful collapse of the ancilla register in state $|1\rangle$ was essential to proceed. An attempt to avoid that and accelerate runtime necessitated extra qubits, making the circuit width $O(nm)$. In this scheme, all of these issues are completely alleviated. 


To compute storage complexity, one must emphasize that within each cycle of parameter update, the algorithm necessitates only storing $O(nm)$ parameters in $\vec{X}$ and $(\vec{l}(\vec{X}), \vec{J}(\vec{X}))$ combined, the corresponding gradient vector $\partial_{X_i} \langle H \rangle (\vec{X})$ having $O(nm)$ entries and a single estimate of local energy $\mu_{\langle H \rangle (\vec{X})}$ to assess self-convergence. Unlike in previous versions of quantum-enabled training of neural-network ansatz  \cite{Xia_2018, sajjan2021quantum}, the full quantum state is never retrieved from quantum memory (necessitating exponential storage) as phase information therein was injected classically. Herein, the parameters are intrinsically complex, making them inherently capable of treating the amplitude and phase field of the quantum state holistically in a unified fashion. One must also note 
even storing the proposal matrix $P_{prop}(\vec{v}^{i+1}|\vec{v}^{i})$ is not necessary due to the fact that is symmetric, thereby cancelling the factor $\frac{P_{prop}(\vec{v}^{i+1}|\vec{v}^{i})}{P_{prop}(\vec{v}^{i}|\vec{v}^{i+1})}$ in Eq.\ref{eq:trans_matrix}. Even for the computation of local energy, if the driver Hamiltonian $H=\sum_s^{\Gamma} c_s P_s$ 
where $P_s$ as defined before are Pauli words and $c_s \:\: \in \mathbb{R}$, then for a given input configuration $\vec{v^{(i)}}$, one would need at most $\Gamma$ different configurations (one for each $P_s$) to scan. It must be noted to compute such configurations; it is not necessary to store the driver $H \in \mathcal{L}(C^{2^n})$ in a dense format as its action on a given $\vec{v^{(i)}}$ can be adjudicated from the Pauli words $P_s$ analytically. Using this stored set of $O(\Gamma)$ configurations one then computes $\rho^v_{v^\prime}$ analytically (see Eq.\ref{eq:Ising_state}) and evaluate $E_{loc}(\vec{v}, \vec{X})$. Similar considerations apply for computing $\mathcal{D}_{X_i}$ in analytical gradient estimation. At the final step of the protocol, one needs to store $O(2nm+2n+2m+1)$ variables corresponding to the converged estimate of $\mu_{\langle H \rangle (\vec{X})}$ and converged parameter vector $\vec{X}$ for computing other properties of the target state in subsequent analysis.

\section{APPLICATIONS}
In this section, we shall study several applications wherein the ground state of a given driver Hamiltonian will be learned using the protocol delineated above. We shall train the parent and surrogate network by sampling from the Quantum time-homogeneous Trotterized proposal (H) for all analysis henceforth, given its ease of implementation and its superiority in convergence, as illustrated above. For each computation, we shall train the network till self-convergence in the estimated energy is attained starting from random initial parameters. For the examples illustrated in this report this is typically achieved $\le 100$ training epochs. To account for statistical fluctuations in self-converged energy, we shall henceforth always plot the mean energy reported for the last 20 iterations from our algorithm and report the corresponding standard deviation within error bars. Such energies will be denoted henceforth as $\rm{E_{RBM}}$ and the corresponding plot as simply $\rm{RBM}$. In the absence of self-convergence within a preset error threshold, we also warm-start from the converged parameter set of a similar, closely related problem. This typically accelerates not only the rate of convergence but is also known to enhance accuracy \cite{sajjan2021quantum, sureshbabu2021implementation}. At each training epoch, we shall use 10,000 samples with an initial burn-in of 10\% of the total samples for estimating average energy and gradients. The efficacy of the learning technique will be adjudicated by not only investigating the deviation of the finally converged energy with respect to exact diagonalization ($\rm{E}_{ED}$) but also by investigating the variance of the energy to ensure proximity to a true energy eigenstate. It must be emphasized that in MCMC-enabled schemes like ours, computation of the energy variance is inexpensive and obviates raising the Hamiltonian to higher powers. The variance of energy can be simply estimated through sampling configurations as 
\begin{eqnarray}
\sigma_H(\vec{X}) &\approx& \rm{Var}(H) \nonumber\\ &=& \frac{\sum_{\vec{v} \sim \phi} \kappa(\vec{v},\vec{X})(E_{loc}(\vec{v},\vec{X})-\mu_{\langle H \rangle(\vec{X})})^2}{\sum_{\vec{v} \sim \phi} \kappa(\vec{v}, \vec{X})} 
\label{eq:sample_var}
\end{eqnarray}
where $\rm{Var}(H)$ is the estimated sample variance and $\sigma_H(\vec{X})$ is the corresponding true population variance. Since $(E_{loc}(\vec{v},\vec{X}),\mu_{\langle H \rangle(\vec{X})},  \kappa(\vec{v}, \vec{X}))$ are all necessarily computed in the algorithm at every step, estimation of Eq.\ref{eq:sample_var} is naturally hassle-free. Besides being a diagnostic tool for the quality of the final converged state, computation of the sample variance serves a different purpose, too, wherein it can be used to even improve the final converged energy through zero-variance extrapolation (ZVE) \cite{fu2024variance,medvidovic2024neural}. The protocol hinges on the requirement that the true ground state will have zero variance and hence relies on the y-intercept of the $(\frac{\rm{Var(H)}}{\rm{E}_{\rm{RBM}}^2},\rm{E}_{\rm{RBM}})$ plot as the true and improved measure of the converged energy. As seen in the data illustrated below, and as has been noted before\cite{robledo2024chemistry,kaliakin2024accurate}, near the end of the training protocol such a $\frac{\rm{Var(H)}}{\rm{E}_{\rm{RBM}}^2}$ vs $\rm{E}_{\rm{RBM}}$ plot would exhibit a linear trend thereby rendering simple linear extrapolation techniques sufficient. We shall refer to estimated energies refined in this process as $\rm{E}_{\rm{RBM+ZVE}}$ and the corresponding plots as $\rm{RBM+ZVE}$. It must be emphasized that ZVE can only provide corrections to the estimated energy and not to the parameters of the converged state. To judge the quality of the converged state, we shall use the parameter set corresponding to the minimum converged energy within the last 20 iterations to compute subsequent properties. 

\begin{figure}[ht!]
    \centering
    \includegraphics[width=1.0\linewidth]{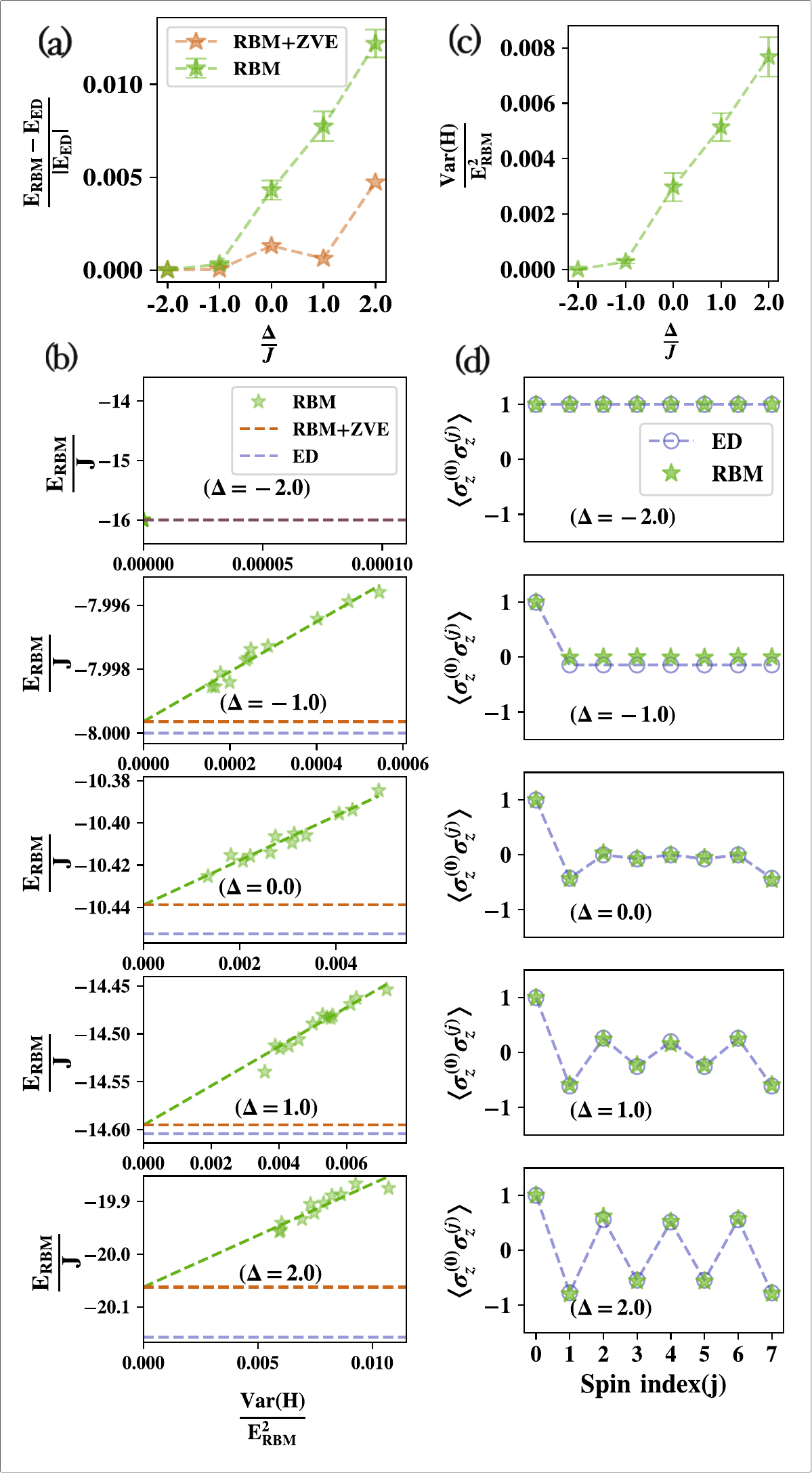}
    \caption{The relative errors of the estimated ground-state energy $\rm{E}_{\rm{RBM}}$ versus the exact energy $\rm{E}_{\rm{ED}}$ for the XXZ model are analyzed for $\frac{\Delta}{J} \in (-2.0, -1.0, 0, 1.0, 2.0)$. (a) The green curve ($\rm{RBM}$) shows the mean energy from the last 20 iterations, with error bars indicating respective standard deviations to account for fluctuations in self-convergence. The orange curve ($\rm{RBM+ZVE}$) demonstrates improved relative errors using zero-variance extrapolation (ZVE), reducing errors by a factor of 3 to within 0.5\%.(b) ZVE extrapolation is shown by plotting $\rm{E}_{\rm{RBM}}$ against $\frac{\rm{Var(H)}}{\rm{E}_{\rm{RBM}}^2}$, with extrapolated energies (orange lines) compared to exact values (blue dashed lines). The extrapolation scheme is shown for each $\frac{\Delta}{J} \in (-2.0, -1.0, 0, 1.0, 2.0)$ (c) The average $\frac{\rm{Var(H)}}{\rm{E}_{\rm{RBM}}^2}$ over the last 20 iterations is presented for the same $\frac{\Delta}{J}$.(d) The static two-point correlation functions $(\langle \sigma_z^{(0)}\sigma_z^{j}\rangle)$ from RBM match the exact state across phases: antiferromagnetic ($\frac{\Delta}{J} \ge 1.0$), XY ($-1.0 \leq \frac{\Delta}{J} \leq 1.0$), and ferromagnetic ($\frac{\Delta}{J} \le -1.0$).}

    \label{fig:3rd_panel}
\end{figure}

The first example we shall treat corresponds to the celebrated spin-$\frac{1}{2}$-Heisenberg XXZ model as the driver. The model is important for understanding quantum magnetism due to the first justification of exchange interaction and its subsequent analytical solvability using Bethe ansatz \cite{bethe1931theorie,batchelor2007bethe}. It has undergone renewed interest in recent years due to its recent experimental\cite{scheie2021detection} and theoretical\cite{ljubotina2019kardar,dupont2021spatiotemporal,PhysRevB.101.045115,ye2022universal} reports claiming that it can support anomalous superdiffusive spin transport which can have unprecedented consequences in spintronics\cite{wolf2001spintronics,guo2019recent,garcia2018spin}. In its 2D variant, it is also extensively used in modelling ground states of $\mathcal{Z}_2$ spin liquids like in minerals like Herbertsmithite \cite{kagomel}. Besides, it is used as a benchmark for the implementation of quantum computing algorithms \cite{PhysRevLett.131.060406,PhysRevResearch.5.013183}, cold atom based simulation \cite{jepsen2021transverse,PhysRevLett.125.240504}, trapped-ion based simulation \cite{PhysRevLett.92.207901,RevModPhys.93.025001} or studies connected to phase transitions\cite{chen2007fidelity} and thermalization \cite{rigol2013dynamics}. It also acts as a platform for the active development of condensed matter physical tools for comparing with higher spin analogues like spin-$1$ systems wherein topological Haldane phases can be seen with exponentially decaying correlation functions of local observable and an excitation gap\cite{PhysRevB.100.144423}. In fact, several variants of the model have now been well studied, including the inclusion of integrability terms within like perturbation with staggered field\cite{huang2013scaling,vasseur2016nonequilibrium} or disorder\cite{avishai2002level} which can simulate real materials under certain circumstances.

\begin{figure*}[ht!]
    \centering
    \includegraphics[width=0.99\textwidth]{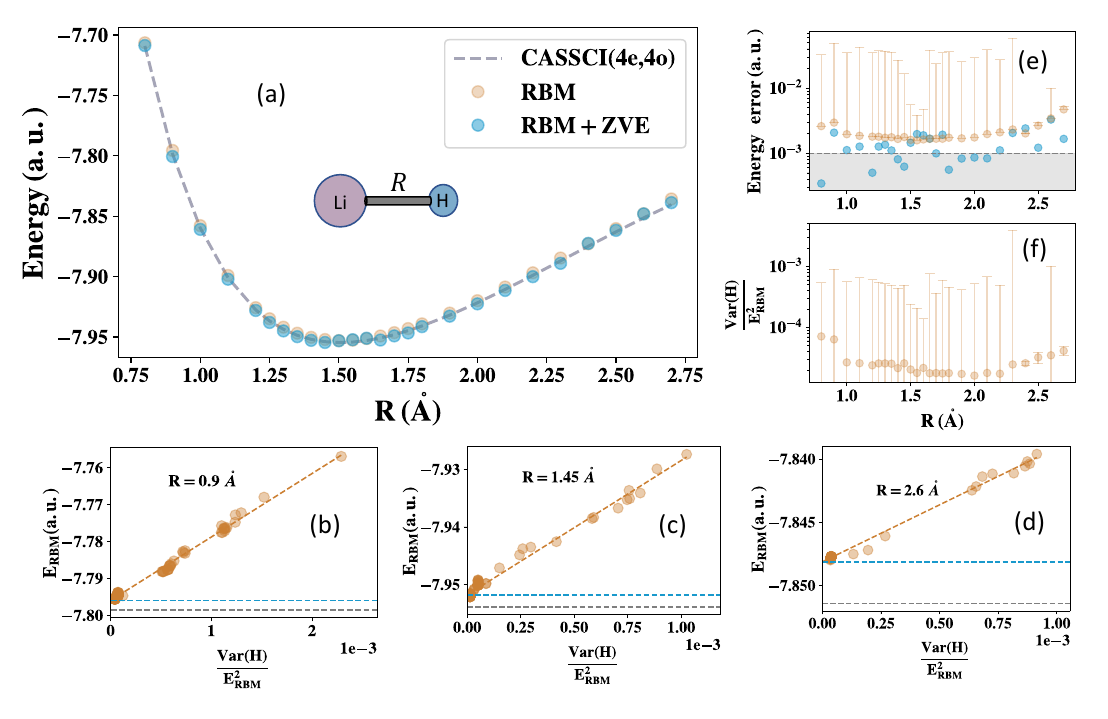}
    \caption{(a) The ground-state energy (a.u.) vs.bond length ($\rm{R} \:(\mathring{A})$) of $\:\rm{Li-H}$ is shown. For computations without ZVE (orange), the minimum energy value from the last few data points in the training protocol is used. For RBM + ZVE (blue), the y-intercept from the extrapolation scheme (as described in the text) is used. Both methods, especially RBM + ZVE, show good agreement with the exact CASSCI results (dashed gray line) in an active space of (4e,4o) across different bond length .(b-d) The RBM-ZVE procedure is illustrated for three bond lengths from distinct regions of the surface: (b) $\rm{R} = 0.9 \:\mathring{A}$, (c) $\rm{R} = 1.45 \:\mathring{A}$, and (d) $\rm{R} = 2.6 \:\mathring{A}$. In each case, the final extrapolated energy (blue horizontal dashed line) is compared to the exact CASSCI energy (gray dashed line). (e) The absolute energy errors (a.u.) relative to CASSCI are shown for both methods. RBM-ZVE (blue) achieves errors at/below the chemical threshold ($\le 10^{-3}$ a.u., shaded in gray) for a considerable number of bond lengths. Error bars represent the standard deviation of energy errors over the last few points used in RBM-ZVE. (f) The variance $\frac{\rm{Var(H)}}{\rm{E}_{\rm{RBM}}^2}$ vs. $\rm{R} \:(\mathring{A})$ is plotted. For each bond length, parameters corresponding to the minimum energy among the last few points are used, with error bars showing the standard deviation of $\frac{\rm{Var(H)}}{\rm{E}_{\rm{RBM}}^2}$. The nearly vanishing variance confirms that the final state approximates the ground stationary state well.}
    \label{fig:4th_panel}
\end{figure*}

The Hamiltonian for the driver is 
\begin{eqnarray}
H = \sum_{\langle i,j \rangle}^N J(\sigma_x^{(i)}\sigma_x^{(j)} + \sigma_y^{(i)}\sigma_y^{(j)}) + \Delta \sigma_z^{(i)}\sigma_z^{(j)}
\label{eq:XXZ_driver}
\end{eqnarray}
where $<i,j>$ indicates nearest-neighbor interaction and $\sigma_{\alpha}^{({i})}$ denotes a Pauli spin operator ($\alpha \in \{x,y,z\}$) at $i$-th spin and $(\Delta, J) \in \mathbb{R}^2$. It is well-known that the ground state of the Hamiltonian in Eq.\ref{eq:XXZ_driver} is anti-ferromagnetic when $\frac{\Delta}{J} \ge 1$, ferromagnetic when $\frac{\Delta}{J} \le -1$, and in an XY phase with vanishing local order \cite{PhysRevB.93.054417} when $-1 \le \frac{\Delta}{J} \le 1$. At $\frac{\Delta}{J}=-1$, the phase transition between XY phase and the ferromagnet is first-order, whereas at $\frac{\Delta}{J}=1$, the phase transition between XY phase and the anti-ferromagnet is a Kosterlitz-Thouless infinite-order one \cite{PhysRevA.85.052128}. We shall reproduce the ground state energy and properties of the system in all three regimes using this newly developed algorithm. For computations, we use $N=8$ spins and set $J=1$. The values of $\Delta$ we choose as representative examples for various phases are $[-2.0, -1.0, 0.0, 1.0, 2.0]$. In Fig.\ref{fig:3rd_panel}(a), we plot the relative energy error for our algorithm using exact diagonalization as reference labeled as RBM in green. For each $\Delta$ the relative energy error plotted is averaged over last 20 iterations to account for self-convergence with the errorbar being the corresponding standard deviation. We see that when RBM+ZVE is applied, the relative errors decrease nearly by 2.5 times, as seen in orange. In such a case, the relative errors for each of the phases are $<5\times10^{-3}$. The procedure for ZVE is illustrated in Fig.\ref{fig:3rd_panel}(b) for each of the $\Delta$ values. In the respective panels, the exact energy from ED is depicted with blue dashed lines, and the extrapolated RBM+ZVE energy is denoted in orange to show its deviation from ED. In Fig.\ref{fig:3rd_panel}(c) we plot the averaged 
$\frac{\rm{Var(H)}}{\rm{E}_{\rm{RBM}}^2}$ from the respective panels in Fig.\ref{fig:3rd_panel}(b) with the standard deviation as the errorbar. In Fig.\ref{fig:3rd_panel}(d) we plot static two-point spin correlation function $\langle \sigma_z^{(0)}\sigma_z^{j}\rangle$ of the first spin (labelled as index '0') and the $j$-th spin along the lattice for various values of $\Delta$ using the converged parameter set corresponding to minimum reported energy in the last 20 iterations plotted in Fig.\ref{fig:3rd_panel}(a). We see that for each $\Delta$, the two-point correlation functions are appropriately reproduced in  Fig.\ref{fig:3rd_panel}(d) when compared with the ground state from exact diagonalization. 

\begin{figure*}[ht!]
    \centering
    \includegraphics[width=0.99\textwidth]{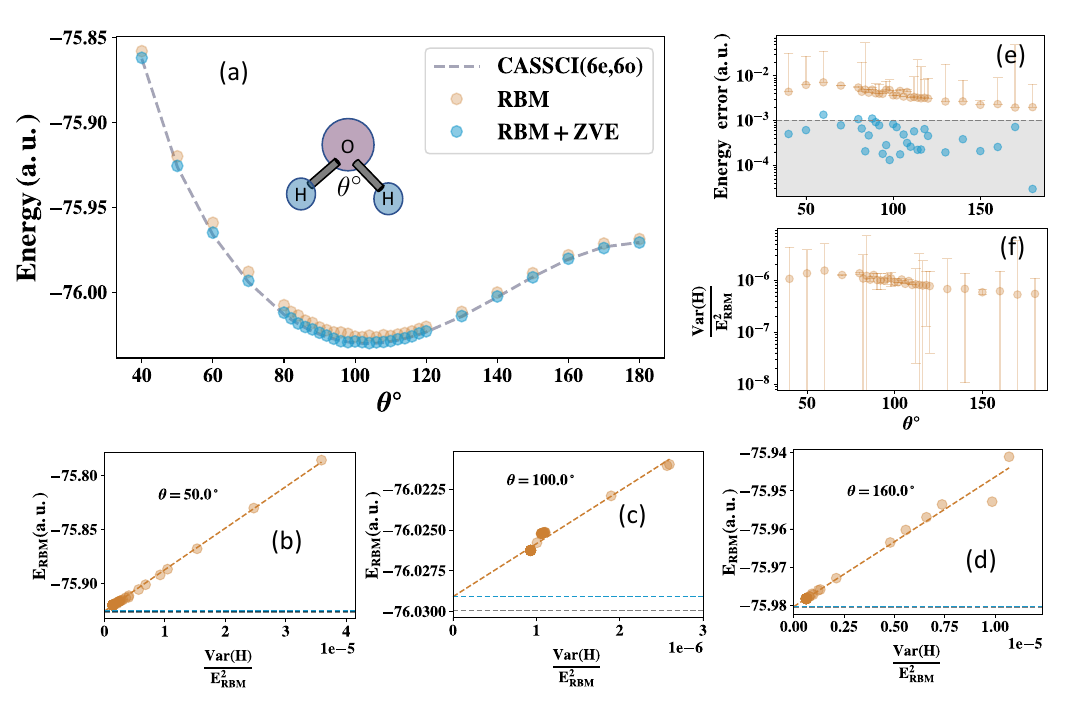}
    \caption{(a) The ground-state energy (a.u.) vs. $\angle \:\:\rm{H-O-H}$ ($\theta^{\circ}$) is shown. For computations without ZVE (orange), the minimum energy value from the last few data points in the training protocol is used. For RBM + ZVE (blue), the y-intercept from the extrapolation scheme (as described in the text) is used. Both methods, especially RBM + ZVE, show good agreement with the exact CASSCI results (dashed gray line) in an active space of (6e,6o) across bond angles.(b-d) The RBM-ZVE procedure is illustrated for three bond angles from distinct regions of the surface: (b) $\theta=50^\circ$, (c) $\theta=100^\circ$, and (d) $\theta=160^\circ$. In each case, the final extrapolated energy (blue horizontal dashed line) is compared to the exact CASSCI energy (gray dashed line). (e) The absolute energy errors (a.u.) relative to CASSCI are shown for both methods. RBM-ZVE (blue) consistently achieves errors at/below the chemical threshold ($\le 10^{-3}$ a.u., shaded in gray) for most bond angles, including those far from equilibrium geometry. Error bars represent the standard deviation of energy errors over the last few points used in RBM-ZVE. (f) The variance $\frac{\rm{Var(H)}}{\rm{E}_{\rm{RBM}}^2}$ vs. $\angle \:\:\rm{H-O-H}$ ($\theta^{\circ}$) is plotted. For each angle, parameters corresponding to the minimum energy among the last few points are used, with error bars showing the standard deviation of $\frac{\rm{Var(H)}}{\rm{E}_{\rm{RBM}}^2}$. The nearly vanishing variance confirms that the final state approximates the ground stationary state well.}
    \label{fig:5th_panel}
\end{figure*}

The next set of examples that we shall study will be the modeling of electronic ground states of molecular systems. The usual framework for studying such systems is to invoke the Born-Oppenheimer approximation, where heavier atomic nuclei are considered frozen in space as classical point charges, and only the eigenvalue problem of electronic Hamiltonian is solved for such a fixed set of generalized nuclear coordinates (say denoted by $\vec{R}$). Using basis functions in real space following the Ritz-Galerkin discretization scheme, such an electronic Hamiltonian can be expressed as 
\begin{equation}
    H = \sum_{i,k}^{N_o} h_{ik}(\vec{R}) a_{i}^{\dagger} a_k + \sum_{ijkl}^{N_o} V_{ijkl}(\vec{R}) a_{i}^{\dagger} a_{j}^{\dagger} a_k a_l 
    \label{eq:elec_Ham}
\end{equation}
where $N_o$ is the single-particle spin-orbital basis rank and usually scales linearly with the number of electrons in the system. The terms $h_{ik}(\vec{R})$ and $V_{ijkl}(\vec{R})$ are the usual one and two-body electronic integrals expressed in the chosen basis and $a_{i}^{\dagger}$ being the usual creation operator responsible for introducing excitation in the $i$-th fermionic mode/spin orbital and $a_i$ being its hermitian adjoint \cite{helgaker2013molecular}. Jordan-Wigner (JW) transformation of the operators $(a_{i}^{\dagger},a_i)$ in terms of Pauli words is initiated to respect their anti-commutation\cite{bauer2020quantum}. This converts the electronic hamiltonian in Eq.\ref{eq:elec_Ham} into a non-local spin hamiltonian of the form $H=\sum_s c_s(h_{ik}(\vec{R}), V_{ijkl}(\vec{R})) P_s$ where $P_s = \otimes_{q=1}^n \sigma_\alpha^{(q)}, \: \alpha \in \{x,y,z,0\}$ and $c_s(h_{ik}(\vec{R}), V_{ijkl}(\vec{R})) \in \mathbb{R}$. This is the starting point of our algorithm. Number of such Pauli words is usually $O(N_{o}^4)$ but can be significantly reduced to $O(N_{o}^p)$ where $2\le p \le 3$ by tensor factorizations or use of basis functions with vanishing overlap\cite{PRXQuantum.2.030305, PhysRevX.8.011044}.

For benchmarking computational accuracy offered by our algorithm, we shall choose two molecular examples - the first being bond-stretching of a prototypical example like $\rm{LiH}$ and the second being angular distortion of the $\angle \:\:\rm{H-O-H}$ in $\rm{H_2O}$. We shall use $\rm{ccPVDz}$ basis set for the atomic orbitals for both systems. After an initial Hartree-Fock calculation to define the respective molecular orbitals (MOs), we used a CASSCI Hamiltonian. For
$\rm{H_2O}$ we use an active space of (6e,6o) (12 spin orbitals) whereas for $\rm{LiH}$ we use an active space of (4e,4o) (8 spin orbitals) . These active spaces are chosen from the frontier set of MOs to allow energy-efficient excitations beyond the mean-field determinantal wavefunction. We saw that these active spaces resulted in a smooth potential energy surface (PES) with all the principal features accurately captured when compared against a bigger active space. Since the Hamiltonian generated in these active spaces is subsequently JW transformed and fed as input into our algorithm, energy from the exact diagonalization of this reduced Hamiltonian (CASSCI) in the chosen active space will be used as ground truth for benchmarking.  

In Fig.\ref{fig:4th_panel}(a) we plot the results of the potential energy surface (PES) of the ground-state energy (a.u.) vs. length of the $\rm{Li-H}$ bond ( $\rm{R} \:(\mathring{A})$). Similar plot is displayed in Fig.\ref{fig:5th_panel}(a) for the potential energy surface (PES) of the ground-state energy (a.u.) vs. $\angle \:\:\rm{H-O-H}$ ($\theta^{\circ}$) in $\rm{H_2O}$. In either case, results from both flavors of computation i.e. usual RBM (orange in both plots) and RBM+ZVE (blue in both plots) are shown and when compared against the CASSCI results (gray dashed line in both cases) in the chosen active space shows considerable agreement, especially RBM + ZVE. In Fig.\ref{fig:4th_panel} (or Fig.\ref{fig:5th_panel}) (b-d) we illustrate the procedure for zero-variance extrapolation at three different bond lengths (bond angles) for $\rm{LiH}$ ($\rm{H_2O}$) across three different regimes of the PES drawn in Fig.\ref{fig:4th_panel} (Fig.\ref{fig:5th_panel})(a) using few points towards the end of the training process. We see that the extrapolated energy (blue dashed horizontal line in both figures) is always closer to the exact energies (gray dashed horizontal line in both figures) than those at the individual iterations explaining why ZVE produces more accuracy we see in the subfigure (a) for respective panels of Fig.\ref{fig:4th_panel} or Fig.\ref{fig:5th_panel}. This is further bolstered in Fig.\ref{fig:4th_panel} (Fig.\ref{fig:5th_panel})(e) where the energy errors (a.u.) with respect to CASSCI are reported. It is clearly seen in both the plots for the respective systems that for a considerable number of parameter values, RBM+ZVE produces energy errors below the chemical accuracy threshold (shaded in gray) even with conservative resource requirements (see Section IV). The points in orange in Fig.\ref{fig:4th_panel}(Fig.\ref{fig:5th_panel}) (e), denote the energy error from the minimum energy obtained over the last few iterations as used in (b-d) for each figure, with the error bar denoting the respective fluctuation. In Fig.\ref{fig:4th_panel} ( Fig.\ref{fig:5th_panel})(f) we plot the $\frac{\rm{Var(H)}}{\rm{E}_{\rm{RBM}}^2}$ from the last few points in (b-d) of respective panels. The low magnitude of this quantity in either case shows that the RBM training has produced a final state which is a good approximant to the energy eigenstate.


In this work, we present an algorithm for training neural network representations of quantum states using quantum-assisted methods. The algorithm efficiently approximates a target stationary state given a system Hamiltonian. We provide a systematic, polynomially efficient mapping of energy-based generative models, such as Restricted Boltzmann Machines (RBMs), to surrogate networks. By analytically demonstrating that the surrogate can model any arbitrary probability density, the approach maintains generality and is extendable to other neural quantum states (NQS) for $n$-qubit systems. Training exploits the surrogate network’s distribution via variational Monte Carlo to estimate energy and parameter gradients. Using a quantum circuit to generate bit strings distributed according to the surrogate state improves mixing times, reduces burn-in, and enhances the quality of sampled distributions. The quantum circuit, implemented with a Trotterization scheme, operates efficiently with $O(n)$ and $O(\tau n)$ gate depths and requires only $O(N_s)$ queries, ensuring scalability. Our method captures both amplitude and phase of the target state while requiring only $O(mn)$ storage, where $m = O(\alpha n)$ and $\alpha \in \mathbb{R}$, surpassing previous approaches that required full distribution storage and additional classical preprocessing for the phase. We validate the algorithm through ground-state learning of diverse systems, including local spin models and non-local electronic Hamiltonians, achieving high accuracy. With zero-variance extrapolation, implemented at no additional cost, the method achieves even greater precision, highlighting its versatility. 

Quantum state-learning is a critical focus in many-body physics, aiming to efficiently represent quantum states within exponentially large Hilbert spaces. Over decades, various classical approximate methods have emerged, particularly in chemistry, where wave function-based approaches like coupled-cluster (CC) theory\cite{bartlett2007coupled} and perturbative methods (e.g., MP2\cite{cremer2011moller}) have been widely applied to capture dynamic correlation. However, their reliance on a single-determinant ground state makes them inadequate for treating strong multireference/static electronic correlation seen in distorted molecular geometries, extended $\pi$-systems, reaction intermediates, and transition metal complexes. Methods like CASSCF \cite{olsen2011casscf}, which include all determinantal excitations within an active space, are often used for such cases. Its run-time scales as ${r \choose N_{\alpha}} \times {r \choose N_{\beta}}$, where $r = r_o + r_{uo}$ is the single-particle basis rank, and $r_o$($r_{uo}$) are the occupied(unoccupied) orbitals in the Hartree-Fock reference and $N_{\alpha}(N_{\beta})$ is the number of spin-up(down) electrons in the system. Such a factorial asymptotic complexity makes it computationally prohibitive for extension to larger systems. For example, even a modest active space of 18 orbitals and 18 electrons involves handling $O(10^9)$ determinants. Similarly, even CC methods, although comparatively cheaper, can still be computationally demanding, with costs scaling as $O(r_o^2 r_{uo}^4)$ for CCSD and $O(r_o^3 r_{uo}^4)$ for CCSD(T)\cite{helgaker2000coupled}. Similarly, for quantum spin models, tensor network methods\cite{orus2019tensor} like DMRG have proven to be an excellent choice for Hamiltonians with short-range interactions, particularly in 1D systems even though innovative tensor trains in higher dimensions exist too\cite{verstraete2004renormalization}. However, their efficiency diminishes in higher dimensions or systems with significant non-local quantum correlations, as the increasing number of non-trivial singular values necessitates severe truncation, compromising accuracy. 

Our method can also be compared to other quantum-enabled approaches for quantum state learning. Over the past few decades, a suite of quantum-enabled algorithms have been designed for a variety of tasks \cite{biron1999generalized, ekert1996quantum, montanaro2016quantum, selvarajan2023dimensionality, sajjan2023physics,kumaran2024random,gupta2022hamiltonian}. Similarly, even for quantum state learning, robust protocols such as quantum phase estimation (QPE) \cite{abrams1999quantum,aspuru2005simulated} and quantum singular value transformation (QSVT) \cite{gilyen2019quantum,dong2022ground} have been developed to approximate energy eigenpairs. However, these methods require deeper circuits, which expand the light cone of measured observables and reduce fidelity due to hardware noise in modern quantum devices \cite{de2021materials}. To address these issues and design algorithms suitable for near-term applications, significant effort has focused on leveraging the variational principle through parameterized low-depth unitary circuits, often referred to as ansatz. For chemical problems, common ansatz choices include unitary coupled cluster ansatz \cite{anand2022quantum} (UCCSD), hardware-efficient ansatz (HEA) \cite{kandala2017hardware}, and qubit-coupled cluster ansatz \cite{ryabinkin2018qubit}, symmetry adapted ansatz design\cite{gard2020efficient}. Despite being motivated physically, these approaches often demand prohibitive gate depths for reasonable accuracy. For instance, pristine UCCSD requires $O(r^4-r^5)$ gates and $O(r_o^2 r_{uo}^2)$ cluster amplitudes as parameters. Dynamic operator selection strategies, such as ADAPT-VQE \cite{grimsley2019adaptive} or its orbital-overlap adapted \cite{feniou2023overlap} routine, mitigate some of these issues by choosing operators based on their contributions to the energy gradient. However, these approaches often involve increased measurements and their expressibility is heavily limited by the preset operator pool which in turn, impacts gate count, algorithmic efficiency, and simulation accuracy. Additionally, systematic studies have highlighted the emergence of barren plateaus \cite{cerezo2021variational,larocca2024review}, raising concerns about the trainability of such models

Our algorithm, rooted in variational Monte Carlo, eschews many of the above-mentioned problems
and operates differently by using the circuit as a sample generator, thereby offering clear advantages.
Only $O(mn)=O(\alpha n^2)$ parameter gradients and a single energy term must be handled (for electronic problems $n=2r$ if a simple Jordan-Wigner mapping is used\cite{weidman2024quantum}) with a constant number of queries to the circuit defined over $O(n)$ gate depth and qubits. Besides, the mixing time for quantum-enhanced sampling is seen to be reduced by a factor of 3 compared to classical samplers, and unlike other schemes, direct computation of the transition probability matrix element is not necessary due to symmetry. Classically for electronic problems, variational Monte Carlo schemes in the past have mainly been done with a determinantal basis endowed with a Jastrow prefactor as correlator \cite{sorella2005wave,toulouse2007optimization}. Even though such schemes do not encounter fermionic sign problems, unlike other flavors of MC-based algorithms, \cite{toulouse2016introduction} their accuracy in the past was often limited by the expressibility of the ansatz used. However, with the architectural diversity of neural networks and provable guarantees of their expressibility, as quantum states \cite{sharir2022neural, medvidovic2024neural}, the flexibility to design an adequate representation is now endless. Unlike variational methods designed with parameterized quantum circuits, training in our algorithm also uses well-defined analytical gradients (see Section E in Appendix), avoiding the need for param-shift rule-based estimations. This efficiency stems from the knowledge of the functional forms of the original and surrogate network distributions, a common feature for most neural quantum states (NQS). Future extensions of our method could explore dynamical evolution in topological materials \cite{bao2022light} or excited states in complex systems using recently analyzed VS scores \cite{wu2024variational}, such as in polaritonic chemistry \cite{ruggenthaler2023understanding} or in dynamical evolution of open quantum systems \cite{shivpuje2024designing}. For strongly correlated systems, incorporating symmetry and orbital rotations represents an unexplored direction and can be used to study molecular systems with controllable external field \cite{sajjan2018entangling}. Combined with advanced architectures like RNNs or transformers, these improvements could enable large-scale quantum-assisted computations, highlighting tantalizing prospects for the future of NQS.

\section*{Acknowledgments}
We acknowledge funding from the US Department of Energy, the Office of Science through the Quantum Science Center (QSC), a National Quantum Information Science Research Center.

\newpage
%


\widetext
\pagebreak
\appendix
\AtAppendix{\counterwithin{lemma}{section}}
\AtAppendix{\counterwithin{theorem}{section}}
\AtAppendix{\counterwithin{figure}{section}}



\def\mathunderline#1#2{\color{#1}\underline{{\color{black}#2}}\color{black}}

\section{Definitions/Notations}
We re-iterate the definitions considered in the main text and to be used subsequently for remaining parts of this Appendix.

\begin{itemize}
\item \(\vec{v} = (v_1, v_2, \dots, v_n)\): A binary configuration of the \(n\) visible units, where each \(v_i\) is a binary variable taking values in \(\{-1, 1\}\). Each configuration represents a possible classical state of the graph $G_2$ or the visible set of neurons of graph $G_1$. Unless otherwise specified, we shall sometimes omit the vector sign above and simply use \(v \in [0, \cdots, 2^n - 1]\). This is akin to representing the corresponding integer index for the configuration in binary (with -1 substituted by 0).

\item $\mathbb{V}=\{\vec{v}| \vec{v} \in \{-1,1\}^n\}$. One must note $|\mathbb{V}|=2^n$. 

\item \(\rho^v_{v'}\): The matrix element of the state \(\rho(\vec{X})\) corresponding to basis states \(v\) and \(v'\). It represents the density matrix of the state. 
\begin{align}
\rho^v_{v^\prime}(\vec{X}) \propto exp \left({-\beta \sum_i a_i v_i + \sum_i a_i^* v_i^\prime}\right)  \:\prod_{j=1}^m \Gamma_j (\vec{v},\vec{b}, \vec{W}) \:\Gamma_j (\vec{v}^\prime,\vec{b^*}, \vec{W^*}) 
\end{align} 
where, \(\Gamma_j(\vec{v}, \vec{b}, \vec{W}) = \cosh \left[ \beta \left(b_j + \sum_{i=1}^n W_{ij} v_i\right) \right]\) and
\begin{itemize}
    \item \(a_i\): A complex-valued parameter associated with the bias for the \(i\)-th visible unit in the system.
    \item \(b_j\): A complex-valued parameter associated with the bias for the \(j\)-th hidden unit in the system. 
    \item \(W_{ij}\): The weight (coupling) between the \(i\)-th visible unit and \(j\)-th hidden unit.
\end{itemize}

\item \(\vec{X}\): A shorthand representation for all the parameters \((\vec{a}, \vec{b}, \vec{W})\) that define the state \(\rho(\vec{X})\). These include complex-valued parameters associated with interactions and biases in the network.

\item \(\beta\): The inverse temperature parameter. 

\item \(H_v^{v'}\): The Hamiltonian matrix element of the driver (system being studied) between state configurations \(v\) and \(v'\).

\item The diagonal elements, \(\rho^v_v\), are real and satisfy the normalization condition \(\sum_v \rho^v_v = 1\). They can be seen as a probability distribution. 
\begin{eqnarray}
\rho^v_{v}(\vec{x}) &\propto& exp \left({-2\beta \sum_i \rm{Re}(a_i) v_i}\right)  \:\prod_{j=1}^m \left|\cosh\left(b_j + \sum_{i=1}^n W_{ij} v_i\right) \right|^2 
\end{eqnarray}

\item \(E_{\mathrm{loc}}(v) = \frac{\sum_{v'} \rho_{v'}^v H_v^{v'}}{\rho^v_v}\): The local energy of the driver for the configuration \(v\). It measures the effective energy associated with the state \(v\), computed by taking into account all the other configurations $v'$ to which it is connected via the Hamiltonian matrix element of the driver.

\item \(\phi(v)\): The distribution that represents the state of the surrogate network. It is defined as 
\begin{eqnarray}
\phi_{\vec{v}}(\vec{l}(\vec{X}), \vec{J}(\vec{X})) &\propto& e^{-\beta \sum_i l_i(\vec{X}) v_i + \sum_{ij} J_{ij}(\vec{X}) v_i v_j} 
\end{eqnarray}
Here $l_i \in (\vec{l}(\vec{X})$ denotes the on-site field at each spin of the surrogate network $G_2$ in main text and  $J_{ij}(\vec{X}) \in \vec{J}(\vec{X}))$ denotes the mutual coupling between a pair of spins of $G_2$. $\phi(v)$ satisfies the normalization condition $\sum_v \phi(v) = 1$.

\item \(\kappa(v)\): A configuration dependent pre-factor chosen to establish the equivalence between \(\rho_v^v\) and \(\phi(v)\), as defined above
\[\rho_v^v \approx \kappa(v) \phi(v)\]


\end{itemize}

\newpage

\section{Proof of generality of factorization of probability distribution}

For a given probability distribution $P:\mathbb{V}\to [0,1]$, we wish to prove Theorem II.1 in the main text. But to do that we would first like to prove certain lemmas where the key quantities of interest will be :
\begin{enumerate}
    \item $F_{\vec{a}}(\vec{v}) = v_1^{a_1} v_2^{a_2} \cdots v_n^{a_n} = \prod_i v_i^{a_i} \:\:\: \vec{a}=(a_1, a_2....a_n)\:\: with\:\: a_i \in \{0,1\} \:\: and \:\: \vec{v}\:\:\in \mathbb{V}$. Each entry
 represents an element of a monomial binary basis.
    \item $\chi = \{F_{\vec{a}}(\vec{v})|\vec{a} \in \{0,1\}^n\}$
\end{enumerate}

We first would like to establish that the set $\chi$
can act as a basis for the space of functions $f:\mathbb{V} \to \mathbb{R}$ . We show this in two different ways using the following lemmas

\begin{lemma} (Approach 1: Orthogonality of Monomial basis)\label{lem:ortho_mono}
    Let us define two vectors $\vec{a}, \vec{b} \in \{0,1\}^n$ such that for the monomials we have 
    \begin{align*}
        F_{\vec{a}}(\vec{v}) &= v_1^{a_1} v_2^{a_2} \cdots v_n^{a_n} = \prod_i^n v_i^{a_i}\\
        F_{\vec{b}}(\vec{v}) &= v_1^{b_1} v_2^{b_2} \cdots v_n^{b_n} = \prod_i^n v_i^{b_i} 
    \end{align*}
    Then, with respect to the inner product $\langle F_{\vec{a}}(\vec{v}), F_{\vec{b}}(\vec{v}) \rangle = \frac{1}{2^n} \sum_{\vec{v} \in V} F_{\vec{a}}(\vec{v}) F_{\vec{b}}(\vec{v})$, the monomials $F_{\vec{a}}(\vec{v})$ and $F_{\vec{b}}(\vec{v})$ are orthonormal i.e. $\langle F_{\vec{a}}(\vec{v}), F_{\vec{b}}(\vec{v}) \rangle =0$ if $\vec{a} \ne \vec{b}$ and 1 otherwise . 
\end{lemma}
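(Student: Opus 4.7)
The plan is to exploit the fact that each $v_i \in \{-1,1\}$ satisfies $v_i^2 = 1$, which collapses the product $F_{\vec{a}}(\vec{v}) F_{\vec{b}}(\vec{v})$ to a much simpler monomial, and then to evaluate the sum over $\mathbb{V}$ by factorizing across coordinates.

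First I would compute the pointwise product directly:
\begin{equation*}
F_{\vec{a}}(\vec{v}) F_{\vec{b}}(\vec{v}) = \prod_{i=1}^n v_i^{a_i + b_i}.
\end{equation*}
Because $v_i \in \{-1,1\}$ gives $v_i^2 = 1$, the exponent $a_i + b_i \in \{0,1,2\}$ can be replaced by its parity $(a_i \oplus b_i) \in \{0,1\}$ without changing the value. Introducing the symmetric difference $S(\vec{a},\vec{b}) = \{\, i : a_i \neq b_i \,\}$, this reduces to $\prod_{i \in S(\vec{a},\vec{b})} v_i$.

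Next I would substitute this into the inner product and factorize the sum coordinate-wise:
\begin{equation*}
\langle F_{\vec{a}}, F_{\vec{b}} \rangle \;=\; \frac{1}{2^n} \sum_{\vec{v} \in \mathbb{V}} \prod_{i \in S(\vec{a},\vec{b})} v_i \;=\; \frac{1}{2^n} \prod_{i=1}^n \left( \sum_{v_i \in \{-1,1\}} v_i^{\,\mathbf{1}[i \in S]} \right).
\end{equation*}
Each coordinate factor equals $2$ when $i \notin S$ (since $1+1=2$) and equals $0$ when $i \in S$ (since $-1+1=0$). Hence if $S(\vec{a},\vec{b}) \neq \emptyset$ (i.e.\ $\vec{a} \neq \vec{b}$), at least one factor vanishes and $\langle F_{\vec{a}}, F_{\vec{b}} \rangle = 0$; if $\vec{a} = \vec{b}$, every factor equals $2$, so the product is $2^n$ and the normalization yields $\langle F_{\vec{a}}, F_{\vec{a}} \rangle = 1$.

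There is no real obstacle here; the only care needed is the reduction $v_i^{a_i+b_i} = v_i^{a_i \oplus b_i}$, which hinges on $v_i^2 = 1$ and is the reason the $\{-1,1\}$ encoding (rather than $\{0,1\}$) makes the monomial basis orthonormal. Once this is in place, the factorization of the sum over $\mathbb{V}$ into a product of single-coordinate sums immediately delivers the claim, and the lemma follows.
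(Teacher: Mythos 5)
Your proposal is correct and follows essentially the same route as the paper: reduce the exponent $a_i+b_i$ to its parity $a_i\oplus b_i$ using $v_i^2=1$, then factorize the sum over $\mathbb{V}$ into a product of single-coordinate sums, each of which vanishes exactly when $a_i\neq b_i$. No gaps.
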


\begin{proof}
    Since $F_{\vec{a}}(\vec{v}):\mathbb{V}\to\mathbb{R}$, it is easy to show $\langle F_{\vec{a}}(\vec{v}), F_{\vec{b}}(\vec{v}) \rangle$ satisfies all three axioms of usual inner-products (linearity, positive semi-definiteness when $\vec{a}=\vec{b}$ and symmetry), thus we straightaway compute the inner product between two monomials $F_{\vec{a}}(\vec{v})$ and $F_{\vec{b}}(\vec{v})$:
    \begin{align*}
        \langle F_{\vec{a}}(\vec{v}), F_{\vec{b}}(\vec{v}) \rangle &= \frac{1}{2^n} \sum_{\vec{v} \in V} F_{\vec{a}}(\vec{v}) F_{\vec{b}}(\vec{v}) \\
        &= \frac{1}{2^n} \sum_{\vec{v} \in V} \prod_{i=1}^n v_i^{a_i \oplus_2 b_i} \\  
        &= \frac{1}{2^n} \sum_{\vec{v_1} \in \{1,-1\}}\sum_{\vec{v_2} \in \{1,-1\}}\cdot\cdot\sum_{\vec{v_n} \in \{1,-1\}} \prod_{i=1}^n v_i^{a_i \oplus_2 b_i} \\
        &= \left( \frac{1}{2} \sum_{v_1 \in \{1,-1\}}  v_1^{a_1 \oplus_2 b_1} \right)    \left( \frac{1}{2} \sum_{v_2 \in \{1,-1\}}  v_2^{a_2 \oplus_2 b_2} \right)    \cdots    \left( \frac{1}{2} \sum_{v_n \in \{1,-1\}}  v_n^{a_n \oplus_2 b_n} \right)\\
        &=\begin{cases}
        1 \:\:\:\:\:iff\:\:\:\:\:\: a_i = b_i \implies a_i \oplus b_i = 0 \:\:\forall i\in \{1,2,...n\} \\
        0 \:\:\:\: otherwise \:\:\: as\:\:\exists k \in \{1,2...n\}\:\: s.t  \:\:\: a_k \oplus b_k = 1 \implies \sum_{v_k \in \{1,-1\}} v_k = 0
    \end{cases}
    \end{align*}    

where `$\oplus_2$' is a modulo 2 sum, which behaves as an XOR operation since the variables $a_i$ and $b_i$ are binary. 
\end{proof}

\begin{lemma}(Approach 2 : Use of indicator functions)
\label{lem:indicator_formalism_LI}
Let us define an indicator function $I_{\vec{h}}(\vec{v}):\mathbb{V}\to \mathbb{R}$ as:
\begin{eqnarray}
    I_{\vec{h}}(\vec{v}) = \prod_k \frac{1}{2} \left( 1 + \frac{v_k}{h_k} \right) =
\begin{cases} 
 1 & \text{if } \vec{v} = \vec{h} \\
 0 & \text{if } \vec{v} \neq \vec{h} 
 \end{cases}
\end{eqnarray}
where $\vec{v}, \vec{h} \in \mathbb{V}$. The set \(\mathcal{S} = \{ I_{h^{(i)}}(\vec{v}) \mid h^{(i)} \in V \}\) act as a basis for the function space 
$f:\mathbb{V} \to \mathbb{R}$.
\end{lemma}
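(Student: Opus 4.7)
The plan is to verify that $\mathcal{S}$ is a basis by a standard dimension-plus-linear-independence argument, leveraging the delta-like behavior of the indicator functions stated in the lemma. The ambient space of functions $f : \mathbb{V} \to \mathbb{R}$ is a real vector space whose dimension equals the cardinality $|\mathbb{V}| = 2^n$ (one can identify each $f$ with its vector of values $(f(\vec{v}))_{\vec{v} \in \mathbb{V}} \in \mathbb{R}^{2^n}$). Since $\mathcal{S}$ is indexed by elements of $\mathbb{V}$, we have $|\mathcal{S}| = 2^n$ as well, so it suffices to establish either linear independence or spanning; once one is shown, the other follows from the matching cardinality.

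First I would establish linear independence of $\mathcal{S}$. Suppose there exist real scalars $\{c_{\vec{h}}\}_{\vec{h} \in \mathbb{V}}$ such that
\begin{equation*}
\sum_{\vec{h} \in \mathbb{V}} c_{\vec{h}}\, I_{\vec{h}}(\vec{v}) = 0 \quad \text{for all } \vec{v} \in \mathbb{V}.
\end{equation*}
Fix an arbitrary $\vec{h}^{\star} \in \mathbb{V}$ and evaluate the identity at $\vec{v} = \vec{h}^{\star}$. By the delta-function property already proven in the statement, $I_{\vec{h}}(\vec{h}^{\star}) = \delta_{\vec{h},\vec{h}^{\star}}$, so the sum collapses to $c_{\vec{h}^{\star}} = 0$. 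Since $\vec{h}^{\star}$ was arbitrary, every coefficient vanishes, giving linear independence.

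Next, I would write down the explicit spanning identity to make the basis claim constructive. For an arbitrary $f : \mathbb{V} \to \mathbb{R}$, define
\begin{equation*}
g(\vec{v}) \;=\; \sum_{\vec{h} \in \mathbb{V}} f(\vec{h})\, I_{\vec{h}}(\vec{v}).
\end{equation*}
Evaluating at any $\vec{v} = \vec{v}^{\star} \in \mathbb{V}$, the indicator property again collapses the sum to a single term, yielding $g(\vec{v}^{\star}) = f(\vec{v}^{\star})$. Hence $g \equiv f$ on $\mathbb{V}$, showing that $\mathcal{S}$ spans the whole function space, and the coefficients $\{f(\vec{h})\}$ serve as the (unique) expansion coefficients.

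There is no serious obstacle: the whole argument hinges entirely on the Kronecker-delta identity $I_{\vec{h}}(\vec{v}) = \delta_{\vec{h},\vec{v}}$, which was already verified in the statement of the lemma by direct factorization of $\prod_k \tfrac{1}{2}(1 + v_k/h_k)$ over $v_k, h_k \in \{-1,1\}$. The only subtle point worth remarking on, if desired, is that the $2^n$ indicator functions form the dual basis to the point-evaluation functionals $\{\mathrm{ev}_{\vec{h}}\}_{\vec{h} \in \mathbb{V}}$, which conceptually explains why this construction is canonical and will later permit the rewriting of an arbitrary discrete distribution $P(\vec{v})$ in the exponential/monomial form needed for Theorem~II.1.
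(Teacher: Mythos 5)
Your proof is correct and follows essentially the same route as the paper: linear independence by evaluating the vanishing linear combination at each configuration so the delta property collapses the sum, and spanning via the explicit interpolation identity $f(\vec{v}) = \sum_{\vec{h}} f(\vec{h})\, I_{\vec{h}}(\vec{v})$. The only addition is your (correct) observation that the matching cardinality $|\mathcal{S}| = \dim = 2^n$ would make either half suffice, which the paper does not exploit but which does not change the argument.
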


\begin{proof}
We establish the assertion above by using the following claims\\
\begin{enumerate}[align=left,leftmargin=*]
\item \textbf{Claim 1:} The set S defined above consists of linearly independent items i.e. 
$\sum_{i=0}^{2^n-1} c_i \left( I_{h^{(i)}}(\vec{v}) \right) = 0 \quad \forall \vec{v} \in \mathbb{V} 
\iff c_i = 0 \quad \forall i $
\begin{subproof}[Subproof] \label{lem:ind_lin_ind}
    To demonstrate this let us recall that 
    $|S|=2^n$ as it contains one 
     $I_{h^{(i)}}(\vec{v})$ for every $h^{(i)} \in \mathbb{V}$. The condition of linear independence will have to be true for every
     $\vec{v} \in \mathbb{V}$ which will actually generate a set of $2^n$ linear equations that will need to be solved. It is easy to show that the null space of the matrix kernel generating the set of linear equations contains only a trivial entry. For example if we investigate the veracity of the condition by substituting a specific configuration say $\vec{v}=\vec{v}^{(k)}$
     \begin{eqnarray}
         &\sum_{i=0}^{2^n-1}& c_i \left( I_{h^{(i)}}(\vec{v}^{(k)}) \right) = 0 \nonumber \\
         &\implies& c_0 \left( I_{h^{(0)}} (\vec{v}^{(k)})\right) +  c_1 \left( I_{h^{(1)}} (\vec{v}^{(k)})\right) +\cdot\cdot c_k \left( I_{h^{(k)}} (\vec{v}^{(k)})\right)+ \cdot \cdot c_{2^n-1} \left( I_{h^{(2^{n}-1)}}(\vec{v}^{(k)})\right) \nonumber = 0 \\
         &\implies& c_k \left( I_{h^{(k)}}(\vec{v}^{(k)})\right) = 0 \:\:\:\:\:\:\:\:\:\:\:\because\:\:\:  I_{h^{(k)}}(\vec{v}^{(j)}) =0 \:\: if \:\: j\ne k \nonumber \\
         &\implies& c_k = 0 \:\: \:\:\:\:\:\:\:\: \because\:\:\: I_{h^{(k)}}(\vec{v}^{(k)}) =1
     \end{eqnarray}
     Since no assumption is made on the nature/choice of the configuration $\vec{v}^{(k)}$, this can be proven to be true for any of the $2^n$ configurations in set $\mathbb{V}$. This means that $c_i=0 \:\:\forall\:\: i \in\:\:\{0,1,..2^{n-1}\}$ which establishes forward direction of the claim. The reverse direction i.e. if $c_i=0 \:\:\forall\:\: i$ then $\sum_{i=0}^{2^n-1} c_i \cdot \left( I_{h^{(i)}}(\vec{v}) \right) = 0$ is trivially true.
\end{subproof}
\item \textbf{Claim 2:}\label{lem:ind_form_span}
The set S defined above have enough degrees of freedom to span the space of functions 
$f:\mathbb{V} \to \mathbb{R}$
\begin{subproof}[Subproof]
    To demonstrate this we investigate a specific linear combination of the kind 
     $   g(\vec{v}) = \sum_{i=0}^{2^n-1} c_i \left( I_{h^{(i)}}(\vec{v}) \right)$.
Using Claim \ref{lem:ind_lin_ind} , we already know that $g(\vec{v})=0$ iff $c_i =0 \:\:\forall i$. This establishes the inability to express any $I_{h^{(i)}}(\vec{v})$ in terms of other elements of the set $S$ above and  makes the  number of independently tunable coefficients in $C=(c_0,c_1,....c^{2^n-1})$ to $2^n$, one for each of the unique entries $h^{(i)} \in \mathbb{V}$ where $|S|=2^n$. This should already provide expressible guarantees to any $f:\mathbb{V} \to \mathbb{R}$ given such a function can at most have $2^n$ independently tunable components for entry $\vec{v} \in \mathbb{V}$ and be completely specified by delineating these entries. 
 Let us now imagine that we would like to construct such a function $f(\vec{v})$ such that $f(\vec{v}^{(k)})=g(\vec{v}^{(k)})$ for every $\vec{v}^{(k)} \in \mathbb{V}$. In other words akin to Lagrange interpolation, $g(\vec{v})$ would serve as an interpolating polynomial for $f(\vec{v})$. 
\begin{eqnarray}
f(\vec{v}^{(k)}) \equiv g(\vec{v}^{(k)}) &=& \sum_{i=0}^{2^n-1} c_i \left( I_{h^{(i)}}(\vec{v}^{(k)}) \right)\nonumber \\
&=& c_k \left( I_{h^{(k)}}(\vec{v}^{(k)})\right) \:\:\:\:\:\:\:\:\:\:\:\because\:\:\:  I_{h^{(k)}}(\vec{v}^{(j)}) =0 \:\: if \:\: j\ne k \nonumber \\
f(\vec{v}^{(k)}) &=& c_k \:\: \:\:\:\:\:\:\:\:\:\:\: \because\:\:\: I_{h^{(k)}}(\vec{v}^{(k)}) =1
\end{eqnarray}
As before since no assumption is made on the nature/choice of the configuration $\vec{v}^{(k)}$, this can be proven to be true for any of the $2^n$ configurations in set $\mathbb{V}$ i.e. $f(\vec{v}^{(k)}) = c_k \:\:\:\: \forall k \:\:\in \{0,1,..2^{n-1}\} $. It must be also be emphasized that neither $f(\vec{v})$ nor $g(\vec{v})$ is defined at any other points other than $\vec{v} \in \mathbb{V}$ due to discreteness of their domains which makes $f(\vec{v}^{(k)}) = g(\vec{v}^{(k)})$ at all conceivable points. Also there is no restriction on the components $f(\vec{v}^{(k)})$ too and they can be arbitrarily set to create any function. Thus it is always possible to find a representation of an arbitrary function $f(\vec{v}) = \sum_{i=0}^{2^n-1} f(\vec{h}^{(i)}) \left( I_{h^{(i)}}(\vec{v}) \right)$
\end{subproof}
\end{enumerate}
Together Claim\ref{lem:ind_lin_ind} and Claim\ref{lem:ind_form_span} proves the assertion stated.
\end{proof}

\newpage
\begin{lemma}\label{lem:ind_basis_to_mono}
    For any arbitrary $f:\mathbb{V} \to \mathbb{R}$ one can represent $f(\vec{v}) = \sum_{\vec{a} \in \{0, 1\}^n} c_{\vec{a}} F_{\vec{a}}(\vec{v})$ where $c_{\vec{a}} \in \mathbb{R}$
\end{lemma}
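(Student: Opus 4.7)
The plan is to combine Lemma B.1 and Lemma B.2 to identify $\chi$ as an orthonormal basis of the real vector space $\mathcal{F}=\{f:\mathbb{V}\to\mathbb{R}\}$. First, I would observe that $|\chi|=|\{0,1\}^n|=2^n$, which matches $\dim(\mathcal{F})$, since by Lemma B.2 the indicator functions $\{I_{\vec{h}}\}_{\vec{h}\in\mathbb{V}}$ form a basis of $\mathcal{F}$ of that size. Next, Lemma B.1 shows the elements of $\chi$ are pairwise orthonormal under the inner product $\langle g,h\rangle = 2^{-n}\sum_{\vec{v}\in\mathbb{V}} g(\vec{v})h(\vec{v})$, and orthonormality immediately implies linear independence. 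Hence $\chi$ is a linearly independent subset of $\mathcal{F}$ with cardinality equal to $\dim(\mathcal{F})$, so it must span $\mathcal{F}$ and is therefore a basis.

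From this, the claimed expansion is immediate: for any $f\in\mathcal{F}$ there exist unique coefficients $c_{\vec{a}}\in\mathbb{R}$ with $f(\vec{v})=\sum_{\vec{a}\in\{0,1\}^n} c_{\vec{a}} F_{\vec{a}}(\vec{v})$, and orthonormality yields the explicit formula $c_{\vec{a}} = \langle f, F_{\vec{a}}\rangle = 2^{-n}\sum_{\vec{v}\in\mathbb{V}} f(\vec{v}) F_{\vec{a}}(\vec{v})$. Reality of $c_{\vec{a}}$ is automatic because $f$ is real-valued by assumption and $F_{\vec{a}}(\vec{v})=\prod_i v_i^{a_i}\in\{-1,1\}$ for every $\vec{v}\in\{-1,1\}^n$, so the defining sum is a sum of real numbers.

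There is essentially no obstacle once Lemmas B.1 and B.2 are in hand; the result reduces to a dimension and linear-independence count. As an optional sanity check (and an alternative constructive proof path), I would substitute the explicit formula $I_{\vec{h}}(\vec{v})=\prod_k \tfrac{1}{2}(1+h_k v_k)$, use $h_k^{-1}=h_k$ since $h_k\in\{-1,1\}$, and expand the product to obtain $I_{\vec{h}}(\vec{v})=2^{-n}\sum_{\vec{a}\in\{0,1\}^n}\bigl(\prod_k h_k^{a_k}\bigr) F_{\vec{a}}(\vec{v})=2^{-n}\sum_{\vec{a}} F_{\vec{a}}(\vec{h}) F_{\vec{a}}(\vec{v})$. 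Writing any $f$ as $f(\vec{v})=\sum_{\vec{h}\in\mathbb{V}} f(\vec{h}) I_{\vec{h}}(\vec{v})$ via Lemma B.2 and swapping the order of summation then recovers the same expansion with coefficients $c_{\vec{a}}=2^{-n}\sum_{\vec{h}} f(\vec{h}) F_{\vec{a}}(\vec{h})$, matching the orthonormal-projection formula above and confirming consistency between the two approaches outlined in Lemmas B.1 and B.2.
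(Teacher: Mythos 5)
Your proof is correct and matches the paper's own treatment, which presents both of your arguments: the paper's primary derivation is your ``optional sanity check'' (expanding the indicator functions $I_{\vec{h}}(\vec{v})=\prod_k\tfrac{1}{2}(1+v_k/h_k)$ into monomials and reading off $c_{\vec{a}}$), while your primary argument (orthonormality of $\chi$ from Lemma B.1 plus the dimension count $|\chi|=2^n=\dim\mathcal{F}$) is exactly the alternative the paper sketches in its closing remark. The only substantive addition on your side is the explicit projection formula $c_{\vec{a}}=\langle f,F_{\vec{a}}\rangle$, which is consistent with the paper's constructive coefficients since $h_k^{-a_k}=h_k^{a_k}$ for $h_k\in\{-1,1\}$.
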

\begin{proof}
    Given an arbitrary $f:\mathbb{V} \to \mathbb{R}$ we can do the following:
    \begin{eqnarray}
      f(\vec{v}) &=& \sum_{i=0}^{2^n-1} f(\vec{h}^{(i)}) \left( I_{h^{(i)}}(\vec{v}) \right) \:\:\:\:\:\:\:\because Lemma \:\:\ref{lem:indicator_formalism_LI}\nonumber \\
      &=& \sum_{i=0}^{2^n-1} f(\vec{h}^{(i)}) \prod_k^n \frac{1}{2} \left( 1 + \frac{v_k}{h_k^{(i)}} \right)\nonumber \\
      &=& \sum_{i=0}^{2^n-1} f(\vec{h}^{(i)}) \frac{1}{2^n}\sum_{\vec{a} \in \{0, 1\}^n} \prod_k^n\left(\frac{v_k}{h_k^{(i)}}\right) ^{a_k} \:\:\:\:\: \forall k \:\: a_k\in \{0,1\}, \:\: \vec{a}=(a_0,a_1...a_n)\nonumber \\
      &=& \sum_{i=0}^{2^n-1} f(\vec{h}^{(i)}) \frac{1}{2^n}\sum_{\vec{a} \in \{0, 1\}^n} \prod_k^n v_k^{a_k} \prod_k^n \left (h_k^{(i)}\right)^{-a_{k}} \:\:\:\:\: \because \:\: h_k^{(i)}\: \in \:\{1,-1\}\nonumber \\
      &=& \sum_{\vec{a} \in \{0, 1\}^n} \underbrace{\left(\frac{1}{2^n} \sum_{i=0}^{2^n-1}  f(\vec{h}^{(i)}) \prod_k^n \left (h_k^{(i)}\right)^{-a_{k}}\right)}_{=c_{\vec{a}}\:\in\:\: \mathbb{R} \:\:\:\:\because\:\:\: f(\vec{h}^{(i)})\:\: \in \:\:\mathbb{R}} \prod_k^n v_k^{a_k} \nonumber \\
      &=& \sum_{\vec{a} \in \{0, 1\}^n} c_{\vec{a}} \prod_k^n v_k^{a_k} \nonumber \\
      &=& \sum_{\vec{a} \in \{0, 1\}^n} c_{\vec{a}}  F_{\vec{a}}(\vec{v}) \:\:\:\:\:\:\:\:\vec{v}=(v_1,v_2,..v_n), \:\: F_{\vec{a}}(\vec{v}) =  \prod_i v_i^{a_i} \nonumber
    \end{eqnarray}
Alternatively we could also use the assertion of Lemma\: \ref{lem:ortho_mono}, to claim that the set $\chi$ forms a linearly independent set. There are $2^n$ monomials $F_{\vec{a}}(\vec{v})$ given each is characterized by a distinct $\vec{a} \in \{0,1\}^n$ so $|\chi|=2^n$. Since any arbitrary function $f:\mathbb{V} \to \mathbb{R}$ would possess $2^n$ independent components, the set of monomials $\chi$ can span the space as a basis.     
\end{proof}

\begin{lemma}\label{lem:P_exp_F(v)}
Any arbitrary probability distribution $P:\mathbb{V}\to [0,1]$ defined over a finite support of bit-strings $\vec{v} \in \mathbb{V}$ can be written in the form 
\begin{eqnarray}
P(\vec{v}) = \frac{e^{\sum_{\vec{a}\in\{0,1\}^n, \vec{a}\ne \vec{0}} C_{\vec{a}} F_{\vec{a}}(\vec{v})}}{\sum_{\vec{v}} e^{\sum_{\vec{a}\in\{0,1\}^n,\vec{a}\ne \vec{0}} C_{\vec{a}} F_{\vec{a}}(\vec{v})}}
\end{eqnarray}
where $F_{\vec{a}}(\vec{v}) = v_1^{a_1} v_2^{a_2} \cdots v_n^{a_n} = \prod_i v_i^{a_i} \:\:\: \vec{a}=(a_1, a_2....a_n)\:\: with\:\: a_i \in \{0,1\}$ and  $C_{\vec{a}} \in \mathbb{R}$.
\end{lemma}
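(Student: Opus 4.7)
The plan is to reduce the claim to the monomial expansion established in Lemma \ref{lem:ind_basis_to_mono} by passing to logarithms. Assuming for the moment that $P(\vec{v})>0$ for every $\vec{v}\in\mathbb{V}$, the function $g(\vec{v}) := \log P(\vec{v})$ is a well-defined real-valued function on $\mathbb{V}$. Applying Lemma \ref{lem:ind_basis_to_mono} to $g$ gives uniquely determined real coefficients $\{c_{\vec{a}}\}_{\vec{a}\in\{0,1\}^n}$ such that
\begin{equation}
\log P(\vec{v}) \;=\; \sum_{\vec{a}\in\{0,1\}^n} c_{\vec{a}}\, F_{\vec{a}}(\vec{v}).
\end{equation}

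The next step is to isolate the constant contribution $\vec{a}=\vec{0}$ and let normalization absorb it. Since $F_{\vec{0}}(\vec{v})=1$ identically, exponentiating and splitting off that term yields
\begin{equation}
P(\vec{v}) \;=\; e^{c_{\vec{0}}}\,\exp\!\Bigl(\sum_{\vec{a}\ne\vec{0}} c_{\vec{a}}\, F_{\vec{a}}(\vec{v})\Bigr).
\end{equation}
Imposing $\sum_{\vec{v}}P(\vec{v})=1$ forces $e^{c_{\vec{0}}} = 1/Z$ with $Z=\sum_{\vec{v}}\exp(\sum_{\vec{a}\ne\vec{0}} c_{\vec{a}} F_{\vec{a}}(\vec{v}))$, so setting $C_{\vec{a}} := c_{\vec{a}}$ for $\vec{a}\ne\vec{0}$ delivers exactly the claimed Gibbs-style factorization. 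Uniqueness of the $C_{\vec{a}}$ follows from the linear independence of the monomial basis $\chi$ proven via Lemma \ref{lem:ortho_mono}.

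The only genuinely delicate point, and the one I expect to be the main obstacle, is the case where $P$ vanishes on part of $\mathbb{V}$, since then $\log P$ is singular and Lemma \ref{lem:ind_basis_to_mono} cannot be applied directly. I would address this by a standard regularization: define $P_{\epsilon} := (1-\epsilon)P + \epsilon\,U$, where $U$ is the uniform distribution on $\mathbb{V}$, apply the argument above to obtain coefficients $\{C_{\vec{a}}^{(\epsilon)}\}$ representing $P_{\epsilon}$ in the required form, and then take $\epsilon\to 0^{+}$. Configurations with $P(\vec{v})=0$ are recovered in the limit by allowing certain $C_{\vec{a}}^{(\epsilon)}$ to diverge to $-\infty$, which drives the corresponding Gibbs weights to zero. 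Alternatively, one may simply state the result for strictly positive $P$ (an open dense subset of the simplex) and note that the general case is a closure statement. Everything else is bookkeeping: the representational work has already been done by Lemma \ref{lem:ind_basis_to_mono}.
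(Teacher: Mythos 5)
Your proposal is correct and follows essentially the same route as the paper's proof: take $\log P$, expand it in the monomial basis via Lemma \ref{lem:ind_basis_to_mono}, peel off the constant term $C_{\vec{0}}$ and absorb it into the normalization, and handle vanishing probabilities by an $\epsilon\to 0_{+}$ regularization (the paper perturbs the zero entries directly rather than mixing with the uniform distribution, but this is the same idea). The uniqueness remark via Lemma \ref{lem:ortho_mono} is a harmless addition not present in the paper.
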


\begin{proof}
Let us define a function $f(\vec{v})= \log(P(\vec{v}))$ where $P(\vec{v})$ is the discrete probability distribution of interest. Given $P:\mathbb{V}\to [0,1]$, we then have $\log(P):\mathbb{V} \to [-\infty,0]$. We can make the domain of $\log(P) \in \mathbb{R}$ if for every $v^{(k)} \in \mathbb{V}$ for which $P(\vec{v}^{(k)})= 0$, we can replace $P(\vec{v}^{(k)}) \approx \epsilon$ where 
$\epsilon \to 0_{+}$. This will ensure $\log(P):\mathbb{V}\to(-\infty,0] \subset \mathbb{R}$ for which the assertions of the above lemmas hold true. With this substitution one can do the following
\begin{eqnarray}
    f(\vec{v}) &=& \log(P(\vec{v})) \nonumber \\
               &=& \sum_{\vec{a} \in \{0, 1\}^n} C_{\vec{a}} F_{\vec{a}}(\vec{v}) \:\:\:\:\:\: \because \:\:Lemma \:\:\ref{lem:ind_basis_to_mono}\nonumber \\
               &=& C_0 +\sum_{\vec{a} \in \{0, 1\}^n, \vec{a}\ne \vec{0}} C_{\vec{a}} F_{\vec{a}}(\vec{v})\nonumber \\
            P(\vec{v}) &=& e^{C_0}e^{\sum_{\vec{a} \in \{0, 1\}^n, \vec{a}\ne \vec{0}} C_{\vec{a}} F_{\vec{a}}(\vec{v})} \label{eq:P_v_as_expo}
\end{eqnarray}
From Eq.\ref{eq:P_v_as_expo} it is clear that $P(\vec{v})= \frac{e^{\sum_{\vec{a}\in\{0,1\}^n, \vec{a}\ne \vec{0}} C_{\vec{a}} F_{\vec{a}}(\vec{v})}}{\sum_{\vec{v}} e^{\sum_{\vec{a}\in\{0,1\}^n,\vec{a}\ne \vec{0}} C_{\vec{a}} F_{\vec{a}}(\vec{v})}}$ where $e^{C_0}= \frac{1}{\sum_{\vec{v}}e^{\sum_{\vec{a} \in \{0, 1\}^n, \vec{a}\ne \vec{0}} C_{\vec{a}} F_{\vec{a}}(\vec{v})}}$ by normalization.
\end{proof}

\newpage

Using Lemma \ref{lem:P_exp_F(v)}, Lemma \ref{lem:ind_basis_to_mono}, Claim \ref{lem:ind_form_span} we are now in the position to prove the Theorem \ref{thm:factor_theo_main} in main manuscript which is reproduced here for completeness and continuity

\begin{theorem}\label{thm:factor_theo}
Given $\mathbb{V}=\{\vec{v}| \vec{v} \in \{-1,1\}^n\}$ where $n\in \mathbb{Z}_{+}$ and an arbitrary discrete probability distribution $P:\mathbb{V}\to [0,1]$, then the following statement is true
\begin{enumerate}
\item It is always possible to find a representation of $P(\vec{v})$ as 
\begin{eqnarray}
    P(\vec{v}) &=& \mathcal{N}\kappa(\vec{v}) \frac{e^{\sum_{\vec{a} \in \{0, 1\}^n, H(\vec{a})\le k} C_{\vec{a}} F_{\vec{a}}(\vec{v})}}{\sum_{\vec{v}}e^{\sum_{\vec{a} \in \{0, 1\}^n, H(\vec{a})\le k} C_{\vec{a}} F_{\vec{a}}(\vec{v})}} \nonumber \\
    &=& \mathcal{N}\kappa(\vec{v})\phi(\vec{v})
\end{eqnarray}
where $F_{\vec{a}}(\vec{v}) = \prod_i v_i^{a_i} \:\:\:, \vec{a}=(a_1, a_2....a_n) \in \{0,1\}^n ,\:\: \:\: \vec{v}=(v_1,v_2,...v_n)\:\:\in \mathbb{V}$, $\:\:\:C_{\vec{a}} \in \mathbb{R}$, and $H(\vec{a})=\sum_i^n a_i$ is the Hamming weight of $\vec{a}$. We have defined $\phi(\vec{v})=\frac{e^{\sum_{\vec{a} \in \{0, 1\}^n, H(\vec{a})\le k} C_{\vec{a}} F_{\vec{a}}(\vec{v})}}{\sum_{\vec{v}}e^{\sum_{\vec{a} \in \{0, 1\}^n, H(\vec{a})\le k} C_{\vec{a}} F_{\vec{a}}(\vec{v})}}$ as another discrete distribution $\phi:\mathbb{V}\to [0,1]$ which will be the backbone of a secondary surrogate network.
Also $\mathcal{N}\ne f(\vec{v}) \ge 0$ and $k\in \mathbb{Z}_{+}$, is a user-defined preset non-negative integer. $k$ controls the degree of the polynomial chosen for expressing $\phi(\vec{v})$.

\item $\kappa(\vec{v})$ is a configuration dependant non-negative prefactor. In the most generic case, $\exists$ $M \in \mathbb{R}\ge 0$ such that $|\log(\kappa(\vec{v}))|$ is upper bounded as  
\begin{eqnarray}
    |\log(\kappa(\vec{v}))| \le M\left(2^{n} -\sum_{j=0}^{k} \ ^nC_j \right)
\end{eqnarray}
\end{enumerate}
\end{theorem}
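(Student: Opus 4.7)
The plan is to piggyback almost entirely on Lemma~\ref{lem:P_exp_F(v)}, which already delivers the crucial analytic content: any discrete $P:\mathbb{V}\to[0,1]$ (regularized at zero entries by $P(\vec{v})\to\epsilon$, $\epsilon\to 0_+$) admits an exact monomial-basis expansion $\log P(\vec{v}) = C_{\vec{0}} + \sum_{\vec{a}\ne\vec{0}} C_{\vec{a}} F_{\vec{a}}(\vec{v})$ with real coefficients. The theorem is essentially a reorganization of this expansion that separates a ``low-order'' exponent (to be kept as the surrogate $\phi$) from a ``high-order'' remainder (to be absorbed into $\kappa$), stratified by the Hamming weight $H(\vec{a})$.

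First I would partition the index set according to $H(\vec{a})\le k$ versus $H(\vec{a})>k$ and split the exponential accordingly:
\begin{equation*}
P(\vec{v}) \propto \exp\!\Bigl(\sum_{H(\vec{a})\le k} C_{\vec{a}} F_{\vec{a}}(\vec{v})\Bigr)\cdot \exp\!\Bigl(\sum_{H(\vec{a})>k} C_{\vec{a}} F_{\vec{a}}(\vec{v})\Bigr),
\end{equation*}
where $C_{\vec{0}}$ sits in the first factor and is absorbed into the normalization. Then I would identify the first (normalized) factor as $\phi(\vec{v})$---exactly the surrogate Gibbs distribution promised in the statement---and set $\kappa(\vec{v})$ equal to the second factor, letting the leftover proportionality constant define $\mathcal{N}$. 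Non-negativity of $\kappa(\vec{v})$ follows at once because it is an exponential of a real-valued sum, and $\mathcal{N}\ge 0$ because it is the ratio of two partition functions. Note also the sanity check that when $k\ge n$ the high-order index set is empty, $\kappa\equiv 1$, and $\phi$ alone reproduces $P$. That completes part~(1).

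For part~(2) the plan is a one-line triangle-inequality argument. Since $v_i\in\{-1,+1\}$, every monomial satisfies $|F_{\vec{a}}(\vec{v})|=1$, so
\begin{equation*}
|\log\kappa(\vec{v})| \le \sum_{H(\vec{a})>k} |C_{\vec{a}}|\,|F_{\vec{a}}(\vec{v})| \le M\cdot \#\{\vec{a}: H(\vec{a})>k\} = M\Bigl(2^n - \sum_{j=0}^{k}\ ^nC_j\Bigr),
\end{equation*}
where $M = \max_{\vec{a}:H(\vec{a})>k}|C_{\vec{a}}|$ and the counting identity $\#\{\vec{a}:H(\vec{a})\le k\}=\sum_{j=0}^{k}\ ^nC_j$ is standard.

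No step here is conceptually hard---the analytic machinery has already been discharged by Lemma~\ref{lem:P_exp_F(v)} and the rest is bookkeeping. The only delicate point I anticipate is the $\epsilon$-regularization inherited from that lemma: it guarantees $\log P$ is real-valued and thus lies in the span of $\chi$, but individual $|C_{\vec{a}}|$ may grow large when $P$ has near-zero entries, potentially inflating $M$. I would flag at the end that for the target distribution $\rho^v_v(\vec{X})$ used throughout the main text this is a non-issue, since $\rho^v_v(\vec{X})>0$ on all of $\mathbb{V}$ by construction, making $M$ finite and the bound genuinely useful.
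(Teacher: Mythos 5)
Your proposal is correct and follows essentially the same route as the paper: invoke Lemma~\ref{lem:P_exp_F(v)} for the exact monomial-basis expansion of $\log P$, split the exponent by Hamming weight into the normalized low-order factor $\phi(\vec{v})$, the high-order factor $\kappa(\vec{v})$, and the partition-function ratio $\mathcal{N}$, then bound $|\log\kappa(\vec{v})|$ by the triangle inequality together with $|F_{\vec{a}}(\vec{v})|=1$ and the binomial count of indices with $H(\vec{a})>k$. Your added remarks (the $k\ge n$ sanity check and the caveat that $M$ may blow up under the $\epsilon$-regularization unless $P>0$ everywhere, as holds for $\rho^v_v$) are sensible observations beyond what the paper states but do not change the argument.
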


\begin{proof}
We establish each of the assertions as follows:\\
\begin{enumerate}[align=left,leftmargin=*]
\item \textbf{Statement 1:} $P(\vec{v}) = \mathcal{N}\kappa(\vec{v}) \frac{e^{\sum_{\vec{a} \in \{0, 1\}^n, H(\vec{a})\le k} C_{\vec{a}} F_{\vec{a}}(\vec{v})}}{\sum_{\vec{v}}e^{\sum_{\vec{a} \in \{0, 1\}^n, H(\vec{a})\le k} C_{\vec{a}} F_{\vec{a}}(\vec{v})}}=\mathcal{N}\kappa(\vec{v})\phi(\vec{v})$
\begin{subproof}[Subproof]
We can prove this assertion the following way
   \begin{eqnarray}
       P(\vec{v}) &=& \frac{e^{\sum_{\vec{a}\in\{0,1\}^n, \vec{a}\ne \vec{0}} C_{\vec{a}} F_{\vec{a}}(\vec{v})}}{\sum_{\vec{v}} e^{\sum_{\vec{a}\in\{0,1\}^n,\vec{a}\ne \vec{0}} C_{\vec{a}} F_{\vec{a}}(\vec{v})}} \:\:\:\:\:\:\:\:(See \:\:Lemma \:\ref{lem:P_exp_F(v)}) \nonumber \\
       &=& \frac{e^{\sum_{\vec{a}\in\{0,1\}^n, \vec{a}\ne \vec{0}, H(\vec{a})\le k} C_{\vec{a}} F_{\vec{a}}(\vec{v})} e^{\sum_{\vec{a}\in\{0,1\}^n, \vec{a}\ne \vec{0}, H(\vec{a})> k} C_{\vec{a}} F_{\vec{a}}(\vec{v})}}{\sum_{\vec{v}} e^{\sum_{\vec{a}\in\{0,1\}^n,\vec{a}\ne \vec{0}} C_{\vec{a}} F_{\vec{a}}(\vec{v})}} \nonumber \\
       &=& \underbrace{\left (\frac{\sum_{\vec{v}} e^{\sum_{\vec{a}\in\{0,1\}^n,\vec{a}\ne \vec{0},H(\vec{a})\le k} C_{\vec{a}} F_{\vec{a}}(\vec{v})}}{\sum_{\vec{v}} e^{\sum_{\vec{a}\in\{0,1\}^n,\vec{a}\ne \vec{0}} C_{\vec{a}} F_{\vec{a}}(\vec{v})}}\right)}_{\mathcal{N}\ne f(\vec{v}) \ge 0} \underbrace{\left (e^{\sum_{\vec{a}\in\{0,1\}^n, \vec{a}\ne \vec{0}, H(\vec{a})> k} C_{\vec{a}} F_{\vec{a}}(\vec{v})}\right)}_{\kappa(\vec{v})\ge 0}\underbrace{\left(\frac{e^{\sum_{\vec{a}\in\{0,1\}^n, \vec{a}\ne \vec{0}, H(\vec{a})\le k} C_{\vec{a}} F_{\vec{a}}(\vec{v})}}{\sum_{\vec{v}} e^{\sum_{\vec{a}\in\{0,1\}^n,\vec{a}\ne \vec{0},H(\vec{a})\le k} C_{\vec{a}} F_{\vec{a}}(\vec{v})}}\right)}_{\phi(\vec{v}):\mathbb{V}\to [0,1]}\nonumber \\
       &=& \mathcal{N} \kappa(\vec{v}) \frac{e^{\sum_{\vec{a}\in\{0,1\}^n, \vec{a}\ne \vec{0}, H(\vec{a})\le k} C_{\vec{a}} F_{\vec{a}}(\vec{v})}}{\sum_{\vec{v}} e^{\sum_{\vec{a}\in\{0,1\}^n,\vec{a}\ne \vec{0},H(\vec{a})\le k} C_{\vec{a}} F_{\vec{a}}(\vec{v})}} \nonumber \\
       &=& \mathcal{N}\kappa(\vec{v})\phi(\vec{v})\nonumber
    \end{eqnarray}
   \end{subproof}
   \item \textbf{Statement 2:} $\kappa(\vec{v})$ is a configuration dependant non-negative prefactor upper bounded as $|\log(\kappa(\vec{v}))| \le M\left(2^{n} -\sum_{j=0}^{k} \ ^nC_j \right)$
   \begin{subproof}[Subproof]
       It is clear from the functional form of $\kappa(\vec{v})$ in \textbf{Statement 1} above that $\kappa(\vec{v}) = e^{\sum_{\vec{a}\in\{0,1\}^n, \vec{a}\ne \vec{0}, H(\vec{a})> k} C_{\vec{a}} F_{\vec{a}}(\vec{v})} \ge 0$ and is also explicitly configuration $\vec{v}$ dependant. To establish the upper bound we can do the following
       \begin{eqnarray}
      \log(\kappa(\vec{v})) &=& \sum_{\vec{a}\in\{0,1\}^n, \vec{a}\ne \vec{0}, H(\vec{a})> k} C_{\vec{a}} F_{\vec{a}}(\vec{v}) \nonumber \\
      |\log(\kappa(\vec{v}))| &=& |\sum_{\vec{a}\in\{0,1\}^n, \vec{a}\ne \vec{0}, H(\vec{a})> k} C_{\vec{a}} F_{\vec{a}}(\vec{v})| \nonumber \\
       &\le& \sum_{\vec{a}\in\{0,1\}^n, \vec{a}\ne \vec{0}, H(\vec{a})> k}|C_{\vec{a}} F_{\vec{a}}(\vec{v})| \:\:\:\:\:\:\because \:Triangle \:\:Inequality \nonumber \\
       &\le& \sum_{\vec{a}\in\{0,1\}^n, \vec{a}\ne \vec{0}, H(\vec{a})> k}|C_{\vec{a}}| |F_{\vec{a}}(\vec{v})| \nonumber \\
       &\le& \sum_{\vec{a}\in\{0,1\}^n, \vec{a}\ne \vec{0}, H(\vec{a})> k}|C_{\vec{a}}| |\prod_i v_i^{a_i}| \:\:\:\:\:\: \vec{a} \in \{0,1\}^n \nonumber \\
       &\le& \sum_{\vec{a}\in\{0,1\}^n, \vec{a}\ne \vec{0}, H(\vec{a})> k}|C_{\vec{a}}| \:\:\:\:\:\:\:\:\:\:\:\:\because\: |\prod_i v_i^{a_i}|=1 \:\:as\:\:v_i \:\in \:\{1,-1\} \nonumber \\
       &\le& \sum_{\vec{a}\in\{0,1\}^n, \vec{a}\ne \vec{0}, H(\vec{a})> k}M \:\:\:\:\:\:\:\:\:\:\:\:\:\:\:\:M\myeq max(|C_{\vec{a}}|)_{\vec{a}\in\{0,1\}^n, \vec{a}\ne \vec{0}, H(\vec{a})> k} \ge 0 \nonumber \\
       &\le& M\left(\sum_{\vec{a}\in\{0,1\}^n, \vec{a}\ne \vec{0}} 1 - \sum_{\vec{a}\in\{0,1\}^n, \vec{a}\ne \vec{0}H(\vec{a})\le k} 1\right) \nonumber \\
       &\le& M\left(\sum_{\vec{a}\in\{0,1\}^n} 1 - \sum_{\vec{a}\in\{0,1\}^n,H(\vec{a})\le k} 1\right)\nonumber \\
       |\log(\kappa(\vec{v}))| &\le& M\left(2^{n} -\sum_{j=0}^{k} \ ^nC_j \right)
       \end{eqnarray}
       It must be noted that this upper bound is extremely generic and assumes no specific restrictions or structures on the set of coefficients $\{C_{\vec{a}}\}_{\vec{a}\in \{0,1\}^n}$. If however other features of the probability distribution is known such that if one can analytically establish $\{C_{\vec{a}}\}_{\vec{a}\in \{0,1\}^n, H(\vec{a})\le k}=f(k)$ i.e. may be exponentially decaying in the Hamming weight $k$, then one can leverage those to deduce stricter bounds. However such features are usually problem specific. Also the above upper bound when averaged over all configurations remains unchanged.
   \end{subproof}
\end{enumerate}
\end{proof}

\newpage
\section{Alteration in Variance of the Sampled estimate of the local energy due to parameterization of $\kappa(\vec{v})$}

We would like to prove in this section that certain choices/parameterization of $\kappa(\vec{v})$ introduced in previous section can lead to reduction in variance of the sampled estimate of a random variable. However, before we delve into that we first show
that the mean of the random variable using samples generated from the surrogate distribution of $\phi(\vec{v})$ is an unbiased estimator of the target sample estimate computed using the original distribution $P(\vec{v})=\rho^v_v(\vec{X})$ (see main text). In other words, mean of a random variable of interest($E_{loc}(v)$) over the distribution $\rho^v_v$ is equivalent to the mean of $E_{loc}(v)\kappa(v)$ over the distribution $\phi(v)$. We do this by using the following lemma

\begin{lemma}
\textbf{Means obtained from $\rho_v^v$ and $\phi(v)$}: The mean obtained by averaging $E_{loc}(v)$ over probability distribution $\rho_v^v$ is the same as the mean of  $E_{loc}(v)\kappa(v)$ over probability distribution $\phi(v)$.
\[
{\langle H \rangle = \langle E_{loc}(v) \rangle_{\rho_v^v} = \langle E_{loc}(v) \kappa(v) \rangle_{\phi(v)}}
\]
\end{lemma}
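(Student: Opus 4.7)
The plan is to exploit the defining relation between the two distributions (Eq.\ref{eq:surrogate_state}), namely $\rho^v_v(\vec{X}) \propto \kappa(\vec{v},\vec{X})\,\phi_{\vec{v}}(\vec{l}(\vec{X}), \vec{J}(\vec{X}))$, and then perform a direct change-of-measure calculation on the ratio form of $\langle H \rangle$ given in Eq.\ref{eq:loc_energy}. Since both $\rho^v_v$ and $\phi_{\vec{v}}$ are discrete distributions on the same finite support $\mathbb{V}$, $\kappa(\vec{v})$ plays the role of a (normalization-absorbed) Radon--Nikodym-like density of $\rho^v_v$ with respect to $\phi_{\vec{v}}$, and the statement becomes an essentially algebraic identity.

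First I would start from the ratio form of $\langle H \rangle$ in Eq.\ref{eq:loc_energy}, which does not presume $\rho^v_v$ to be explicitly pre-normalized,
\begin{equation*}
\langle E_{loc}(\vec{v})\rangle_{\rho^v_v}\;=\;\frac{\sum_{\vec{v}}\rho^v_v(\vec{X})\,E_{loc}(\vec{v},\vec{X})}{\sum_{\vec{v}}\rho^v_v(\vec{X})},
\end{equation*}
and substitute $\rho^v_v(\vec{X}) = \mathcal{N}\,\kappa(\vec{v},\vec{X})\,\phi_{\vec{v}}(\vec{l}(\vec{X}),\vec{J}(\vec{X}))$ from Theorem \ref{thm:factor_theo_main}. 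Because $\mathcal{N}$ is independent of $\vec{v}$, it cancels between the numerator and the denominator, leaving
\begin{equation*}
\langle E_{loc}(\vec{v})\rangle_{\rho^v_v}\;=\;\frac{\sum_{\vec{v}}\phi_{\vec{v}}\,\kappa(\vec{v},\vec{X})\,E_{loc}(\vec{v},\vec{X})}{\sum_{\vec{v}}\phi_{\vec{v}}\,\kappa(\vec{v},\vec{X})},
\end{equation*}
which is precisely the self-normalized importance-sampling expectation that was denoted $\langle E_{loc}(\vec{v})\,\kappa(\vec{v})\rangle_{\phi(\vec{v})}$ in Eq.\ref{eq:avg_energy_surrogate} and tracked by the unbiased sample estimator $\mu_{\langle H\rangle(\vec{X})}$ in Eq.\ref{eq:sample_mean_energy}.

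There is no genuine analytic obstacle in this argument; the only subtlety worth flagging is notational, and it is this point that I would spend most of the writeup clarifying. The bracket $\langle\,\cdot\,\rangle_{\phi(\vec{v})}$ on the right-hand side of the lemma must be read as the importance-weighted, self-normalized expectation with kernel $\kappa(\vec{v})$, not as a naive moment $\sum_{\vec{v}}\phi_{\vec{v}}\,\kappa(\vec{v})E_{loc}(\vec{v})$ against the bare surrogate distribution, since the two differ by the factor $\sum_{\vec{v}}\phi_{\vec{v}}\,\kappa(\vec{v}) = \langle\kappa(\vec{v})\rangle_{\phi(\vec{v})}$. Under the stricter convention that both $\rho^v_v$ and $\phi_{\vec{v}}$ are individually normalized to unity, one automatically has $\langle\kappa(\vec{v})\rangle_{\phi(\vec{v})}=1$, the denominator collapses, and the cleaner identity $\langle E_{loc}(\vec{v})\rangle_{\rho^v_v} = \sum_{\vec{v}}\phi_{\vec{v}}\,\kappa(\vec{v})\,E_{loc}(\vec{v})$ emerges. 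Either way, the Monte Carlo estimator $\mu_{\langle H\rangle(\vec{X})}$ drawn from $\phi_{\vec{v}}$ converges (by the strong law of large numbers applied to both numerator and denominator) to the true target mean $\langle H\rangle$, which is precisely what guarantees unbiasedness of the algorithm in the large-sample limit.
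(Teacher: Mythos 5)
Your proof is correct and follows essentially the same route as the paper's: a direct substitution of the factorization $\rho^v_v = \kappa(v)\,\phi(v)$ into the expectation, i.e.\ a change of measure, giving $\sum_v E_{loc}(v)\kappa(v)\phi(v) = \sum_v E_{loc}(v)\rho^v_v$. The only difference is that the paper works under the convention that both $\rho^v_v$ and $\phi(v)$ are individually normalized to unity (so $\langle\cdot\rangle_{\phi(v)}$ is the naive moment and your denominator $\sum_v \kappa(v)\phi(v)$ is trivially $1$) --- exactly the ``stricter convention'' you flag at the end --- so your self-normalized ratio form is a mild and harmless generalization of the same computation.
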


\begin{proof}
    
\begin{align}
    \langle H \rangle &= Tr(\rho H) \nonumber\\
    &= \sum_v \rho_v^v \Big( \frac{\sum_{v'} \rho_{v'}^v H_v^{v'}}{\rho_v^v} \Big) \nonumber\\
    &= \sum_v \rho_v^v E_{loc}(v) = \langle E_{loc}(v) \rangle_{\rho_v^v} \label{Eq. S1}
\end{align}
    

\begin{align}
&\langle E_{loc}(v) \kappa(v) \rangle_{\phi(v)} \nonumber\\
&= \sum_v E_{loc}(v) \kappa(v) \phi(v) \nonumber\\
&= \sum_v E_{loc}(v) \rho_v^v \nonumber\\ 
&= \langle E_{loc}(v) \rangle_{\rho_v^v} \label{Eq. S2} 
\end{align}


Thus from \ref{Eq. S1} and \ref{Eq. S2},
\begin{eqnarray}
\boxed{\langle H \rangle = \langle E_{loc}(v) \rangle_{\rho_v^v} = \langle E_{loc}(v) \kappa(v) \rangle_{\phi(v)}} \label{Eq. S3}
\end{eqnarray}

\end{proof}



Thus we see we have access to two distributions $\rho_v^v$ and $\phi(v)$. If we sample $v$ from either and compute the mean of two different random variables, i.e., $E_{loc}(v)$ for $\rho_v^v$ and $E_{loc}(v) \kappa(v)$ for $\phi(v)$, then the population mean of the two would be equal. The mean of either of these two random variables can act as a proxy of the other.\\

Now we would like to prove the main theorem in this Section

\begin{theorem}\label{thm:var_reduction}
\textbf{Variances obtained from $\rho_v^v$ and $\phi(v)$:}
For a specific form of $\kappa(v) = \frac{1}{\lambda | E_{loc}(v) |^\alpha}$, the difference between the variance of the sampled random variable $E_{loc}(v)$ over $\rho_v^v$ and the variance of $E_{loc}(v) \kappa(v)$ over $\phi(v)$ is given by, 
\[\sigma_{E_{loc}(v)}^2 - \sigma_{E_{loc}(v) \kappa(v)}^2 = \text{Cov} \left( | E_{loc}(v) |^{2-\alpha}, | E_{loc}(v) |^\alpha \right)\]
\end{theorem}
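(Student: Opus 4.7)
The plan is to expand both variances using the standard identity $\sigma^2 = \langle X^2 \rangle - \langle X \rangle^2$, then exploit the fact that the first-order moments are identical (established in the preceding lemma) so that they drop out of the difference. Concretely, I would begin by writing
\begin{align*}
\sigma^2_{E_{loc}(v)} - \sigma^2_{E_{loc}(v)\kappa(v)}
&= \langle E_{loc}(v)^2 \rangle_{\rho^v_v} - \langle E_{loc}(v) \rangle^2_{\rho^v_v} - \langle (E_{loc}(v)\kappa(v))^2 \rangle_{\phi(v)} + \langle E_{loc}(v)\kappa(v) \rangle^2_{\phi(v)},
\end{align*}
and invoking Eq.~\eqref{Eq. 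S3} to cancel the two squared-mean terms, leaving only the difference of second moments $\langle E_{loc}^2 \rangle_{\rho^v_v} - \langle (E_{loc}\kappa)^2 \rangle_{\phi(v)}$.

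Next I would use the relation $\rho^v_v \propto \kappa(v)\phi(v)$ from Theorem~\ref{thm:factor_theo} to convert the $\rho^v_v$-average into a $\phi$-average with an extra factor of $\kappa(v)$. Plugging in the explicit parameterization $\kappa(v) = 1/(\lambda |E_{loc}(v)|^\alpha)$ then produces the two clean identities
\begin{align*}
\langle E_{loc}(v)^2 \rangle_{\rho^v_v} &= \tfrac{1}{\lambda}\, \big\langle |E_{loc}(v)|^{2-\alpha} \big\rangle_{\phi(v)}, \\
\langle (E_{loc}(v)\kappa(v))^2 \rangle_{\phi(v)} &= \tfrac{1}{\lambda^2}\, \big\langle |E_{loc}(v)|^{2-2\alpha} \big\rangle_{\phi(v)},
\end{align*}
so that the difference of variances equals $\tfrac{1}{\lambda} \langle |E_{loc}|^{2-\alpha} \rangle_\phi - \tfrac{1}{\lambda^2}\langle |E_{loc}|^{2-2\alpha} \rangle_\phi$.

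The remaining step is to recognize this quantity as a covariance. Here the key observation is that, under the chosen $\kappa$, the product $\kappa(v)\,|E_{loc}(v)|^\alpha$ collapses to the constant $1/\lambda$, so that averaging $|E_{loc}|^\alpha$ against $\rho^v_v$ gives $\langle |E_{loc}|^\alpha\rangle_{\rho^v_v} = 1/\lambda$; similarly $\langle |E_{loc}|^{2-\alpha}\rangle_{\rho^v_v} = \tfrac{1}{\lambda}\langle |E_{loc}|^{2-2\alpha}\rangle_\phi$. Substituting these back turns the expression above into
\[
\big\langle |E_{loc}|^{2-\alpha}\cdot |E_{loc}|^\alpha\big\rangle_{\rho^v_v} \;-\; \big\langle |E_{loc}|^{2-\alpha}\big\rangle_{\rho^v_v}\big\langle |E_{loc}|^\alpha\big\rangle_{\rho^v_v},
\]
which is precisely $\mathrm{Cov}\big(|E_{loc}|^{2-\alpha}, |E_{loc}|^\alpha\big)$, completing the proof.

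The main obstacle I anticipate is bookkeeping around the normalization constant $\mathcal{N}$ implicit in $\rho^v_v = \mathcal{N}\kappa(v)\phi(v)$; one must verify that the self-consistency condition $\sum_v \rho^v_v = 1$ forces $\mathcal{N}\langle \kappa\rangle_\phi = 1$, and that this is compatible with the parameter $\lambda$ (equivalently, $\lambda$ is fixed by $\lambda = \mathcal{N}\langle |E_{loc}|^{-\alpha}\rangle_\phi^{-1}$ up to a conventional choice). Once this is handled consistently on both sides of the identity, the cancellations described above go through cleanly and the covariance identification is essentially algebraic.
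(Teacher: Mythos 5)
Your proposal is correct and follows essentially the same route as the paper's proof: both cancel the squared means via the unbiasedness lemma, reduce the problem to the difference of second moments, and use the normalization of $\phi$ (equivalently $\langle |E_{loc}|^{\alpha}\rangle_{\rho_v^v}=1/\lambda$) together with the explicit form of $\kappa$ to identify the result as $\mathrm{Cov}\bigl(|E_{loc}|^{2-\alpha},|E_{loc}|^{\alpha}\bigr)$ over $\rho_v^v$. Your detour through $\phi$-averages is only a cosmetic rearrangement of the paper's direct substitution $\rho_v^v/\phi(v)=\langle |E_{loc}|^{\alpha}\rangle_{\rho_v^v}/|E_{loc}(v)|^{\alpha}$, and your remark on the normalization is exactly how the paper fixes $\lambda$.
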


\begin{proof}
    
\begin{align*}
    \sigma_{E_{loc}(v)}^2 &= \sum_v E^2_{loc}(v) \rho_v^v - \langle E_{loc}(v) \rangle_{\rho_v^v}^2 \\
    \sigma_{E_{loc}(v) \kappa(v)}^2 &= \sum_v E_{loc}(v) \kappa(v)^2 \phi(v) - \langle E_{loc}(v) \kappa(v) \rangle_{\phi(v)}^2
\end{align*}

This can tell us how far are each individual estimate of random variable ($E_{loc}(v)$ for $\rho_v^v$ but $E_{loc}(v) \kappa(v)$ for $\phi(v)$) differ from their respective means (which by construction, see \ref{Eq. S3}, are equivalent). Also, in the respective Chebyshev inequality, it is the variance of the respective means that sits and it is not the same random variable. 

\[
P(|E_{loc}(v) \kappa(v) - \langle E_{loc}(v) \kappa(v) \rangle_{\phi(v)}| \geq \epsilon) \leq \frac{\text{Var}(E_{loc}(v) \kappa(v))}{N_s \, \epsilon^2}
\]

\[
P(|E_{loc}(v) - \langle E_{loc}(v) \rangle_{\rho_v^v}| \geq \epsilon) \leq \frac{\text{Var}(E_{loc}(v))}{N_s \, \epsilon^2}
\]

We need to see for the same $N_s$ if the variances $\text{Var}(E_{loc}(v) \kappa(v)), \text{Var}(E_{loc}(v))$ are equivalent or not?


\begin{align*}
    \text{Thus, } &\text{Var}(E_{loc}(v))_{\rho_v^v} - \text{Var}( E_{loc}(v) \kappa(v))_{\phi(v)} \\
    &= \sigma^2_{E_{loc}(v)} - \sigma_{E_{loc}(v) \kappa(v)}^2 \\
    &= \sum_v E^2_{loc}(v) \rho_v^v - \sum_v (E_{loc}(v) \kappa(v))^2 \phi(v) + \langle E_{loc}(v) \kappa(v) \rangle_{\phi(v)}^2 - \langle E_{loc}(v) \rangle_{\rho_v^v}^2 \\
    &= \sum_v E^2_{loc}(v) \rho_v^v - \sum_v (E_{loc}(v) \kappa(v))^2 \phi(v) \\ 
    &= \sum_v E^2_{loc}(v) \rho_v^v - \sum_v E^2_{loc}(v) \kappa^2(v) \phi(v) \\
    &= \sum_v E^2_{loc}(v) \rho_v^v - \sum_v \frac{E^2_{loc}(v) \left( \rho_v^v \right)^2}{\phi(v)} \\
    &= \sum_v E^2_{loc}(v) \rho_v^v \left( 1 - \frac{\rho_v^v}{\phi(v)} \right)
\end{align*}








Now if we assume \quad $\boxed{\kappa(v) = \frac{1}{\lambda | E_{loc}(v) |^\alpha} \implies \phi(v)= \lambda | E_{loc}(v) |^\alpha \rho_v^v}$

\begin{align*}
\text{then} & \quad \lambda = \frac{\sum_v \phi(v)}{\sum_v | E_{loc}(v) |^\alpha \rho_v^v} = \frac{1}{\sum_v | E_{loc}(v) |^\alpha \rho_v^v} \\
\text{or} & \quad \phi(v) = \frac{| E_{loc}(v) |^\alpha \rho_v^v}{\sum_{v'} | E_{loc}(v') |^\alpha \rho_{v'}^{v'}} \\
\text{or} & \quad \frac{\rho_v^v}{\phi(v)} = \frac{\sum_{v'} | E_{loc}(v') |^\alpha \rho_{v'}^{v'}}{| E_{loc}(v) |^\alpha}
\end{align*}


In other words, we have

\begin{align}
& \sigma_{E_{loc}(v)}^2 - \sigma_{E_{loc}(v) \kappa(v)}^2 \nonumber\\ 
&= \sum_v E^2_{loc}(v) \rho_v^v \left( 1 - \frac{\sum_{v'} | E_{loc}(v') |^\alpha \rho_{v'}^{v'}}{| E_{loc}(v) |^\alpha} \right) \nonumber\\
&= \sum_v E^2_{loc}(v) \rho_v^v \left( 1 - \frac{\langle | E_{loc}(v) |^\alpha \rangle_{\rho_{v}^v}}{| E_{loc}(v) |^\alpha} \right) \\
&= \sum_v E^2_{loc}(v) \rho_v^v - \left( \sum_{v} \frac{E^2_{loc}(v) \rho_{v}^v}{| E_{loc}(v) |^\alpha} \right) \left\langle | E_{loc}(v) |^\alpha \right\rangle_{\rho_v^v} \\
&= \left\langle | E_{loc}(v) |^2 \right\rangle_{\rho_v^v} - \left\langle | E_{loc}(v) |^{2-\alpha} \right\rangle_{\rho_v^v} \left\langle | E_{loc}(v) |^\alpha \right\rangle_{\rho_v^v} \\
&= \text{Cov} \left( | E_{loc}(v) |^{2-\alpha}, | E_{loc}(v) |^\alpha \right)
\end{align}

Or
\begin{equation}
\boxed{\sigma_{E_{loc}(v)}^2 - \sigma_{E_{loc}(v) \kappa(v)}^2 = \text{Cov} \left( | E_{loc}(v) |^{2-\alpha}, | E_{loc}(v) |^\alpha \right)}  \label{Eq. S6}
\end{equation}

\end{proof}


\begin{lemma}
Let’s define
\[
Z  = | E_{loc}(v) |^\alpha \qquad and \qquad
g(Z) = | E_{loc}(v) |^{2-\alpha} = Z^{2/\alpha - 1} \nonumber 
\]


Such that from Eq.\ref{Eq. S6}, 
\[{\sigma_{E_{loc}(v)}^2 - \sigma_{E_{loc}(v) \kappa(v)}^2 = \text{Cov} \left( Z^{2/\alpha - 1}, Z \right )
= \text{Cov} \left( g(Z), Z \right )}
\]

Given \(Z \geq 0\), then show that \(\text{Cov}(g(Z), Z) > 0\) if \(g(Z)\) is non-decreasing, i.e., \(g'(Z) \geq 0\).
\end{lemma}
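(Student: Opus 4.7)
The plan is to invoke the classical two independent copies trick, which is the cleanest route to covariance inequalities for monotone transformations of a single random variable. Specifically, I would introduce an independent copy $Z'$ of $Z$ (same distribution, statistically independent), and then rewrite the covariance using the identity
\begin{equation}
2\,\text{Cov}(g(Z), Z) = \mathbb{E}\bigl[(g(Z) - g(Z'))(Z - Z')\bigr].
\end{equation}
This identity follows by expanding the right-hand side and using $\mathbb{E}[g(Z')] = \mathbb{E}[g(Z)]$ and $\mathbb{E}[Z'] = \mathbb{E}[Z]$ together with independence of $Z$ and $Z'$.

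The key observation is then that because $g$ is non-decreasing, the two factors $g(Z) - g(Z')$ and $Z - Z'$ always carry the same sign: whenever $Z \geq Z'$ the monotonicity of $g$ forces $g(Z) \geq g(Z')$, and similarly for the reverse inequality. Hence the product $(g(Z) - g(Z'))(Z - Z') \geq 0$ pointwise, and taking expectation preserves the inequality, yielding $\text{Cov}(g(Z), Z) \geq 0$. First I would state this sign argument carefully (noting the case of equality when $Z = Z'$), then conclude that the covariance is non-negative.

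For the strict inequality asserted in the lemma, one would need to exclude the degenerate case where $g$ is almost surely constant on the support of $Z$ (which for the specific form $g(Z) = Z^{2/\alpha - 1}$ happens only in the trivial case where $|E_{loc}(v)|$ takes a single value $\rho_v^v$-almost surely, i.e.\ when $v$ is already an eigenstate). Under the mild assumption that $|E_{loc}(v)|$ is non-constant over the support of $\rho_v^v$, the event $\{Z \neq Z'\}$ has positive probability and on that event the product is strictly positive (for strictly increasing $g$, equivalently $2/\alpha - 1 > 0$, i.e.\ $0 < \alpha < 2$), giving $\text{Cov}(g(Z), Z) > 0$.

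The argument has no real obstacle since it reduces to a sign inspection of two factors; the only care needed is in justifying the strict inequality, which requires articulating the non-degeneracy condition (non-constant $|E_{loc}|$ and $\alpha \in (0,2)$) that is implicit in the choice of parameterization $\kappa(v) = 1/(\lambda |E_{loc}(v)|^\alpha)$. Once this is explicit, combining with Theorem~\ref{thm:var_reduction} gives the desired variance reduction $\sigma_{E_{loc}(v)}^2 - \sigma_{E_{loc}(v)\kappa(v)}^2 > 0$, establishing the advantage of sampling over the surrogate $\phi(v)$ with this choice of kernel.
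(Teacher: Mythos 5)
Your proof is correct, but it takes a genuinely different route from the paper's. You use the classical symmetrization identity with an independent copy $Z'$ of $Z$, namely $2\,\mathrm{Cov}(g(Z),Z) = \mathbb{E}\bigl[(g(Z)-g(Z'))(Z-Z')\bigr]$, and then observe that monotonicity of $g$ forces the two factors to share a sign pointwise. The paper instead anchors the argument at the deterministic point $\langle Z\rangle$: it inserts $\pm\, g(\langle Z\rangle)$ into the covariance, notes that the cross term $(g(\langle Z\rangle)-\langle g(Z)\rangle)\,\mathbb{E}[Z-\langle Z\rangle]$ vanishes, and is left with $\mathbb{E}\bigl[(Z-\langle Z\rangle)(g(Z)-g(\langle Z\rangle))\bigr]$, which is non-negative by the same sign inspection applied with $Z_1 = Z$ and $Z_2 = \langle Z\rangle$ fixed. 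Both arguments are elementary and of comparable length; yours is the more standard textbook form (it is the Chebyshev-sum/FKG-type argument and generalizes immediately to $\mathrm{Cov}(f(Z),g(Z))\ge 0$ for two co-monotone functions), while the paper's avoids introducing a second random variable. One point in your favor: both approaches establish only $\mathrm{Cov}(g(Z),Z)\ge 0$, and indeed the paper's own boxed conclusion is the weak inequality even though the lemma statement asks for strict positivity; you are the only one who addresses this discrepancy, by articulating the non-degeneracy condition (non-constant $|E_{loc}(v)|$ on the support of $\rho^v_v$ and $\alpha\in(0,2)$ so that $g$ is strictly increasing) under which the event $\{Z\neq Z'\}$ has positive probability and the strict inequality follows.
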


\begin{proof}
    
\begin{align*}
\text{Cov}(g(Z), Z) &= \mathbb{E}[(g(Z) - \langle g(Z) \rangle)(Z - \langle Z \rangle)] \\
&\text{This expectation and } \langle \rangle \text{ is over } \rho_v^v \text{ as is the covariance.} \\
&= \mathbb{E}\left[(Z - \langle Z \rangle)(g(Z) + g(\langle Z \rangle) - g(Z) - \langle g(Z) \rangle)\right] \\
&= \mathbb{E}[(Z - \langle Z \rangle)(g(Z) - g(\langle Z \rangle))]
+ \mathbb{E}[(Z - \langle Z \rangle)(g(\langle Z \rangle) - \langle g(Z) \rangle)] \\
&= \mathbb{E}[(Z - \langle Z \rangle)(g(Z) - g(\langle Z \rangle))] + \mathbb{E}[(Z - \langle Z \rangle)(g(\langle Z \rangle) - \langle g(Z) \rangle)]
\end{align*}



$g(\langle Z \rangle) - \langle g(Z) \rangle$ is independent of \(Z\) and hence can be taken outside expectation. This is because we have \(g\) at \(\langle Z \rangle\) and \(\langle g(Z) \rangle\), both independent of \(Z\) now and just scalars.


\begin{align*}
\Rightarrow \mathbb{E}[(Z - \langle Z \rangle)&(g(Z) - g(\langle Z \rangle))] + (g(\langle Z \rangle) - \langle g(Z) \rangle) \mathbb{E}[(Z - \langle Z \rangle)] \\
&\text{As} \quad \mathbb{E}[(Z - \langle Z \rangle)] = 0 \\
\Rightarrow \mathbb{E}[(Z - \langle Z \rangle)&(g(Z) - g(\langle Z \rangle))]
\end{align*}

Now as \(g(Z)\) is a non-decreasing function,

\begin{align*}
&\frac{g(Z_1) - g(Z_2)}{Z_1 - Z_2} \geq 0 \qquad \text{(follows from derivative itself)} \\
\Rightarrow &(Z_1 - Z_2)^2 \frac{g(Z_1) - g(Z_2)}{Z_1 - Z_2} \geq 0 \quad \\
\Rightarrow &(g(Z_1) - g(Z_2))(Z_1 - Z_2) \geq 0 \\
\end{align*}



Thus,
\[
\mathbb{E}[(g(Z) - g(\langle Z \rangle))(Z - \langle Z \rangle)] \geq 0
\]
where \(Z_1 \to Z\), \(Z_2 \to \langle Z \rangle\). \newline

Now if we take \(Z_1 = Z\) and \(Z_2 = \langle Z \rangle\), then that means
\[
\mathbb{E}[(g(Z) - g(\langle Z \rangle))(Z - \langle Z \rangle)] \geq 0
\]

So,
\[
\boxed{\text{Cov}(g(Z), Z) \geq 0}
\]

\end{proof}


\begin{lemma}
In the range \(0 \leq \alpha \leq 2\), show that \[\sigma_{E_{loc}(v)}^2 - \sigma_{E_{loc}(v) \kappa(v)}^2 = \text{Cov}(Z^{2/\alpha - 1}, Z) \geq 0 \]
\end{lemma}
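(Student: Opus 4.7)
The plan is to invoke the preceding lemma directly: once we identify $g(Z) = Z^{2/\alpha - 1}$ as the function whose covariance with $Z$ we want to control, it suffices to verify that $g$ is non-decreasing on the support $Z \ge 0$ for every $\alpha$ in the stated range. The preceding lemma then delivers $\operatorname{Cov}(g(Z), Z) \ge 0$, which is exactly the claim, modulo handling the boundary values of $\alpha$.

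First, I would fix $0 < \alpha \le 2$ and differentiate, obtaining
\begin{equation*}
g'(Z) \;=\; \Bigl(\tfrac{2}{\alpha} - 1\Bigr)\, Z^{\,2/\alpha - 2} \qquad (Z > 0).
\end{equation*}
Since $0 < \alpha \le 2$ forces $\tfrac{2}{\alpha} \ge 1$, the prefactor $\tfrac{2}{\alpha} - 1$ is non-negative; and $Z^{2/\alpha - 2} \ge 0$ on the support. Hence $g'(Z) \ge 0$ for all $Z > 0$, so $g$ is non-decreasing on $[0,\infty)$. Applying the previous lemma to this $g$ yields
\begin{equation*}
\sigma_{E_{loc}(v)}^2 - \sigma_{E_{loc}(v)\kappa(v)}^2 \;=\; \operatorname{Cov}\!\bigl(Z^{2/\alpha - 1},\, Z\bigr) \;\ge\; 0,
\end{equation*}
as desired.

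The only real subtleties are the endpoints. At $\alpha = 2$, $g(Z) = Z^{0} \equiv 1$ is constant on $Z > 0$, and $\operatorname{Cov}(\mathrm{const}, Z) = 0$, so the inequality holds trivially (and is saturated). At $\alpha = 0$, the parameterisation $\kappa(v) = 1/(\lambda |E_{loc}(v)|^\alpha)$ reduces to $\kappa(v) = 1/\lambda$, in which case $\phi(v) \propto \rho_v^v$ and the two variances coincide; I would note this case separately rather than try to take a limit inside the exponent. The main thing to be careful about is configurations where $E_{loc}(v) = 0$, so that $Z = 0$: at such points $g(Z)$ may be undefined for $2/\alpha - 1 < 0$, but this sub-range ($\alpha > 2$) is outside the hypothesis, and for $0 < \alpha \le 2$ the exponent $2/\alpha - 1 \ge 0$ makes $g$ continuous at $0$, so no pathology arises. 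Thus the only genuine step is the monotonicity check, and the rest is bookkeeping.
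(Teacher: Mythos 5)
Your proof is correct and takes essentially the same route as the paper: verify that $g(Z) = Z^{2/\alpha - 1}$ is non-decreasing on $Z \ge 0$ when $0 \le \alpha \le 2$ (via the sign of $g'(Z) = (\tfrac{2}{\alpha}-1)Z^{2/\alpha-2}$) and then invoke the preceding lemma to conclude $\operatorname{Cov}(g(Z),Z) \ge 0$. Your handling of the endpoints $\alpha = 0$ and $\alpha = 2$, and of configurations with $Z = 0$, is in fact slightly more careful than the paper's, which simply records these as special points in a table.
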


\begin{proof}

When \(2 < \alpha\),
\(Z^{2/\alpha - 1}\) is a non-increasing function as
\[
g'(Z) = \frac{(2/\alpha - 1)}{Z} Z^{2/\alpha - 1} \leq 0 \quad [Z \geq 0]
\]

If that is the case, by the exact same logic in the last statement, we can say
\[
\text{Cov}(g(Z), Z) \leq 0
\]


Also, \(\alpha \leq 0\)

Even then \(g(Z) = Z^{2/\alpha - 1}\) is non-increasing as
\[
g'(Z) = \frac{1}{Z} Z^{2/\alpha - 1} \geq 0 \quad \text{but non-increasing}
\]

Thus even then
\[
\text{Cov}(g(Z), Z) \leq 0
\]


So we have a full characterization:

\[
\sigma_{E_{loc}(v)}^2 - \sigma_{E_{loc}(v) \kappa(v)}^2 = \text{Cov}(|E_{loc}(v)|^{2-\alpha}, |E_{loc}(v)|^\alpha)
= \text{Cov}(Z^{2/\alpha - 1}, Z), \quad (Z = |E_{loc}(v)|^\alpha \geq 0)
\]

\[
\begin{array}{|c|c|c|c|}
\hline
\alpha < 0 & 0 \leq \alpha \leq 2 & \alpha > 2 \\
\hline
Z^{2/\alpha - 1} \text{ is non-increasing} & Z^{2/\alpha - 1} \text{ is non-decreasing} & Z^{2/\alpha - 1} \text{ is non-increasing} \\
\text{Cov}(Z^{2/\alpha - 1}, Z) \leq 0 & \text{Cov}(Z^{2/\alpha - 1}, Z) \geq 0 & \text{Cov}(Z^{2/\alpha - 1}, Z) \leq 0 \\
\hline
\end{array}
\]

Special points:

\[
\begin{array}{|c|c|c|c|}
\hline
\alpha = 0 & \alpha = 1 & \alpha = 2 \\
\hline
\text{Cov}(Z^{2/\alpha - 1}, Z) = 0 & \text{Cov}(Z, Z) \geq 0 & \text{Cov}(1, Z) = 0 \\
\hline
\end{array}
\]

\end{proof}

\newpage
\section{Data driven construction of the surrogate probability distribution }
In the previous sections, we have shown that any probability distribution can be written in terms of an exponential of an arbitrary-degree polynomial function of the spin configuration (see Theorem \ref{thm:factor_theo_main} in main manuscript or Theorem \ref{thm:factor_theo} in Section S2). So far we have proven that it is always possible to represent an arbitrary probability distribution $P(\vec{v}):\mathbb{V}\to[0,1]$ as $\mathcal{N}\kappa(\vec{v})\phi(\vec{v})$ where $\phi(\vec{v})$ is the distribution of the surrogate network defined as 

\[\phi({\vec{v}}, \vec{X}) \propto exp \left(-\beta \sum_i l_i(\vec{X}) v_i + \sum_{ij} J_{ij}(\vec{X}) v_i v_j + \cdots \right) \]
Here $l_i \in \vec{l}(\vec{X})$ denotes the on-site field at each spin of the surrogate network and  $J_{ij}(\vec{X}) \in \vec{J}(\vec{X})$ denotes the mutual coupling between a pair of spins of the surrogate. The kernel function \( \kappa(\vec{v}) \) accounts for additional flexibility in the construction when the surrogate distribution is truncated to a finite degree $k$. 
In this section we provide a concrete data-driven recipe to construct the same. Here we choose $P(\vec{v})=\rho_v^v$ where $\rho_v^v$ is the probability distribution associated with graph $G_1$ in main text. The surrogate so defined (see graph $G_2$ in main text) has been truncated to second-degree i.e. $k=2$. Even though the algorithm we describe below works for any value of $k$, however, it should be noted that implementing a polynomial of degree \( k \) requires \( k \)-qubit gates in a quantum circuit, and hence a large $k$ can increase the complexity of the circuit. Therefore, one would prefer to work with low-degree polynomials for practical implementation which motivates the choice for $k=2$. At the end of this protocol we shall have a decomposition of the following kind

\[
 \rho_v^v(v,\vec{X},\beta) = \phi(v,\vec{X}) \kappa(v,\vec{X},\beta) \; + \epsilon(v) 
 \]
 where $\epsilon(\vec{v})$ is the fitting error associated with the approximation $\rho_v^v \approx \phi(v) \kappa(v)$.


\subsection{Sampling configurations for fitting}
To ensure that the surrogate distribution closely approximates the actual distribution, we aim to fit the parameters of the polynomial function through a non-linear fitting process. Ideally, we would use all possible configurations to fit these parameters. However, this approach would be computationally prohibitive and moot the advantage offered by our algorithm. Therefore, we strategically select a sample of configurations to perform the fitting. The selected configurations include:
\begin{itemize}
    \item \textbf{The configurations with largest $\rho_v^v$ value}: These configurations are important because they are the most probable and thus contribute significantly to the overall distribution. We can find these configurations efficiently using the algorithm outlined below. Section \ref{sec: draw samples} Selecting these ensures that the model accurately represents the high-probability regions of the distribution.
    
    \item \textbf{Random configurations}, which represent the bulk of the less probable configurations. Incorporating these ensures the model generalizes well and prevents overfitting to the most probable configurations alone.
\end{itemize}

Focusing on the most probable configurations is crucial because they dominate the sampling process; these configurations have the highest \( \rho_v^v \) values and will appear most frequently during sampling. When fitting the surrogate distribution \( \phi(v) \), we need to ensure it performs well for these configurations since they represent the most critical part of the probability landscape. However, there is a risk associated with only using the best configurations: overfitting. If the model fits exclusively to high-probability configurations, it may perform poorly on low-probability ones. In extreme cases, this could cause the model to predict low-probability configurations so inaccurately that they interfere with high-probability predictions, compromising the overall accuracy. [See Fig \ref{fig:sample_choice}] This is why we include random configurations in the sample. These configurations, though less probable, help to ensure that the surrogate distribution generalizes well across the entire probability space. They prevent the model from overfitting to the high-probability regions and ensure that \( \phi(v) \) adequately represents both high- and low-probability configurations. By constructing the surrogate network with this balanced set of configurations—both the most probable and a representative sample of lower-probability ones—we create a robust model that maintains accuracy across the full distribution without being biased toward a narrow set of configurations. We keep the split between the best configuration and random configuration as a hyperparameter. In this work, we maintain a ratio of 25\% best configurations and 75\% random configurations. The total number of samples is kept below \( O(n^2) \).\\

\subsection{Algorithm for finding q-largest configurations} \label{sec: draw samples}
Given a probability distribution $\rho_v^v$ which has its support over $O(2^n)$ spin configurations as $\vec{v} \in \mathbb{V}$. We need to find $q$-spin configurations that maximize $\rho_v^v$ which from the main text is defined as:  
$\rho_v^v = \text{exp} \Big(-2\beta \sum_{i=1}^n Re(a_i) v_i \Big) \prod_{j=1}^m |\cosh(b_j + \sum_{i=1}^n w_{ij} v_i)|^2$
We exploit the structure of the probability distribution to find the $q$-largest configurations.
Note that the first exponential term is maximized for a configuration $\{v_i\}^{(0)} = \{-\text{sgn}(a_i)\}$ for \( i = 1, \dots, n \). Individual cosh terms are maximized when $\{v_i\}^{(j)} = \text{sgn}(\text{Re}(w_{ij}))$ or $\{v_i\}^{(m+j)} = -\text{sgn}(\text{Re}(w_{ij}))$ for \( j = 1, \dots, m \). Starting from an initial set of 2m+1 candidates, we run an iterative algorithm that perturbs these configurations to identify the approximate $q$-largest configurations.\\ 

\textbf{Algorithm Steps}
\begin{enumerate}
    \item Initialize a list \textbf{config} to store the candidate spin configurations and another list \textbf{result} to store the spin configuration with the largest $\rho_v^v$. 
    \item Choose \{-sgn(Re(a)), sgn(Re($w_j$)), and -sgn(Re($w_j$))\} as the first (2m+1) contenders to the largest spin configurations and store them in \textbf{config}. These seeds are chosen because they individually maximize different terms in $\rho_v^v$. 
    \item Compute $\rho_v^v$ (upto normalization) for each (2m+1) configuration, sort them in descending order of their $\rho_v^v$, and remove any duplicates.  
    \item Run q iterations: 
    \begin{enumerate}
        \item Select a configuration $v_i$ $\in$ \textbf{config} with the largest $\rho_v^v$ and store it in \textbf{result}. Remove it from \textbf{config}. 
        \item Perform single-site perturbations on $v_i$, flipping each spin to generate $n$ new candidate configurations $\{v_i^j\}_{j=1}^n$. 
        \item For each new configuration $v_i^j$ not already in config, compute $\rho_v^v(v_i^j)$ and merge it into \textbf{config} based on its $\rho_v^v(v_i^j)$ values. 
        \item Trim \textbf{config} to maintain only the top q candidates by discarding configurations with small $\rho_v^v(v_i^j)$ values.
    \end{enumerate}
    \item Return \textbf{result} - the approximate q-best configurations.\\
\end{enumerate}

\textbf{Time Complexity}: 
The algorithm runs in \( O(q n)\). At each iteration, we generate \( O(n) \) new configurations and insert each into a sorted list. Depending on the choice of data structure, the insertion operation can take \( O(\log(q)) \) time for an array or can be done in $O(1)$ time by having a heap implementation. The nature of the algorithm is heuristic and works well with high probability for general parameter sets we dealt with so far. However, it might be possible to curate special instances where it might fail to obtain the best k configurations .\\

\begin{figure}[ht]
    \centering
        \includegraphics[width=1.0\textwidth]{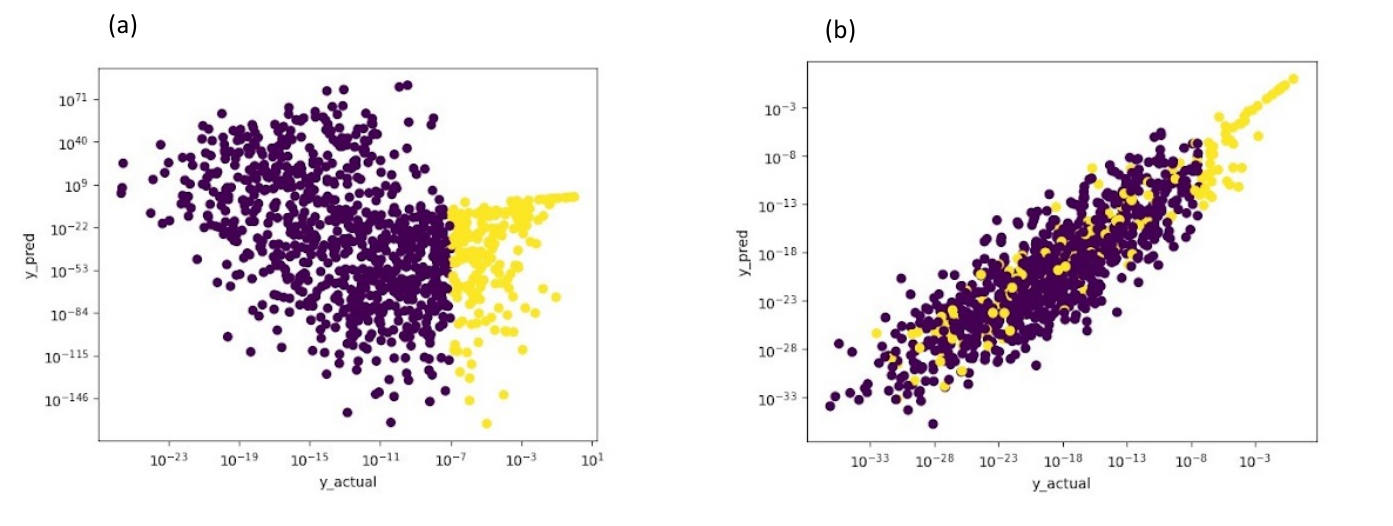}
    \caption{The figure demonstrates how the choice of configuration significantly impacts the quality of the model's predictions. In both plots, yellow points represent configurations used during fitting, while blue points denote configurations not included in the fitting process and are unseen to the model. An ideal fitting would be seen as a straight line with the prediction values ($y_{\text{pred}}$) exactly equal to the actual values ($y_{\text{actual}}$). (a) The left figure shows result of fitting performed using only the top configurations. While the model predicts fairly well for the selected configurations (yellow points), it fails dramatically for unseen configurations (blue points), especially the prediction for lower value points exceed even the high value points in some cases. (b) In the right Figure, we incorporate a mix of random configurations alongside the top configurations, and it can be seen that the model maintains accurate predictions for high values (large $y_{\text{actual}}$) while keeping small values in check. Also note how the algorithm for selecting the best configurations in Section \ref{sec: draw samples} ensures that all configurations with large $y_{\text{actual}}$ values are chosen accurately, as evidenced by the consistent presence of yellow points on the far-right side of the plots.}
    \label{fig:sample_choice}
\end{figure}

\subsection{Final Fitting Algorithm}\label{Sec: Fit Algo}


We achieve the best fit using a two-fold optimization protocol. First, we fit the logarithm of the $\rho_v^v$ distribution to a polynomial model through a weighted least-squares (LSQ) fitting procedure. Then, we refine the results using non-linear optimization techniques, such as the Broyden–Fletcher–Goldfarb–Shanno (BFGS) algorithm. This two-step approach is necessary because the optimization landscape is highly non-convex, making the results of non-linear optimization highly sensitive to the initial parameter values. The LSQ fitting step provides a robust starting point by converging to an optimal solution with high reliability. This initial solution is then used as input for the non-linear optimization step, which fine-tunes the parameters to achieve a more accurate fit. This meticulous process is crucial because the fitting procedure plays an essential role in constructing the surrogate network. Ensuring that this mapping is as precise as possible is fundamental to the success of the entire framework.\\

\textbf{Polynomial fitting:}
We transform the above non-linear fitting problem into a polynomial fitting problem by taking a logarithm of both sides. 
\[\text{log} \rho_v^v(v,\vec{X}) \approx c_0 + \sum_i l_i(\vec{X}) v_i + \sum_{i,j} J_{ij}(\vec{X}) v_i v_j + \cdots \]

The polynomial to be obtained by fitting has a constant term \(c_0\), a linear term involving \(v_i\), with coefficients \(l_i(\vec{X})\),
and quadratic terms involving \(v_i v_j\), with coefficients \(J_{ij}(\vec{X})\). Note that we restrict our discussion to polynomials with up to quadratic terms, but the method below is general for any order of polynomials. 

We aim to find the optimal values for \(c_0\), \(l_i\), and \(J_{ij}\) from samples ($\{v^{i}\}$) drawn from the distribution \(\rho_v^v(v)\) as mentioned in Section \ref{sec: draw samples}. However, we want to avoid having small values of \(\rho_v^v(v)\) (which leads to highly negative log values) dominate the fitting process. Therefore, we perform a weighted fit, where the weights are proportional to \(\rho_v^v(v)\). This gives more influence to higher values of \(\rho_v^v(v)\) during the fitting process. 
The fitting process solves a weighted least squares (lsq) problem to find the coefficients \(c_0\), \(l_i\), and \(J_{ij}\) that best approximate the function. Mathematically, we aim to minimize the weighted sum of squared residuals:
\[
\min_{\theta} \left\| T (A \theta - Y) \right\|_2
\]

Where:
\begin{itemize}
    \item \(\theta = [c_0, l_1, l_2, \dots, l_N, J_{11}, J_{12}, \dots, J_{NN}]\) represents the vector of coefficients (parameters).
    \item \(Y\) is the vector of actual values \(\{\log \rho_v^v(v^{(i)})\}\) for the sample set \(\{v^{(i)}\}\).
    \item \(T\) is a diagonal weight matrix, where \(T_{ii} = \rho_v^v(v^{(i)})\).
\end{itemize}

The matrix \(A\) is the design matrix that contains the constant, linear, and quadratic terms. Its structure is:
\[
A = \begin{bmatrix}
1 & v_1^{(1)} & v_2^{(1)} & \dots & v_N^{(1)} & (v_1^{(1)})^2 & v_1^{(1)} v_2^{(1)} & \dots & (v_N^{(1)})^2 \\
1 & v_1^{(2)} & v_2^{(2)} & \dots & v_N^{(2)} & (v_1^{(2)})^2 & v_1^{(2)} v_2^{(2)} & \dots & (v_N^{(2)})^2 \\
\vdots & \vdots & \vdots & \vdots & \vdots & \vdots & \vdots & \vdots \\
1 & v_1^{(k)} & v_2^{(k)} & \dots & v_N^{(k)} & (v_1^{(k)})^2 & v_1^{(k)} v_2^{(k)} & \dots & (v_N^{(k)})^2 \\
\end{bmatrix}
\]


The solution to this problem is given by:
\[
\theta = (A^T T A)^{-1} A^T T Y
\]

\textbf{Non-linear fitting:}
The optimal values for the parameters \(\theta = \{c_0, l_i, J_{ij}\} \), obtained from the preceding least-squares optimization, serve as a starting point for the non-linear optimization procedure. This step refines the parameter estimates by directly minimizing the difference between the distributions, yielding more accurate results, in contrast to the previous step, where the logarithms of the distributions were compared.

The cost function for this optimization is defined as:  
\[ 
C = \min_\theta \left\| \exp(P_k(\theta,v)) - \rho_v^v(v) \right\|_2 
\]  
where \(\exp(P_k(\theta,v))\) represents the approximate probability distribution represented by the exponential of the Polynomial dependent on the parameters $\theta$, and \(\rho_v^v(v)\) is the target distribution. The goal is to minimize the \(L_2\)-norm of the difference between these distributions. 
In our work, we employ the BFGS algorithm, a quasi-Newton optimization method, to minimize the cost function. BFGS uses both gradient information and an approximation of the Hessian matrix to guide the search for the optimal parameters. 
Gradients are approximated using finite differences, allowing BFGS to iteratively refine the parameters despite the absence of explicit derivatives. The result of the non-linear fitting provides the optimal values for the parameters \(c_0\), \(l_i\), and \(J_{ij}\). These parameters can then be used to construct the values of \(\phi_v(\vec{X})\) for any given configuration \(v\).

\newpage
\section{Gradient Expressions}

In this section, we derive the gradient expressions for the state ansatz with respect to its parameters. These gradients are crucial for training the model parameters. We provide analytical forms for the gradient of the density matrix and the expectation value of observables. 

\begin{lemma}  
$\partial_{x_i} \rho_{v'}^v = \mathcal{D}_{v'}^v (x_i) \odot \rho_{v'}^v$; where the '$\odot$' represents element-wise Hadamard product between matrices and the matrix $\mathcal{D}_{v'}^v (x_i)$ for various parameters $x_i$ is given as follows:
\end{lemma}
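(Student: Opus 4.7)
The plan is to prove this by logarithmic differentiation. Since each matrix element of $\rho^v_{v'}$ in Eq.\ref{eq:Ising_state} is a product of exponentials and hyperbolic cosines, the identity $\partial_{x_i} \rho^v_{v'} = \rho^v_{v'} \cdot \partial_{x_i} \log \rho^v_{v'}$ holds pointwise in $(v,v')$, so the Hadamard form of the claim is immediate once we identify $\mathcal{D}^v_{v'}(x_i) := \partial_{x_i} \log \rho^v_{v'}$. The remaining work is to evaluate this logarithmic derivative explicitly for each of the six parameter classes $\{a_k, a_k^*, b_j, b_j^*, W_{kj}, W_{kj}^*\}$, which are the entries of the table the lemma is setting up.

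First I would take the logarithm of Eq.\ref{eq:Ising_state}, obtaining schematically $\log \rho^v_{v'} = -\beta \sum_i a_i v_i + \sum_i a_i^* v_i' + \sum_j \log \Gamma_j(\vec v, \vec b, \vec W) + \sum_j \log \Gamma_j(\vec v', \vec b^*, \vec W^*) - \log Z(\vec X)$, where $Z(\vec X) = \sum_v \tilde\rho^v_v$ is the partition function. Treating holomorphic and antiholomorphic coordinates as independent via Wirtinger calculus, I would differentiate term by term; the chain rule with $\partial \log \cosh(x) = \tanh(x)$ yields hyperbolic-tangent factors at the appropriate arguments, while the $a$-terms give simple linear functions of $v_k$ or $v_k'$. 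This immediately produces entries such as $\mathcal{D}^v_{v'}(a_k) = -\beta v_k$, $\mathcal{D}^v_{v'}(a_k^*) = v_k'$, and $\mathcal{D}^v_{v'}(W_{kj}) = \beta v_k \tanh\bigl(\beta b_j + \beta \sum_i W_{ij} v_i\bigr)$, with conjugate analogues for the starred parameters.

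The normalization piece deserves a brief remark. Because $\partial_{x_i} \log Z$ is independent of $(v,v')$, it appears only as a constant shift inside $\mathcal{D}^v_{v'}(x_i)$, so it still respects the Hadamard factorization. When the lemma is later plugged into Eq.\ref{eq:loc_grad}, this constant shift is automatically cancelled by the covariance structure $\langle \mathcal{D}_{X_i} \odot H \rangle - \langle \mathcal{D}_{X_i}\rangle \langle H \rangle$, so one may equivalently state the table in terms of the unnormalized kernel $\tilde\rho^v_{v'}$ without loss.

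The main obstacle is bookkeeping rather than anything conceptual. The expression for $\rho^v_{v'}$ distributes the holomorphic and antiholomorphic dependence asymmetrically between $v$ and $v'$, so for each parameter one has to verify whether the surviving $\tanh$ argument involves $\vec b, \vec W$ evaluated on $\vec v$, or $\vec b^*, \vec W^*$ evaluated on $\vec v'$, and to keep the $\beta$ prefactors and complex conjugations straight. Once the six cases are written out carefully, the table entries of $\mathcal{D}^v_{v'}(x_i)$ follow directly and the lemma is established.
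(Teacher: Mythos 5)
Your proposal is correct and follows essentially the same route as the paper: logarithmic differentiation of the unnormalized kernel, with $\mathcal{D}^v_{v'}(x_i)=\partial_{x_i}\ln\rho^v_{v'}$ and the $\tanh$ factors arising from $\partial_x\ln\cosh x=\tanh x$, keeping track of which of $(\vec v,\vec b,\vec W)$ versus $(\vec v',\vec b^*,\vec W^*)$ each term depends on. The only differences are cosmetic — you differentiate in the Wirtinger coordinates $(a_k,a_k^*)$ etc.\ whereas the paper's table is written in terms of $\mathrm{Re}$ and $\mathrm{Im}$ parts (a trivial linear change of basis), and your explicit remark that the $\partial_{x_i}\log Z$ shift is configuration-independent and cancels in Eq.~\ref{eq:loc_grad} is a small point of care the paper leaves implicit.
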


\[
\begin{array}{|c|c|}
\hline
x_i & \mathcal{D}^v_{v_i'} (x_i) \\ 
\hline
\text{Re}(a_k) &  -\beta \left( v_k + v_k' \right) \\ 
\hline
\text{Im}(a_k) &  -i\beta \left( v_k - v_k' \right) \\ 
\hline
\text{Re}(b_p) &  \beta \Big\{ \tanh\left( \beta b_p + \beta \sum_{i=1}^n W_{ip} v_i \right) \\ 
& + \tanh\left( \beta b_p^* + \beta \sum_{i=1}^n W_{ip}^* v_i' \right) \Big\} \\ 
\hline
\text{Im}(b_p) &  i\beta \Big\{ \tanh\left( \beta b_p + \beta \sum_{i=1}^n W_{ip} v_i \right) \\ 
& - \tanh\left( \beta b_p^* + \beta \sum_{i=1}^n W_{ip}^* v_i' \right) \Big\} \\ 
\hline
\text{Re}(W_{kp}) & \beta \Big\{ \tanh\left( \beta b_p + \beta \sum_{i=1}^n W_{ij} v_i \right) v_k  \\ 
& + \tanh\left( \beta b_p^* + \beta \sum_{i=1}^n W_{ip}^* v_i' \right) v_k'  \Big\} \\ 
\hline
\text{Im}(W_{kp}) & i \beta \Big\{ \tanh\left( \beta b_p + \beta \sum_{i=1}^n W_{ip} v_i \right) v_k  \\ 
& - \tanh\left( \beta b_p^* + \beta \sum_{i=1}^n W_{ip}^* v_i' \right) v_k'  \Big\} \\ 
\hline


\end{array}
\]

\begin{proof}
The goal is to compute the derivative of \(\rho^v_{v'}\) with respect to various parameters \(x_i\). The state \( \rho(\vec{x}) \) is defined as:

\begin{align*}
\rho^v_{v^\prime}(\vec{x}) \propto exp \left({-\beta \sum_i a_i v_i + \sum_i a_i^* v_i^\prime}\right)  \:\prod_{j=1}^m \Gamma_j (\vec{v},\vec{b}, \vec{W}) \:\Gamma_j (\vec{v}^\prime,\vec{b^*}, \vec{W^*}) \label{eq:Ising_state}
\end{align*} 

where, \(\Gamma_j(\vec{v}, \vec{b}, \vec{W}) = \cosh \left[ \beta \left(b_j + \sum_{i=1}^n W_{ij} v_i\right) \right]\)

We take the logarithmic derivative of \(\rho^v_{v'}\) with respect to \(x_i\):

\[
\boxed{\partial_{x_i} \rho^v_{v'} = \rho^v_{v'} \cdot \partial_{x_i} \ln \rho^v_{v'}}
\]

We denote $\mathcal{D}_{v'}^v (x_i) = \partial_{x_i} \ln \rho^v_{v'} $. 
Therefore $\partial_{x_i} \rho_{v'}^v = \mathcal{D}_{v'}^v (x_i) \odot \rho_{v'}^v$; where the '$\odot$' represents element-wise Hadamard product between matrices and the matrix $\mathcal{D}_{v'}^v (x_i)$ 

The logarithm of \(\rho^v_{v'}\) is:

\[
\ln \rho^v_{v'} = -\beta \sum_i \left( a_i v_i + a_i^* v_i^\prime \right) + \sum_{j=1}^m \left[ \ln 
\Gamma_j (\vec{v},\vec{b}, \vec{W})
\: +\ln \Gamma_j (\vec{v}^\prime,\vec{b^*}, \vec{W^*})
\right]
\]

Now, we take the derivative of each term w.r.t the parameters $x_i$: 

- For the first term:

\[
\partial_{x_i} \left[ -\beta \sum_i \left( a_i v_i + a_i^* v_i^\prime \right) \right]
\]

is nonzero only if \(x_i\) corresponds to \(\mathrm{Re}(a_k)\) or \(\mathrm{Im}(a_k)\), giving the contributions:

\[
\boxed{\partial_{\mathrm{Re}(a_k)} = -\beta (v_k + v_k^\prime)} \qquad \boxed{\partial_{\mathrm{Im}(a_k)} = -i\beta (v_k - v_k^\prime)}
\]

- For the second term, 
\[
\ln \Gamma_j = \ln \cosh\left(\beta b_j + \beta \sum_{i=1}^n W_{ij} v_i\right),
\]

depends only $b$ and $W$ parameters. 
\begin{itemize}
\item For \(b_p\):

Using \(\tanh x = \partial_x \left(\ln \cosh x\right)\), we have:
\[
\partial_{\mathrm{Re}(b_p)} \ln \Gamma_j = \tanh\left(\beta b_p + \beta \sum_{i=1}^n W_{ip} v_i \right).
\]
Similarly, for the conjugate term from \(\Gamma_j(\vec{v}', \vec{b}^*, \vec{W}^*)\):
\[\partial_{\mathrm{Re}(b_p)} \ln \Gamma_j^\prime = \tanh\left(\beta b_p^* + \beta \sum_{i=1}^n W_{ip}^* v_i^\prime \right).
\]
Adding these:
\[
\boxed{{\partial_{\mathrm{Re}(b_p)} \ln \rho^v_{v'} = \beta \left[\tanh\left(\beta b_p + \beta \sum_{i=1}^n W_{ip} v_i \right) + \tanh\left(\beta b_p^* + \beta \sum_{i=1}^n W_{ip}^* v_i^\prime \right)\right]}}.
\]
Similarly for \(\mathrm{Im}(b_p)\),
\[
\boxed{{\partial_{\mathrm{Im}(b_p)} \ln \rho^v_{v'} = i \beta \left[\tanh\left(\beta b_p + \beta \sum_{i=1}^n W_{ip} v_i \right) - \tanh\left(\beta b_p^* + \beta \sum_{i=1}^n W_{ip}^* v_i^\prime \right)\right]}}.
\]

\item For \(W_{kp}\):

The derivative includes the dependence on \(v_k\) and \(v_k'\):
\[
\boxed{\partial_{\mathrm{Re}(W_{kp})} \ln \rho^v_{v'} = \beta \left[\tanh\left(\beta b_p + \beta \sum_{i=1}^n W_{ip} v_i \right)v_k + \tanh\left(\beta b_p^* + \beta \sum_{i=1}^n W_{ip}^* v_i^\prime \right)v_k^\prime\right]},
\]
and similarly for \(\mathrm{Im}(W_{kp})\):
\[
\boxed{\partial_{\mathrm{Im}(W_{kp})} \ln \rho^v_{v'} = i\beta \left[\tanh\left(\beta b_p + \beta \sum_{i=1}^n W_{ip} v_i \right)v_k - \tanh\left(\beta b_p^* + \beta \sum_{i=1}^n W_{ip}^* v_i^\prime \right)v_k^\prime\right]}.
\]

\end{itemize}

\end{proof}

\newpage
\begin{lemma}
    $\partial_{x_i} \langle O \rangle (\vec{X}) = \left\langle \left[\mathcal{D}_{x_i} \odot O^T\right]_\rho (v) \right\rangle_{\rho_v^v} - \left\langle \mathcal{D}_{x_i}(v) \right\rangle_{\rho_v^v} \left\langle O_{\rho} (v) \right\rangle_{\rho_v^v}$
\end{lemma}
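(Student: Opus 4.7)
The plan is to differentiate the expectation value $\langle O \rangle(\vec{X}) = \mathrm{Tr}(\rho(\vec{X}) O)/\mathrm{Tr}(\rho(\vec{X}))$ directly via the quotient rule, then convert each of the three traces that appear into a diagonal sample estimate with respect to $\rho^v_v/\mathrm{Tr}(\rho)$ using the previous lemma. Since $\rho(\vec{X})$ need not be normalized, I would first write
\begin{equation*}
\partial_{x_i}\langle O\rangle(\vec{X}) = \frac{\mathrm{Tr}(\partial_{x_i}\rho\cdot O)}{\mathrm{Tr}(\rho)} - \frac{\mathrm{Tr}(\rho O)}{\mathrm{Tr}(\rho)}\cdot\frac{\mathrm{Tr}(\partial_{x_i}\rho)}{\mathrm{Tr}(\rho)}.
\end{equation*}
The second factor on the right is just $\langle O\rangle$ again, which will be responsible for the subtractive term in the claim.

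Next I would invoke the previous lemma, $\partial_{x_i}\rho^v_{v'} = \mathcal{D}^v_{v'}(x_i)\odot\rho^v_{v'}$, in each trace. For the first term I get
\begin{equation*}
\frac{\mathrm{Tr}(\partial_{x_i}\rho\cdot O)}{\mathrm{Tr}(\rho)} = \sum_{v}\frac{\rho^v_v}{\mathrm{Tr}(\rho)}\sum_{v'}\mathcal{D}^v_{v'}(x_i)\,O^{v'}_v\,\frac{\rho^v_{v'}}{\rho^v_v} = \Big\langle \big[\mathcal{D}_{x_i}\odot O^T\big]_\rho(v)\Big\rangle_{\rho^v_v},
\end{equation*}
where I recognize the inner $v'$-sum as precisely the "local" operator $\big[\mathcal{D}_{x_i}\odot O^T\big]_\rho(v)$ in the same sense that $E_{\mathrm{loc}}(v)$ was defined in the main text. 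Analogously, the trace without $O$ in the numerator reduces to $\langle\mathcal{D}_{x_i}(v)\rangle_{\rho^v_v}$ because $\partial_{x_i}\rho^v_v = \mathcal{D}^v_v(x_i)\rho^v_v$ involves only the diagonal entries, while $\mathrm{Tr}(\rho O)/\mathrm{Tr}(\rho) = \langle O_\rho(v)\rangle_{\rho^v_v}$ by the same rewriting that produced Eq.\ref{eq:loc_energy}.

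Collecting these three pieces gives the claimed identity. There is essentially no "hard part"; the only bookkeeping step that deserves care is tracking the transpose in $\big[\mathcal{D}_{x_i}\odot O^T\big]_\rho(v)$: because $O$ couples the ket index $v$ of $\rho^v_{v'}$ to the bra index via $O^{v'}_v$, the element-wise product pairs $\mathcal{D}^v_{v'}$ with $O^{v'}_v = (O^T)^v_{v'}$, which is exactly what the notation in the statement encodes. Once this index convention is fixed, every expectation is automatically of the form $\sum_v (\rho^v_v/\mathrm{Tr}(\rho))\,[\,\cdot\,](v)$, which is the sampling distribution used throughout the algorithm, so the resulting expression is directly estimable from the same Markov-chain configurations $\{\vec v^{(i)}\}$ already drawn for $\mu_{\langle H\rangle(\vec X)}$, at no additional query cost.
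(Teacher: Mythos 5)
Your proof is correct and follows essentially the same route as the paper's: apply the quotient rule to $\mathrm{Tr}(\rho O)/\mathrm{Tr}(\rho)$, substitute $\partial_{x_i}\rho^v_{v'}=\mathcal{D}^v_{v'}(x_i)\,\rho^v_{v'}$ from the preceding lemma into each trace, and recognize the resulting $v'$-sums as local estimators averaged over $\rho^v_v$. Your remark on the transpose convention in $\left[\mathcal{D}_{x_i}\odot O^T\right]_\rho(v)$ matches the index bookkeeping the paper performs implicitly.
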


\begin{proof}

\begin{align*}
\langle O \rangle (\vec{X}) &= \frac{Tr (O \rho (\vec{X}))}{Tr (\rho (\vec{X}))} \\
\partial_{x_i} \langle O \rangle (\vec{X}) &= \partial_{x_i} \left(  \frac{Tr (O \rho (\vec{X}))}{Tr (\rho (\vec{X}))} \right) \\
&= \partial_{x_i} \left( \frac{\sum_{v, v'} \rho_{v'}^v O_v^{v'}}{\sum_v \rho_v^v (\vec{X})} \right) \\
&= \frac{ \left( \sum_{v, v'} \partial_{x_i} \rho_{v'}^v O_v^{v'} \right)}{\sum_v \rho_v^v (\vec{X})}  -   \frac{\left( \sum_{v} \partial_{x_i} \rho_{v}^v \right) \left( \sum_{v,v'} \rho_{v'}^{v} O_v^{v'} \right)}{\left( \sum_v \rho_v^v (\vec{X}) \right)^2}  \\
&= \frac{\sum_{v, v'} \mathcal{D}_{v'}^v (x_i) \rho_{v'}^v O_v^{v'} }{\sum_v \rho_v^v (\vec{X})}  -  \left( \frac{ \sum_{v} \mathcal{D}_{v}^v (x_i) \rho_{v}^v }{\sum_v \rho_v^v (\vec{X})} \right) \left( \frac{ \sum_{v,v'} \rho_{v'}^{v} O_v^{v'} }{\sum_v \rho_v^v (\vec{X}) } \right)  \\
&= \sum_v \rho_v^v \left( \frac{\sum_{v'} \rho_{v'}^v \left[\mathcal{D}_{x_i} \odot O^T\right]_{v}^{v'} }{\rho_v^v} \right) / \sum_v \rho_v^v (\vec{X})  - \left\langle \mathcal{D}_{x_i}(v) \right\rangle_{\rho_v^v} \left\langle O_{\rho} (v) \right\rangle_{\rho_v^v}  \\
&= \left\langle \left[\mathcal{D}_{x_i} \odot O^T\right]_\rho (v) \right\rangle_{\rho_v^v} - \left\langle \mathcal{D}_{x_i}(v) \right\rangle_{\rho_v^v} \left\langle O_{\rho} (v) \right\rangle_{\rho_v^v}  \\
\end{align*}
By substituting $O=H$, i.e. the hamiltonian of the driver system in the above expression, we get analytical gradients used for training parameters $\vec{X}$.
\end{proof}

\newpage
\section{Quantum Circuits}

The sampling circuit for quantum proposals requires a \( k \)-qubit generalization of the \( R_{zz} \) gate, denoted as \( R_{zzz\ldots z} \). Here, \( k \) represents the order of interactions allowed in the surrogate network, with \( R_{zzz\ldots z} \) acting on \( k \)-qubits. This section outlines two methods to implement these gates using different kinds of two-qubit entangling gates along with arbitrary single-qubit unitary gates. 

\textbf{CNOT gate + arbitrary single qubit unitary operations }: The first approach utilizes a gate set composed of the CNOT gate and arbitrary single-qubit rotations. The decomposition of the \( R_{zzz\ldots z} \) gate into this gate set is as follows:

$$R_{zzz..z}(\theta) = \prod_{i=0}^{k-2} CNOT(i,i+1) \;\;  R_z^k(\theta) \;\; \prod_{i=k-2}^{0} CNOT(i,i+1)$$

where the CNOT gates create the necessary entanglement structure, and the \( R_z^k(\theta) \) gate applies the z-rotation on the \( k \)-th qubit. This decomposition is depicted in Figure \ref{fig:RZZZ_CNOT}.

\begin{figure}[h!]
    \centering
    \includegraphics[width=0.40\linewidth]{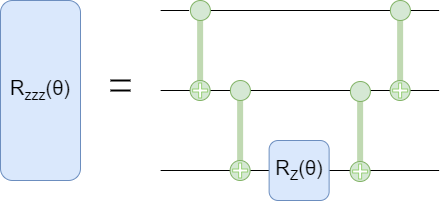}
    \caption{Decomposition of \( R_{zzz\ldots z} \) gate into CNOT gates and arbitrary single-qubit rotations.}
    \label{fig:RZZZ_CNOT}
\end{figure}

\textbf{Echoed Cross-Resonance (ECR) gate + arbitrary single qubit unitary operations}: 
An alternative implementation of the \( R_{zzz\ldots z} \) gate utilizes the Echoed Cross-Resonance (ECR) gate as the two-qubit entangling operation, combined with arbitrary single-qubit rotations. The ECR gate is functionally equivalent to the CNOT gate, differing only by additional single-qubit rotations. However, they are typically preferred over CNOT gates as they are native interactions in certain quantum architectures and often exhibit lower error rates.

A known relationship between the CNOT and ECR gates allows us to express the \( R_{zzz\ldots z} \) gate in terms of ECR gates\cite{tan2024quantum}. Specifically, the relation is:

\[
\text{CNOT} = \left[ R_Z(-\pi/2) \otimes R_Z(-\pi) \, \sqrt{X} \, R_Z(-\pi) \right] \, \text{ECR} \, \left[ X \otimes I \right],
\]

where \( R_Z(\theta) \), \( \sqrt{X} \), and \( X \) are single-qubit operations. This relation neglects a global phase of $\pi/2$. Using this equivalence, the \( R_{zzz\ldots z} \) gate can be decomposed into a sequence of ECR gates and arbitrary single-qubit rotations. The decomposition is illustrated in Figure \ref{fig:RZZZ_ECR}.

\begin{figure}[h!]
    \centering
    \includegraphics[width=0.83\linewidth]{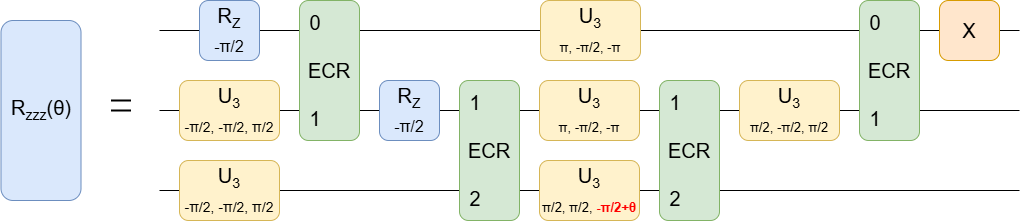}
    \caption{Decomposition of \( R_{zzz\ldots z} \) gate into ECR gates and arbitrary single-qubit rotations; correct up to a global phase.  Here, \( U3 \) gate is the universal single-qubit rotation gate parameterized by three angles \( \theta \), \( \phi \), and \( \lambda \), and its matrix form is given by Eq.\ref{eq:U3} below}
    \label{fig:RZZZ_ECR}
\end{figure}

\begin{equation}
U3(\theta, \phi, \lambda) = \begin{pmatrix} \cos\frac{\theta}{2} & -e^{i\lambda} \sin\frac{\theta}{2} \\
e^{i\phi} \sin\frac{\theta}{2} & e^{i(\phi + \lambda)} \cos\frac{\theta}{2}
\end{pmatrix} \label{eq:U3}
\end{equation}

Implementing the k-qubit \( R_{zzz\ldots z} \) gates on quantum hardware requires only \( 2k \) two-qubit entangling gates. This linear scaling in the number of entangling gates ensures that the circuits remain efficient and practical for hardware implementation, even as \( k \) increases. Additionally, the use of hardware-native gate sets, such as ECR gates, further enhances the performance by minimizing gate overhead and reducing error rates in architectures where these gates are natively supported.

\end{document}